\newcommand\blfootnote[1]{%
  \begingroup
  \renewcommand\thefootnote{}\footnote{#1}%
  \addtocounter{footnote}{-1}%
  \endgroup
}
\title{The Arbitrarily Varying Channel with Colored Gaussian Noise}
\author{ \IEEEauthorblockN{Uzi Pereg$^{\scriptstyle \, 1}$ and Yossef Steinberg$^{\scriptstyle \, 2}$}\\ 
%
%
\IEEEauthorblockA{
$^{1}$  Institute for Communications Engineering, Technical University of Munich \\
$^{2}$  Department of Electrical Engineering, Technion  \\
Email: {\tt uzi.pereg@tum.de}, {\tt ysteinbe@ee.technion.ac.il}
 }
}
\definecolor{light-gray}{gray}{0.8}
\definecolor{dark-gray}{gray}{0.3}
\newlength{\dhatheight}
\newcommand{\bieee}{\begin{IEEEeqnarray}{rCl}}
\newcommand{\eieee}{\end{IEEEeqnarray}}
\newcommand{\prob}[1]{\Pr\left(#1\right)}
\newcommand{\given}{\mid}
\newcommand{\cprob}[2]{\Pr\left(#1\given #2\right)}
\newcommand{\E}{\mathbb{E}}
\newcommand{\eps}{\varepsilon}
\newcommand{\norm}[1]{\left\lVert#1\right\rVert}
\newcommand{\diag}{\mathrm{diag}}
\newcommand{\trace}{\mathrm{tr}}
\newcommand{\ie}{\emph{i.e.} }
\newcommand{\eg}{\emph{e.g.} }
\newcommand{\etal}{\emph{et al.} }
\newcommand{\xvec}{\mathbf{x}}
\newcommand{\yvec}{\mathbf{y}}
\newcommand{\zvec}{\mathbf{z}}
\newcommand{\fvec}{\mathbf{f}}
\newcommand{\svec}{\mathbf{s}}
\newcommand{\Xvec}{\mathbf{X}}
\newcommand{\Yvec}{\mathbf{Y}}
\newcommand{\Svec}{\mathbf{S}}
\newcommand{\Vvec}{\mathbf{V}}
\newcommand{\Zvec}{\mathbf{Z}}		
\newcommand{\tm}{\widetilde{m}}	
\newcommand{\tq}{\widetilde{q}}
\newcommand{\tX}{\widetilde{X}}
\newcommand{\tY}{\widetilde{Y}}
\newcommand{\tS}{\widetilde{S}}
\newcommand{\tx}{\tilde{x}}
\newcommand{\ts}{\widetilde{s}}
\newcommand{\tlambda}{\widetilde{\lambda}}
\newcommand{\tLambda}{\widetilde{\Lambda}}
\newcommand{\oS}{\overline{S}}
\newcommand{\os}{\overline{s}}
\newcommand{\oq}{\overline{q}}
\newcommand{\hm}{\hat{m}}
\newcommand{\hq}{\widehat{q}}
\newcommand{\hP}{\hat{P}}
\newcommand{\hM}{\hat{M}}
\newcommand{\Aset}{\mathcal{A}}
\newcommand{\Dset}{\mathcal{D}}
\newcommand{\Fset}{\mathcal{F}}
\newcommand{\Iset}{\mathcal{I}}
\newcommand{\Jset}{\mathcal{J}}
\newcommand{\Qset}{\mathcal{Q}}
\newcommand{\Sset}{\mathcal{S}}
\newcommand{\Wset}{\mathcal{W}}
\newcommand{\Xset}{\mathcal{X}}
\newcommand{\Yset}{\mathcal{Y}}
\newcommand{\Eset}{\mathcal{E}}
\newcommand{\markovC}[1]{%
\begin{tikzpicture}[#1]%
\draw (0,0.3ex) -- (1ex,0.3ex);%
\draw (0.5ex,0.3ex) circle (0.2ex);
\draw[white] (0.2ex,0) -- (0.5ex,0);%
\end{tikzpicture}%
}
\newcommand{\Cbar}{\markovC{scale=2}}
\theoremstyle{remark}	\newtheorem{theorem}{Theorem}
\theoremstyle{remark}	\newtheorem{lemma}[theorem]{Lemma}
\theoremstyle{remark}	\newtheorem{coro}[theorem]{Corollary}
\theoremstyle{remark} \newtheorem{definition}{Definition}
\theoremstyle{remark} \newtheorem{remark}{Remark}
\theoremstyle{remark} \newtheorem{example}{Example}
\newcommand{\channel}{W_{Y|X,S}}
\newcommand{\compound}{\Wset^\Qset} 
\newcommand{\avc}{\Wset}																		
\newcommand{\opC}{\mathbb{C}}																
\newcommand{\inC}{\mathsf{C}}															 	
\newcommand{\inR}{\mathsf{R}}
\newcommand{\pSpace}{\mathcal{P}}														
\newcommand{\dE}{\mathsf{E}}																
\newcommand{\dM}{\mathsf{M}}															 	
\newcommand{\enc}{\mathrm{f}}																				
\newcommand{\dec}{g}																			 	
\newcommand{\code}{\mathscr{C}}															
\newcommand{\gcode}{\mathscr{C}^{\,\Gamma}}									
\newcommand{\Gcerr}{P_{e|\svec^d}^{(n)}}													
\newcommand{\err}{P_e^{(n)}}															
\newcommand{\cost}{\phi}																		
\newcommand{\plimit}{\Omega}																			
\newcommand{\tset}{\Aset^{(n)}_{\delta}}													
\newcommand{\Tset}{\mathcal{T}}												
\newcommand{\qn}{q}
\newcommand{\tQ}{\hat{\Qset}_n}														
\newcommand{\encn}{\enc}																			
\newcommand{\Ccompound}{\opC(\compound)}
\newcommand{\Cavc}{\opC(\avc)}
\newcommand{\emp}{\hP}																		  
\newcommand{\rstarC}{																			  
\, \hspace{-0.3cm} \text{ $$ \mbox{  
\hspace{-0.1cm} 
\small $\star$   
} $$ }
\hspace{-0.25cm}}
\newcommand{\rCcompound}{\opC^{\rstarC}\hspace{-0.1cm}(\compound)}
\newcommand{\rCav}{\opC^{\rstarC}\hspace{-0.1cm}(\avc)}
\newcommand{\rICav}{\inC^{\rstarC}}
\newcommand{\LambdaOig}{\tLambda} 			
\newcommand{\apLSpaceS}{\overline{\pSpace}_\Lambda(\Sset|\theta^{\infty})}			
\newcommand{\pLSpaceS}{\overline{\pSpace}_\Lambda(\Sset|\theta^{\infty})}			
\newcommand{\pLSpaceSn}{\pSpace_\Lambda(\Sset^n|\theta^n)}		
\newcommand{\KrCav}{\opC^{\rstarC}\hspace{-0.1cm}(K_Z)} 			
\newcommand{\KCavc}{\opC(K_Z)} 					  																	
\newcommand{\KrICav}{\inC^{\rstarC}_n(K_Z)} 		
\newcommand{\sigmarCav}{\opC^{\rstarC}\hspace{-0.1cm}(\Sigma)} 			
\newcommand{\sigmaCavc}{\opC(\Sigma)} 					  																	
\newcommand{\sigmarICav}{\inC^{\rstarC}\hspace{-0.1cm}(\Sigma)} 		
\newcommand{\sigmaICavc}{\inC(\Sigma)} 																							
\newcommand{\sKrCav}{\opC^{\rstarC}\hspace{-0.1cm}(\Psi_Z)} 			
\newcommand{\sKCavc}{\opC(\Psi_Z)} 					  																	
\newcommand{\sKrICav}{\inC^{\rstarC}\hspace{-0.1cm}(\Psi_Z)} 		
\begin{document}
\maketitle

{}

\begin{abstract} 
%
We address the arbitrarily varying channel (AVC) with colored Gaussian noise. The work consists of three parts. 
First, we study the \emph{general} discrete AVC with fixed parameters, where the channel depends on two state sequences, one arbitrary and the other fixed and known.
 This model can be viewed as a combination of the AVC and the time-varying channel.  We determine both the deterministic code capacity and the random code capacity.
Super-additivity is demonstrated, showing that the deterministic code capacity can be strictly larger than the weighted sum of the parametric capacities. 

In the second part, we consider the arbitrarily varying Gaussian product channel (AVGPC).
Hughes and Narayan characterized 
 the random code capacity through min-max optimization leading to a 
``double" water filling solution.
Here, we establish the deterministic code capacity and 
also discuss the game-theoretic meaning 
and the connection between double water filling and Nash equilibrium. As in the case of the standard Gaussian AVC, the deterministic code capacity is discontinuous in the input constraint, and depends on which of the input or state constraint is higher.
As opposed to Shannon's classic water filling solution,   it is observed that deterministic coding using independent scalar codes  is suboptimal for the AVGPC.

 Finally, we establish the capacity of the AVC with colored Gaussian noise,
where double water filling is performed 
in the frequency domain.
The analysis relies on our preceding results, on the AVC with fixed parameters and the AVGPC.
\end{abstract}

\begin{IEEEkeywords}
Arbitrarily varying channel, water filling, colored Gaussian noise, time varying channel, Gaussian product channel,
deterministic code,
random code.
\end{IEEEkeywords}

\blfootnote{
This work was supported by the Israel Science Foundation (grant No. 1285/16).
}

\section{Introduction}
A channel with colored Gaussian noise was first studied by Shannon \cite{Shannon:49p}, introducing the water filling optimal power allocation. 
This channel is the spectral counterpart of the Gaussian product channel (see \eg \cite[Section 9.5]{CoverThomas:06b}). 
Those results led to useful algorithms for DSL and OFDM systems,  and were generalized to multiple-input multiple output (MIMO) wireless communication systems as well (see  \eg \cite{Telatar:99p,Foschini:96p,BCCGPP:07b,BiglieriProadisShamai:98p,ShamaiOzarowWyner:91p,GJJV:03p}). 
Furthermore, for some networks,  water filling is performed in multiple stages \cite{ChengVerdu:93p,WCLM:99p,YuCioffi:02p,YuRheeBoydCioffi:04p,LaiElGamal:08p,WangAggarwalWang:15p}. 
A limit formula for the capacity of the general time-varying channel (TVC) is given in \cite{VerduHan:94p} (see also
\cite{CsiszarKorner:82b,Han:13b,Ahlswede:68p,DasNarayan:02p,BarbarossaScaglione:99p,Martone:00p,SaggarPottieDaneshrad:16p,WangOrchard:03p}).
Another relevant setting is that of a finite-state channel, where the state evolves as a Markov chain
\cite{Wolfowitz:12b,LapidothTelatar:98p,BBT:58p,LapidothNarayan:98p,Han:15p,ThomasEckford:16p,SDJR:17p}.
In practice, there is often uncertainty regarding channel statistics, 
due to a variety of causes such as fading in wireless communication \cite{SimonAlouini:05b,ShamaiSteiner:03p,ASAMN:17p,OzarowShamaiWyner:94p,GoldsmithVaraiya:97p,
CaireShamai:99p,HosseinigokiKosut:19p,HosseinigokKosut:19c1}, memory faults in storage \cite{KuznetsovTsybakov:74p,HeegardElGamal:83p,KuzntsovHanVinck:94p,KimKumar:14c}, malicious attacks on  identification 
 systems \cite{GungorKoksalElGamal:13c,IgnatenkoWillems:12n}, and cyber-physical warfare 
\cite{SlayMiller:08c,Langner:11p,VoraKulkarni:19a}. 
The arbitrarily varying channel (AVC) is an appropriate model to describe such a situation \cite{BBT:60p,LapidothNarayan:98p}.

Blackwell \etal  \cite{BBT:60p} determined  the random code capacity of the general AVC, \ie the capacity achieved with shared randomness between the encoder and the decoder. It was also demonstrated in  \cite{BBT:60p}  that the random code capacity is not necessarily  achievable using deterministic codes. 
  A well-known result by Ahlswede \cite{Ahlswede:78p} is the dichotomy property of the AVC, \ie the deterministic code capacity, also referred to as `capacity', either equals the random code capacity or else, it is zero. 
Subsequently, Ericson \cite{Ericson:85p} and Csisz{\'{a}}r and Narayan \cite{CsiszarNarayan:88p}
 have established a simple single-letter condition, namely non-symmetrizability, which is both necessary and sufficient for the 
capacity to be positive.  Schaefer \etal \cite{SchaeferBochePoor:16c} demonstrated the 
 super-additivity phenomenon, \ie when the capacity of a product of 
orthogonal AVCs 
is 
strictly larger than the sum of the capacities of the components.  
Csisz{\'{a}}r and Narayan \cite{CsiszarNarayan:88p1, CsiszarNarayan:88p} also considered the AVC when input and state constraints are imposed on the user and the jammer, respectively, due to their power limitations.
Not only the constrained setting provokes serious technical difficulties analytically, but also, as shown in \cite{CsiszarNarayan:88p}, constraints have a significant effect on the behavior of the 
capacity. Specifically, it is shown in \cite{CsiszarNarayan:88p} that dichotomy in the sense of \cite{Ahlswede:78p} no longer holds when state constraints are imposed on the jammer. That is, the deterministic code capacity of the general AVC can be lower than the random code capacity, and yet non-zero.

The 
Gaussian AVC is specified by the relation $\Yvec=\Xvec+\Svec+\Zvec$, where
$\Xvec$ and $\Yvec$ are the input and output sequences, respectively; 
 $\Svec$ is a state sequence of unknown joint distribution $F_{\Svec}$, not necessarily independent nor stationary;  and the noise sequence $\Zvec$ is 
i.i.d. $\sim\mathcal{N}(0,\sigma^2)$.
The state sequence can be thought of as if generated by an adversary, or a \emph{jammer}, who randomizes the channel states arbitrarily in an attempt to disrupt communication. It is also possible for $\Svec$ to be a deterministic unknown state sequence.
It is assumed that the user and the jammer have power limitations, and are subject to input and state constraints, $\frac{1}{n}\sum_{i=1}^n X_{i}^2\leq\plimit$ and $\frac{1}{n}\sum_{i=1}^n S_i^2\leq\Lambda$, 
respectively, where $n$ is the transmission length.
 In \cite{HughesNarayan:87p},  Hughes and Narayan showed that the
random code capacity 
 is given by $\inC^{\rstarC}_1=\frac{1}{2}\log(1+\frac{\plimit}{\sigma^2+\Lambda})$.
%
%
%
%
Subsequently,  Csisz{\'{a}}r and Narayan \cite{CsiszarNarayan:91p} showed that the deterministic code capacity is given by
\begin{align}
\inC_1=\begin{cases}
\inC^{\rstarC}_1 &\text{if $\Lambda<\plimit$}\,,\\
0 &\text{if $\Lambda\geq\plimit$}\,.
\end{cases}
\label{eq:Gavc1}
\end{align}
It is noted in \cite{CsiszarNarayan:91p} that this result is \emph{not} a straightforward consequence of the elegant Elimination Technique \cite{Ahlswede:78p}, used by Ahlswede to establish dichotomy 
for the AVC
 without constraints. 
Hosseinigoki and Kosut \cite{HosseinigokKosut:19c1} determined the capacity in 
multiple side information scenarios for the Gaussian AVC with fast fading.
Hughes and Narayan \cite{HughesNarayan:88p} determined the random code capacity of  the arbitrarily varying Gaussian product channel (AVGPC), and showed that it 
is obtained as a ``double" water filling solution to an optimization min-max problem, maximizing over input power allocation and minimizing over state power allocation. 
In the solution, the jammer performs water filling first, attempting to whiten the overall noise as much as possible, and then the user performs water filling taking into account the total interference power, 
contributed by both the channel noise and the jamming signal  \cite{HughesNarayan:88p}.
The Gaussian AVC is also considered in \cite{Ahlswede:73c,ThomasHughes:91p,LaAnantharam:04p,SarwateGastpar:06c,SarwateGastpar:12a,HosseinigokiKosut:18c,HosseinigokiKosut:19p}.

Extensive research has been conducted on other AVC models as well, of which we name a few.
Recently, the arbitrarily varying wiretap channel has been extensively studied,  as \eg in \cite{MBL:09c,BocheShaefer:13p,ACD:13b,BocheShaeferPoor:14c,BocheSchaeferPoor:15p,NotzelWieseBoche:16p,HeLuo:17a,AhlswedeAlthoferDepperTamm:19b}, 
including input and state constraints in \cite{BjelakovicBocheSommerfeld:13b,JWNBJ:15c,GoldfeldCuffPermuter:16p}. 
 The capacity region of the arbitrarily varying multiple access channel (MAC) with and without constraints is characterized in 
\cite{PeregSteinberg:19p4,Jahn:81p,AhlswedeCai:96p,AhlswedeCai:99p};
 capacity bounds for the arbitrarily varying broadcast channel are derived in \cite{Jahn:81p,HofBross:06p}; and for the arbitrarily varying relay channel in \cite{PeregSteinberg:19p2E,PeregSteinberg:18c2}.
Additional results on arbitrarily  varying multi-user channels and constraints  are derived \eg in  \cite{WinshtokSteinberg:06c,BudkuleyDeyPrabhakaran:17p,HeKhistiYener:13p,WieseBoche:13p,PeregSteinberg:19p3,KeresztfalviLapidoth:19p}. 
Transmission of an arbitrarily varying Wyner-Ziv source over a Gel'fand-Pinsker channel is considered in \cite{WinshtokSteinberg:06c2,Winshtok:07z}, 
and  related problems were recently presented in 
\cite{BudkuleyDeyPrabhakaran:17p,BudkuleyJaggi:18c,BudkuleyJaggi:18a}.
Various Gaussian AVC networks are studied \eg in \cite{SarwateGastpar:08c,HeYener:11c,BudkuleyDeyPrabhakaran:15c,HosseinigokiKosut:16c,HosseinigokiKosut:17a,PeregSteinberg:19p1,PeregSteinberg:19p2E,PeregSteinberg:19p4,HosseinigokiKosut:19c}.

%

In this paper, we address the AVC with colored Gaussian noise. The body of this manuscript consists of three parts, of which the first and the second can also be viewed as milestones on our path to the main result.
First, we study the \emph{general} discrete AVC with fixed parameters. This model is a combination of the TVC and the AVC, as the channel depends on two state sequences, one arbitrary and the other fixed.
We determine both the deterministic code capacity and the random code capacity.
 Deterministic code super-additivity is demonstrated, showing that the capacity can be strictly larger
 than the weighted sum of the parametric capacities. 
In the second part of this paper, we establish the deterministic code capacity of the AVGPC, where there is \emph{white} Gaussian noise and no parameters.
We also give observations and discuss the game-theoretic interpretation of Hughes and Narayan's random code characterization \cite{HughesNarayan:88p}, and the connection between the double water filling solution and the idea of Nash equilibrium in game theory.
We further examine the connection between the AVGPC and the product MAC \cite{ChengVerdu:93p,LaiElGamal:08p}
 (without a state), pointing out the similarities and differences between the models, results, and interpretation. 
As in the case of the standard Gaussian AVC, the deterministic code capacity is discontinuous in the input constraint, and depends on which of the input or state constraint is higher.
As opposed to Shannon's classic water filling solution \cite{Shannon:49p},   it is observed that deterministic coding using independent scalar codes  is suboptimal for the AVGPC. Finally, we establish the capacity of the AVC with colored Gaussian noise,
where double water filling is performed 
in the frequency domain.

While the results on the AVC with fixed parameters and on the AVGPC stand in their own right, they also play a key role in our proof of the main capacity theorem 
for the AVC with colored Gaussian noise.
In the random code analysis for the AVC with fixed parameters, we modify  Ahlswede's Robustification Technique (RT) \cite{
Ahlswede:86p}. Essentially, the RT uses a reliable code for the compound channel to construct a random code for the AVC  applying random permutations to the codeword symbols. 
 A straightforward application of
 Ahlswede's RT does not work here, since the user cannot apply permutations to the parameter sequence. 
Hence, we give a modified RT which is restricted to permutations that do not affect the parameter sequence, \ie such that the parameter sequence is an eigenvector of all of our permutation matrices. 
The second part of the paper builds on identifying the symmetrizing jamming strategies and minimal symmetrizability costs for the AVGPC.
At last, we use the results on the AVC with fixed parameters and the AVGPC in our proof of the capacity theorem for the AVC with colored Gaussian noise.
 By orthogonalization of the noise covariance, the AVC with colored Gaussian noise is transformed into an AVC with fixed parameters, which are determined by the spectral representation of the noise covariance matrix. 
This in turn yields double water-filling optimization in analogy to the AVGPC.

\section{Channels with Fixed Parameters}
\label{sec:Pchannels}
In this section we consider the AVC with fixed parameters. The results in this section will be used to analyze the AVC with colored Gaussian noise.

\subsection{Notation}
\label{sec:Pnotation}
We use the following notation.
Calligraphic letters $\Xset,\Sset,\Tset,\Yset,...$ are used for finite sets.
Lowercase letters $x,s,t,y,\ldots$  stand for constants and values of random variables, and uppercase letters $X,S,T,Y,\ldots$ stand for random variables.  
 The distribution of a random variable $X$ is specified by a probability mass function (pmf) 
	$P_X(x)=p(x)$ over a finite set $\Xset$. The set of all pmfs over $\Xset$ is denoted by $\pSpace(\Xset)$. The set of all probability kernels $p(x|t)$ is denoted by 
	$\pSpace(\Xset|\Tset)$.
	%
		%
 We use $x^j=(x_1,x_{2},\ldots,x_j)$ to denote  a sequence of letters from $\Xset$. 
 A random sequence $X^n$ and its distribution $P_{X^n}(x^n)=p(x^n)$ are defined accordingly. 
For a pair of integers $i$ and $j$, $1\leq i\leq j$, we define the discrete interval $[i:j]=\{i,i+1,\ldots,j \}$.
  
The type $\hP_{x^n}$ of a given sequence $x^n$ is defined as the empirical distribution $\hP_{x^n}(a)=N(a|x^n)/n$ for $a\in\Xset$, where $N(a|x^n)$ is the number of occurrences of the symbol $a$ in the sequence $x^n$.
A type class is denoted by $\Tset^n(\hP)=\{ x^n \,:\; \hP_{x^n}=\hP \}$.
Similarly, define the joint type 
$\hP_{x^n,y^n}(a,b)=N(a,b|x^n,y^n)/n$ for $a\in\Xset$, 
$b\in\Yset$, where $N(a,b|x^n,y^n)$ is the number of occurrences of the symbol pair $(a,b)$ in the sequence $(x_i,y_i)_{i=1}^n$.
Then, a conditional type is defined as $\hP_{x^n|y^n}(a,b)=\hP_{x^n,y^n}(a,b)/\hP_{y^n}(b)$. Furthermore, we define
 the $\delta$-typical set $\tset(p)$ with respect to a distribution $p(x)$ by
\begin{align}
\tset(p)\triangleq \Big\{ x^n\in\Xset^n :\, \forall \, a\in\Xset \,,\; 
&\left| p(a)-\hP_{x^n}(a)  \right|\leq \delta \; \text{if $p(a)>0$, and} \nonumber\\
&\hP_{x^n}(a)=0 \;\text{if $p(a)=0$} 
\Big\} \,.
\end{align}

		The distribution of a real random variable $Z\in\mathbb{R}$ is represented by a cumulative distribution function (cdf) 
		$F_Z(z)=\prob{Z\leq z}$ over the real line, or alternatively, the probability density function (pdf) $f_Z(z)$,  when it exists.
%
%
The notation $\zvec=(z_1,z_{2},\ldots,z_n)$ is used 
when it is understood from the context that the length of the sequence is $n$, and  the $\ell^2$-norm of $\zvec$ is denoted  by $\norm{\zvec}$. The trace of a matrix $A\in\mathbb{R}^{m\times n}$ is denoted by $\trace(A)$.

	\subsection{Channel Description}
	\label{subsec:Mchannels}
 A state-dependent discrete memoryless channel (DMC) with parameters 
$(\Xset\times\Sset\times\Tset,W_{Y|X,S,T},\Yset)$ consists of  finite input alphabet $\Xset$, state alphabet $\Sset$, 
parameters alphabet $\Tset$, 
output alphabet $\Yset$,  and a 
conditional pmf 
$W_{Y|X,S,T}$ over $\Yset$. The channel is without feedback, and it is memoryless when conditioned on the state and parameter sequences, \ie  
\begin{align}
W_{Y^n|X^n,S^n,T^n}(y^n|x^n,s^n,t^n)= \prod_{i=1}^n W_{Y|X,S,T}(y_i|x_{i},s_{i},t_i) \,.
\end{align} 

The AVC with fixed parameters is a DMC $W_{Y|X,S,T}$ where the parameter sequence is fixed, while the state sequence has an unknown distribution,  not necessarily independent nor stationary. That is, the parameter is sequence is given by
\begin{align}
T^n=\theta^n \,,
\end{align}
where $\theta_1,\theta_2,\ldots$ is a given sequence of letters from $\Tset$, known to the encoder, decoder, and jammer. 
Whereas, the state sequence $S^n\sim \qn(s^n|\theta^n)$ with an unknown joint pmf $\qn(s^n|\theta^n)$ over $\Sset^n$. In particular, $\qn(s^n|\theta^n)$ could give mass $1$ to some state sequence $s^n$. 
The AVC with fixed parameters 
is denoted by $\avc=\{ W_{Y|X,S,T},\theta^{\infty} \}$, where $\theta^{\infty}$ is a short notation for the sequence 
$(\theta_i)_{i=1}^{\infty}$.

The compound channel with fixed parameters is used as a tool in the analysis.  Different models of compound channels are described in the literature \cite{CsiszarKorner:82b}. Here,   the compound channel  with fixed parameters is  a DMC $W_{Y|X,S,T}$  where the state has a conditional product distribution $q(s|t)$ that is not known
 in exact, but rather belongs to a family of conditional distributions $\Qset$, with $\Qset\subseteq \pSpace(\Sset|\Tset)$. 
That is, 
\begin{align}
S^n\sim \prod_{i=1}^n q(s_i|\theta_i)
\end{align}
with an unknown conditional pmf $q(s|t)\in\Qset$.
We note that this differs from the classical definition of the compound channel, as in \cite{CsiszarKorner:82b}, where the state is fixed throughout the transmission.

\begin{remark}
\label{rem:constT}
Note that the special case of a channel $W_{Y|X,S,T=t}$, with a \emph{constant} parameter  $\theta_i=t$ for $i=1,2,\ldots$, reduces to the standard state-dependent DMC. Thereby,  the AVC $\avc_t=\{ W_{Y|X,S,T=t} \}$ with a constant parameter can be regarded as the traditional AVC, as introduced by Blackwell \etal \cite{BBT:60p}.
On the other hand, the special case of a channel $W_{Y|X,S,T}= W_{Y|X,T}$, which does not depend on the state $S$, reduces to a TVC 
\cite{VerduHan:94p}.
\end{remark}

\begin{remark}
The AVC with colored Gaussian noise does \emph{not} fit the description above. Nevertheless,
the fixed parameters model is a crucial tool for our final goal, \ie to determine the capacity of the AVC with colored Gaussian noise. 
\end{remark}

\subsection{Coding}
\label{subsec:Mcoding}
We introduce some preliminary definitions. 

\begin{definition}[Code] 
\label{def:Pcapacity}
A $(2^{nR},n)$ code for the AVC $\avc$ with fixed parameters 
consists of the following;   
a message set $[1:2^{nR}]$, where $2^{nR}$ is assumed to be an integer, an encoding function
 $\enc:[1:2^{nR}]\times\Tset^n\rightarrow \Xset^n$, and a decoding function
$
\dec: \Yset^n\times\Tset^n\rightarrow [1:2^{nR}]  
$. 

Given a message $m\in [1:2^{nR}]$ and and a parameter sequence $\theta^n$, the encoder transmits the codeword $x^n=\enc(m,\theta^n)$. 
The decoder  receives the channel output $y^n$, and finds an estimate of the message $\hm=g(y^n,\theta^n)$.  We denote the code by $\code=\left(\enc(\cdot,\cdot),\dec(\cdot,\cdot) \right)$.
\end{definition}

We proceed now to coding schemes 
when using stochastic-encoder stochastic-decoder pairs with common randomness.

\begin{definition}[Random code]
\label{def:PcorrC} 
A $(2^{nR},n)$ random code for the AVC $\avc$ with fixed parameters consists of a collection of 
$(2^{nR},n)$ codes $\{\code_{\gamma}=(\enc_{\gamma},\dec_\gamma)\}_{\gamma\in\Gamma}$, along with a probability distribution $\mu(\gamma)$ over the code collection $\Gamma$. 
We denote such a code by $\gcode=(\mu,\Gamma,\{\code_{\gamma}\}_{\gamma\in\Gamma})$.
\end{definition}

\subsection{Input and  State Constraints} 
\label{subsec:Pconstraints}
Next, we consider input constraints and state constraint, imposed on the encoder and the jammer, respectively.
We note that the constraints specifications are known to both the user and the jammer in this model.
Let $\cost:\Xset\rightarrow [0,\infty)$, $k=1,2$, and $l:\Sset\rightarrow [0,\infty)$ be some given bounded functions, and define
	\begin{align}
	\cost^n(x^n)=&\frac{1}{n} \sum_{i=1}^n \cost(x_{i}) \,,\; 
	\label{eq:PLInConstraintStrict} \\
	l^n(s^n)=&\frac{1}{n} \sum_{i=1}^n l(s_i) \,.
	\end{align}
Let $\plimit>0$ and $\Lambda>0$. Below, we specify the input constraint $\plimit$ and state constraint $\Lambda$, corresponding to  the functions
$\cost^n(x^n)$  and $l^n(s^n)$, respectively. It is assumed that for some $a\in\Xset$ and $b\in\Sset$, $\cost(a)=l(b)=0$.	

As the parameter sequence $\theta^{\infty}\equiv (\theta_i)_{i=1}^{\infty}$ is fixed and known to the encoder, the decoder and the jammer, the input and state constraints below are specified for a particular sequence.
	%
	Given an input constraint $\plimit$, the encoding function needs to satisfy 
	\begin{align}
	\cost^n(\enc(m,\theta^n))\leq\plimit \,,\;\text{for all $m\in [1:2^{nR}]$} \,.
	\label{eq:PinputCstrict}
	\end{align}
	That is, the input sequence satisfies $\cost^n(X^n)\leq\plimit$  with probability $1$. 
	
	Moving to the state constraint $\Lambda$, we have different definitions for the AVC and for the compound channel.
	The compound channel has a constraint on average, where the state sequence satisfies
	$ \E_q l^n(S^n)\leq\Lambda$, 
	while the AVC has an almost-surely constraint, 
	$l^n(S^n)\leq\Lambda$ with probability (w.p.) $1$. 
	Explicitly, we say that a compound channel is under a state constraint $\Lambda$ if $\Qset\subseteq\pLSpaceS$, where
	\begin{align}
	\pLSpaceS&\triangleq \bigcap_{n=1}^{\infty} \left\{ q(s|t) \,:\;
\frac{1}{n} \sum_{i=1}^n \sum_{s\in\Sset} q(s|\theta_i) l(s)\leq \Lambda \right\} \,.
\label{eq:StateCcompound}
\intertext{ 	
%
As for the AVC $\avc$, 
 it is now assumed that the joint distribution of the state sequence is limited to $q(s^n|\theta^n)\in\pLSpaceSn$, where
}
\pLSpaceSn &\triangleq\{ q(s^n|\theta^n) \in\pSpace(\Sset^n|\Tset^n) \,:\; q(s^n|\theta^n)=0 \;\text{ if $l^n(s^n)>\Lambda$}\, \} \,.
\end{align}
This includes the case of a deterministic unknown state sequence, \ie when $q$ gives probablity $1$ to a particular $s^n\in\Sset^n$ with
$l^n(s^n)\leq \Lambda$.


\subsection{Capacity Under Constraints}
We move to the definition of an  achievable rate and the capacity of the AVC $\avc$ with fixed parameters  under input and state constraints.
Codes 
 over the AVC $\avc$ with fixed parameters are defined as in Definition~\ref{def:Pcapacity}, 
with the additional constraint 
 (\ref{eq:PinputCstrict}) on the codebook.

 Define the conditional probability of error of a code $\code$ given a state sequence $s^n\in\Sset^n$ by  
\begin{subequations}
\begin{align}
\label{eq:Pcerr}
&\err(\code|s^n,\theta^n)\triangleq 
\frac{1}{2^{ nR }}\sum_{m=1}^{2^ {nR}}
\sum_{y^n:\dec(y^n,\theta^n)\neq m} W_{Y^n|X^n,S^n,T^n}(y^n|\enc(m,\theta^n),s^n,\theta^n) \,.
\end{align}
Now, define the average probability of error of $\code$ for some distribution $\qn(s^n|\theta^n)\in\pSpace(\Sset^n)$, 
\begin{align}
\err(\qn,\theta^n,\code)\triangleq 
\sum_{s^n\in\Sset^n} \qn(s^n|\theta^n)\err(\code|s^n,\theta^n) \,.
\end{align}
\end{subequations}

\begin{definition}[
Achievable rate  and capacity  under constraints]
\label{def:PLcapacity}
A code $\code=(\enc,\dec)$ is a  called a
$(2^{nR},n,\eps)$ code for the AVC $\avc$ with fixed parameters under input constraint $\plimit$ and  state constraint $\Lambda$, when (\ref{eq:PinputCstrict})  is satisfied 
 and 
\begin{align}
\label{eq:PLerr}
& \err(q,\theta^n,\code) 
\leq \eps \,,\quad
\text{for all $q\in\pLSpaceSn$} \,,
\end{align}
or, equivalently, $\err(\code|s^n,\theta^n)\leq\eps$ for all $s^n\in\Sset^n$ with $l^n(s^n)\leq\Lambda$.

  We say that a rate  $R\geq 0$ is achievable under	constraints
	if for every $\eps>0$ and sufficiently large $n$, there exists a  $(2^{nR},n,\eps)$ code for the AVC	$\avc$ with fixed parameters under input constraint $\plimit$ and state constraint $\Lambda$. The operational capacity is defined as the supremum of achievable rates, 
	and it is denoted by $\Cavc$. 
 We use the term `capacity' referring to this operational meaning, and in some places we call it the deterministic code capacity in order to emphasize that achievability is measured with respect to  deterministic codes.

Analogously to the deterministic case,   a $(2^{nR},n,\eps)$ random  code  $\code^{\Gamma}$ 
satisfies the requirements
\begin{subequations}
\label{eq:PLrcodeReq}
\begin{align}
&
\sum_{\gamma}\mu(\gamma)   \cost^n(\enc(m,\theta^n))
  \leq \plimit
\,,\; 
\text{for all $m\in [1:2^{nR}]$}  \,,  \label{eq:PcodeInputCr}
\intertext{and} 
&\err(q,\code^{\Gamma})\triangleq \sum_{\gamma\in\Gamma} \mu(\gamma) \err(q,\theta^n,\code_{\gamma})
\leq \eps \,,\quad\text{for all $q\in\pLSpaceSn$} \,.
\label{eq:vPLrerr}				
\end{align}
\end{subequations}
The capacity region achieved by random codes is then denoted by $\rCav$, and it 
 is referred to as the \emph{random code capacity}.
\end{definition}
 


The definitions above are naturally extended to the compound channel with fixed parameters, under input constraints 
$\plimit$ and state constraint $\Lambda$, by limiting the requirements (\ref{eq:PinputCstrict}), (\ref{eq:PLerr}) and (\ref{eq:PLrcodeReq}) to conditionally memoryless state distributions $q\in\Qset$. 
The respective deterministic code capacity  $\Ccompound$ and random code capacity  $\rCcompound$  are defined accordingly.

\section{Main Results -- Channels with Fixed Parameters} 
\label{sec:Pmain}
In this section, we establish the random code capacity of the AVC with fixed parameters.
To this end, we first give an auxiliary result on the compound channel.

\subsection{The Compound Channel with Fixed Parameters}
We begin with the capacity theorem for the compound channel $\compound=\{ W_{Y|X,S,T},\Qset,\theta^{\infty} \}$. 
This is an auxiliary result, obtained by a simple extension of \cite[Exercise 6.8]{CsiszarKorner:82b}. A similar result appears in 
\cite{LapidothTelatar:98p} as well.
Given a parameter squence $\theta^n$ of a fixed length, 
define
\begin{align} 
\inC_n(\compound)= 
\max_{  p(x|t) \,:\;   \E \cost(X)\leq \plimit   } \; \inf_{  q(s|t)\in\Qset } 
 I_q(X;Y|T) \,,
\end{align}
with $(T,S,X)\sim P_{T}(t) p(x|t) q(s|t)$, where $P_{T}$ is the type of the parameter sequence $\theta^n$.
%
%
\begin{lemma}
\label{lemm:PCcompound}
The capacity  of the compound channel $\compound$ with fixed parameters, under input constraint $\plimit$ and state constraint $\Lambda$, is given by
\begin{align}
\Ccompound=\liminf_{n\rightarrow\infty} \inC_n(\compound) \,,
\end{align}
and it is identical to the random code capacity, \ie $\rCcompound=\Ccompound$. 
\end{lemma}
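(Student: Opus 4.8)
The plan is to prove the lemma by a matching converse that is valid even for random codes together with a deterministic-code achievability; the two sandwich $\liminf_n\inC_n(\compound)$ and, with the trivial bound $\Ccompound\le\rCcompound$, force $\Ccompound=\rCcompound=\liminf_n\inC_n(\compound)$. Since this is the ``fixed parameter'' analogue of the classical compound-DMC result, I would follow the method-of-types treatment of \cite[Exercise~6.8]{CsiszarKorner:82b} with one modification to accommodate the time-varying parameter. The key preliminary observation is that, because $\theta^{\infty}$ is fixed and known to all parties, one should group the time indices by parameter value, $\Iset_t=\{i\in[1:n]:\theta_i=t\}$, so that the type $P_T=\hP_{\theta^n}$ assigns mass $|\Iset_t|/n$ to $t$; and for any fixed $q(s|t)\in\Qset$ the channel, conditioned on $\theta^n$, is the memoryless but time-varying channel $\prod_{i=1}^n W^q_{Y|X,T}(y_i|x_i,\theta_i)$ with $W^q_{Y|X,T}(y|x,t)=\sum_{s\in\Sset} q(s|t)W_{Y|X,S,T}(y|x,s,t)$. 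Thus the problem reduces to $|\Tset|$ ordinary compound DMCs run in parallel, coupled only through the shared (unknown but fixed) choice of $q\in\Qset$ and the average input and state constraints.

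For the converse I would fix any $(2^{nR},n,\eps)$ code, deterministic or random, and let $\bar p(x|t)$ denote the input pmf obtained by averaging the induced single-letter input marginals over the message, the common randomness, and the indices $i\in\Iset_t$; the input constraint forces $\E_{\bar p}\cost(X)\le\plimit$. For each $q\in\Qset$, Fano's inequality, the single-letterization $I(M;Y^n\mid\theta^n)\le\sum_i I(X_i;Y_i)$ for a memoryless channel, and concavity of mutual information in the input give $R\le I_q(X;Y|T)+\delta_n$ for $(T,S,X)\sim P_T(t)\bar p(x|t)q(s|t)$, with $\delta_n\to0$. Crucially the same $\bar p$ must work against every $q\in\Qset$, so $R\le\inf_{q\in\Qset}I_q(X;Y|T)+\delta_n\le\inC_n(\compound)+\delta_n$; since achievability supplies such codes for all sufficiently large $n$, we conclude $R\le\liminf_n\inC_n(\compound)$, hence $\rCcompound\le\liminf_n\inC_n(\compound)$ and a fortiori $\Ccompound\le\liminf_n\inC_n(\compound)$. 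Note that this step sidesteps any minimax/maximin interchange: the code fixes $\bar p$ first, and only then does nature pick the worst $q$.

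For achievability, assume without loss of generality that $\Qset$ is closed, hence compact, since passing to its closure leaves $q\mapsto I_q(X;Y|T)$ and the capacity unchanged. Fix $R<\liminf_n\inC_n(\compound)$ and choose $\delta>0$, $N_0$ with $\inC_n(\compound)\ge R+2\delta$ for all $n\ge N_0$; for such $n$ let $p_n^*(x|t)$ attain the maximum, so $\inf_{q\in\Qset}I_q(X;Y|T)\ge R+2\delta$ under $p_n^*$. I would draw $2^{nR}$ codewords independently, each uniform over the sequences whose restriction to $\Iset_t$ has a fixed type close to $p_n^*(\cdot|t)$ for every $t$ (a constant-composition-per-parameter ensemble; using the zero-cost letter one keeps $\cost^n(x^n)\le\plimit$ exactly), and decode by the maximum conditional empirical mutual information rule: output the message whose codeword $x^n$ maximizes $\hat I(X;Y|T)$ read off the joint type $\hP_{x^n,y^n,\theta^n}$. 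For the sent message, for every $q\in\Qset$ the joint type concentrates near $P_T(t)p_n^*(x|t)W^q_{Y|X,T}(y|x,t)$, so its score exceeds $R+\delta$ with probability $\to1$, uniformly in $q$ by uniform continuity on the compact $\Qset$; for any competing message the codeword is independent of $(Y^n,\theta^n)$, so by the standard packing bound its score exceeds $R+\delta$ with probability at most $2^{-n(\delta-o(1))}$, a bound not depending on $q$. A union bound over the $2^{nR}$ messages drives the average error to $0$ uniformly over $\Qset$; taking expectations, using a finite $\delta'$-net of $\Qset$ to pull ``$\sup_q$'' inside the expectation, and expurgating yields a deterministic $(2^{nR},n,\eps_n)$ code with $\eps_n\to0$ for every $n\ge N_0$. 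Hence $R$ is achievable by deterministic codes, so $\liminf_n\inC_n(\compound)\le\Ccompound$, and combining with the converse completes the proof. The main obstacle is obtaining achievability \emph{uniformly} over the whole (possibly infinite) family $\Qset$ while simultaneously handling the known time-varying parameter; the constant-composition-per-parameter ensemble together with the MMI decoder resolves both at once, reducing the analysis to coupled ordinary compound DMCs, and compactness of $\Qset$ (or a finite net) converts ``for each $q$'' into ``uniformly in $q$.'' A secondary technical point is the bookkeeping needed to carry the average input constraint through the random-code converse and to enforce the almost-sure input constraint in the deterministic construction, both routine given the assumed zero-cost letter.
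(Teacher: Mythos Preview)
Your proposal is correct and your converse is essentially identical to the paper's: Fano's inequality applied to an arbitrary random code, single-letterization along the blocklength with a time-sharing variable $K$ uniform on $[1:n]$, and the observation that $T=\theta_K$ is distributed according to the type $P_T=\hP_{\theta^n}$; the same $\bar p$ must work against every $q\in\Qset$, giving the $\max$--$\inf$ order directly.

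Your achievability differs from the paper's in the decoder and in how uniformity over $\Qset$ is obtained, though both are classical compound-channel devices. The paper draws i.i.d.\ codewords according to $P_{X|T}$ and decodes by \emph{conditional typicality over types}: it replaces $\Qset$ by the finite set $\tQ$ of conditional types $\hP_{s^n|\theta^n}$ that are $\delta_1$-close to some $q\in\Qset$, and declares $\hat m$ if $(\theta^n,x^n(\hat m,\theta^n),y^n)\in\Aset^{(n)}_\delta(P_T P_{X|T}P^{q'}_{Y|X,T})$ for some $q'\in\tQ$. Because $|\tQ|$ is polynomial in $n$, the union bound over $q'$ costs only a polynomial factor, which is how the paper gets uniformity in $q$ without invoking compactness or a net. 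Your route instead uses a constant-composition-per-parameter ensemble with the MMI decoder and appeals to compactness of $\Qset$ (or a finite $\delta'$-net) to pass from pointwise to uniform error bounds. Both are standard and equally valid; the paper's type-quantization is slightly more self-contained (no separate compactness step), while your MMI decoder is universal in the stronger sense that it does not even search over a candidate set of state laws. Either way the analysis reduces, as you note, to $|\Tset|$ parallel compound DMCs coupled through the shared $q$ and the average constraints.
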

The proof of Lemma~\ref{lemm:PCcompound} is given in Appendix~\ref{app:PCcompound}.

\subsection{The AVC with Fixed Parameters -- Random Code Capacity}
We determine the random code capacity  of the AVC with fixed parameters, $\avc=\{W_{Y|X,S,T},\theta^{\infty}\}$,  under input constraint $\plimit$ and state constraint $\Lambda$.
The random code derivation is based on our result on the compound channel with fixed parameters and a variation of Ahlswede's 
Robustification Technique (RT). 
%
 Define
\begin{align}
\inC_{n}^{\rstarC}\hspace{-0.05cm}(\avc)\triangleq& \inC_{n}(\compound) \Big|_{\Qset=\pLSpaceS} \,.
\label{eq:Cieqiv3}
\end{align}

We begin with a lemma, based on 
 Ahlswede's RT \cite{Ahlswede:86p} 
(see also \cite[Lemma 9]{PeregSteinberg:19p1}). 
We modify it here to include the parameter sequence $\theta^n$ and  the constraint on the family of conditional state distributions 
$q(s|t)$.
\begin{lemma}[Modified RT] 
\label{lemm:LRT}
Let $h:\Sset^n\times\Tset^n \rightarrow [0,1]$ be a given function. If, for some fixed $\alpha_n\in(0,1)$, and for all 
$ q^n(s^n|\theta^n)=\prod_{i=1}^n q(s_i|\theta_i)$, with 
$q\in\apLSpaceS$, 
\begin{align}
\label{eq:RTcondCs}
\sum_{s^n\in\Sset^n} q^n(s^n|\theta^n)h(s^n,\theta^n)\leq \alpha_n \,,
\end{align}
then,
\begin{align}
\label{eq:RTresCs}
\frac{1}{|\Pi(\theta^n)|} \sum_{\pi\in\Pi(\theta^n)} h(\pi s^n,\theta^n)\leq \beta_n \,,\quad\text{for all $s^n\in\Sset^n$ such that $l^n(s^n)\leq\Lambda$} \,,
\end{align}
where $\Pi(\theta^n)$ is the set of all $n$-tuple permutations $\pi:\Sset^n\rightarrow\Sset^n$ such that 
$\pi \theta^n=\theta^n$, and 
$\beta_n=(n+1)^{|\Sset||\Tset|}\alpha_n$. 
\end{lemma}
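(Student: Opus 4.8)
The plan is to carry out Ahlswede's counting proof of the Robustification Technique \cite{Ahlswede:86p} slotwise along the parameter sequence. The first step is to pin down the admissible permutation group. Writing $I_t=\{i\in[1:n]:\theta_i=t\}$ and $n_t=|I_t|$, a permutation $\pi$ of $[1:n]$ satisfies $\pi\theta^n=\theta^n$ if and only if it maps each level set $I_t$ into itself, so $\Pi(\theta^n)=\prod_{t\in\Tset}\mathrm{Sym}(I_t)$ and $|\Pi(\theta^n)|=\prod_{t\in\Tset}n_t!\,$. Consequently the $\Pi(\theta^n)$-orbit of a fixed $s^n$ is exactly its conditional type class given $\theta^n$, \ie $\{\tilde s^n:\hP_{\tilde s^n|\theta^n}=\hP_{s^n|\theta^n}\}$, and by the orbit--stabilizer relation $\tfrac{1}{|\Pi(\theta^n)|}\sum_{\pi\in\Pi(\theta^n)}h(\pi s^n,\theta^n)$ equals the uniform average of $h(\cdot,\theta^n)$ over that conditional type class; in particular the left-hand side of (\ref{eq:RTresCs}) is a function of $\hP_{s^n|\theta^n}$ alone.

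Second, I would feed the hypothesis (\ref{eq:RTcondCs}) a well-chosen $q$. Given $s^n$ with $l^n(s^n)\le\Lambda$, take $q(\cdot|\cdot):=\hP_{s^n|\theta^n}$, the conditional type (setting $q(\cdot|t)$ to be the unit mass on a zero-cost letter for those $t\in\Tset$ absent from $\theta^n$). Since $\sum_{t\in\Tset}\tfrac{n_t}{n}\sum_{s\in\Sset}\hP_{s^n|\theta^n}(s|t)\,l(s)=l^n(s^n)\le\Lambda$, this $q$ obeys the state constraint at the block length in question, which is the sense in which $q\in\apLSpaceS$ is invoked here. Applying (\ref{eq:RTcondCs}) to $q^n(\cdot|\theta^n)=\prod_{i=1}^nq(s_i|\theta_i)$ and then dropping every term outside the orbit $\{\tilde s^n:\hP_{\tilde s^n|\theta^n}=\hP_{s^n|\theta^n}\}$ --- on which $q^n(\cdot|\theta^n)$ is constant, so the orbit's contribution factors as (orbit mass)$\times$(uniform average of $h$) --- yields the inequality $q^n\bigl(\{\tilde s^n:\hP_{\tilde s^n|\theta^n}=\hP_{s^n|\theta^n}\}\mid\theta^n\bigr)\cdot\tfrac{1}{|\Pi(\theta^n)|}\sum_{\pi\in\Pi(\theta^n)}h(\pi s^n,\theta^n)\le\alpha_n$.

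Third, I would lower-bound the orbit mass. As $q$ is conditionally memoryless, $q^n(\cdot|\theta^n)$ factors across slots, so the orbit mass equals $\prod_{t\in\Tset}q_t^{n_t}\bigl(\Tset^{n_t}(q_t)\bigr)$, where $q_t:=\hP_{s^n|\theta^n}(\cdot|t)$ and $\Tset^{n_t}(q_t)$ is the ordinary $\Sset$-type class of block length $n_t$. Each factor is at least $(n_t+1)^{-|\Sset|}\ge(n+1)^{-|\Sset|}$ by the standard bound on the mass a memoryless source places on its own type class, and there are at most $|\Tset|$ nonempty slots, so the orbit mass is at least $(n+1)^{-|\Sset||\Tset|}$. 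Substituting this into the displayed inequality and rearranging gives (\ref{eq:RTresCs}) with $\beta_n=(n+1)^{|\Sset||\Tset|}\alpha_n$, as claimed.

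The only genuinely new ingredient relative to the unparametrized RT is the slot decomposition forced by $\pi\theta^n=\theta^n$: one pays a factor $(n+1)^{|\Sset|}$ for each parameter value separately, which is exactly why the exponent is $|\Sset||\Tset|$ rather than $|\Sset|$. The step I expect to need the most care is the interface with the constraint set --- making sure that the conditional type of a $\Lambda$-admissible $s^n$ is a legitimate test distribution, \ie reconciling ``satisfies the constraint at length $n$'' with the definition of $\apLSpaceS$ as an intersection over all block lengths --- together with the small but essential bookkeeping that the $\Pi(\theta^n)$-orbit coincides with the conditional type class and that $q^n(\cdot|\theta^n)$ is constant along it.
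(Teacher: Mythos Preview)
Your proposal is correct and follows essentially the same approach as the paper: take $q$ to be the conditional type $\hP_{s^n|\theta^n}$, apply the hypothesis (\ref{eq:RTcondCs}), restrict the resulting sum to the conditional type class (which coincides with the $\Pi(\theta^n)$-orbit, on which $q^n$ is constant), and invoke the standard $(n+1)^{-|\Sset||\Tset|}$ lower bound on the mass of a conditional type class under its own product distribution. You have also correctly flagged the one point needing care --- that the conditional type of a $\Lambda$-admissible $s^n$ satisfies the state constraint at the blocklength $n$ at which the hypothesis is invoked --- which the paper treats the same way.
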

Originally, Ahlswede's RT is stated so that (\ref{eq:RTcondCs}) holds for any $q(s)\in\pSpace(\Sset)$, without state constraint (see \cite{Ahlswede:86p}), and without conditioning on the parameter sequence $\theta^n$. We give the proof of 
Lemma~\ref{lemm:LRT} in Appendix~\ref{app:LRT}.
Next, we give our random code capacity theorem.
\begin{theorem}
\label{theo:PrCav}
The random code capacity of the AVC $\avc$ with fixed parameters, under input constraint $\plimit$ and state constraint $\Lambda$, is given by 
\begin{align}
\rCav=\liminf_{n\rightarrow\infty} \inC_{n}^{\rstarC}\hspace{-0.05cm}(\avc)  \,.
\end{align}
\end{theorem}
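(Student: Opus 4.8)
The plan is to prove the two matching inequalities. For achievability I will produce a good \emph{deterministic} code for the compound channel with the \emph{average} state constraint and robustify it, via the modified RT of Lemma~\ref{lemm:LRT}, into a \emph{random} code for the AVC with the \emph{almost-sure} state constraint; the key point is that permuting the codeword coordinates must be done only within the parameter-preserving group $\Pi(\theta^n)$, which is precisely what Lemma~\ref{lemm:LRT} is tailored to. For the converse I will let the jammer use a conditionally i.i.d.\ state law, reduce the AVC against such a law to a memoryless single-letter bound, and bridge the average/almost-sure constraint gap by the standard $(\Lambda-\delta)$-relaxation together with a minimax interchange and a continuity argument.

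\emph{Achievability.} Fix $R<\liminf_{n\to\infty}\inC_{n}^{\rstarC}\hspace{-0.05cm}(\avc)$. By (\ref{eq:Cieqiv3}), $\inC_{n}^{\rstarC}\hspace{-0.05cm}(\avc)=\inC_n(\compound)$ for the compound channel $\compound$ with $\Qset=\pLSpaceS$, so by Lemma~\ref{lemm:PCcompound} there are, for all large $n$, deterministic $(2^{nR},n,\eps_n)$ codes $\code_n=(\enc_n,\dec_n)$ for this compound channel obeying the per-codeword input constraint (\ref{eq:PinputCstrict}); since $R$ is strictly below capacity, these codes can be taken with $\eps_n$ decaying exponentially (this is how the achievability in Lemma~\ref{lemm:PCcompound} is established), so the polynomial factor in Lemma~\ref{lemm:LRT} will be harmless. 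Apply Lemma~\ref{lemm:LRT} with $h(s^n,\theta^n)=\err(\code_n|s^n,\theta^n)$ and $\alpha_n=\eps_n$: for every conditional product law $\prod_i q(s_i|\theta_i)$ with $q\in\apLSpaceS$ one has $\sum_{s^n}\big(\prod_i q(s_i|\theta_i)\big)h(s^n,\theta^n)=\err(q,\theta^n,\code_n)\le\eps_n$, hence $\tfrac{1}{|\Pi(\theta^n)|}\sum_{\pi\in\Pi(\theta^n)}\err(\code_n|\pi s^n,\theta^n)\le\beta_n=(n+1)^{|\Sset||\Tset|}\eps_n\to 0$ for every $s^n$ with $l^n(s^n)\le\Lambda$. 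Now take $\gcode_n$ with $\Gamma=\Pi(\theta^n)$, $\mu$ uniform, and $\code_\pi$ obtained from $\code_n$ by permuting the codeword coordinates by $\pi$ at the encoder and by $\pi^{-1}$ at the decoder. Since $\pi\theta^n=\theta^n$ and the channel is memoryless given the state and parameter, $\err(\code_\pi|s^n,\theta^n)=\err(\code_n|\pi s^n,\theta^n)$; since $\cost^n(\cdot)$ is permutation-invariant, the input requirement (\ref{eq:PcodeInputCr}) holds. Hence $\err(q,\gcode_n)\le\beta_n\to0$ for every $q\in\pLSpaceSn$, so $R$ is achievable by random codes and $\rCav\ge\liminf_{n\to\infty}\inC_{n}^{\rstarC}\hspace{-0.05cm}(\avc)$.

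\emph{Converse.} Let $R$ be achievable, with $(2^{nR},n,\eps_n)$ random codes $\gcode_n$, $\eps_n\to0$. Fix $\delta>0$ and a conditional pmf $q$ with $\tfrac1n\sum_{i=1}^n\E_q[l(S)\mid T=\theta_i]\le\Lambda-\delta$ for all $n$, and let the jammer draw $S^n\sim\prod_i q(s_i|\theta_i)$ independently of the message and the common randomness. Since $l$ is bounded, Hoeffding's inequality gives $\Pr\{l^n(S^n)>\Lambda\}\le e^{-cn\delta^2}$; conditioning this law on $\{l^n(S^n)\le\Lambda\}$ yields a state distribution in $\pLSpaceSn$, so $\err(q,\gcode_n)\le\eps_n+e^{-cn\delta^2}\to0$. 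A standard Fano bound conditioned on the common randomness, followed by single-letterization of $W_{Y|X,S,T}$ (using independence of $S^n$ from the input and the randomness, and concavity of mutual information in the input law across both the time index and the code realizations), gives $R\le\max_{p(x|t):\,\E\cost(X)\le\plimit}I_q(X;Y|T)+o(1)$ with $(T,X,S,Y)\sim P_{\theta^n}(t)p(x|t)q(s|t)W_{Y|X,S,T}(y|x,s,t)$. This holds for every such $q$ and all large $n$, hence $R\le\inf_q\max_p I_q(X;Y|T)+o(1)$; the feasible sets of $p$ and $q$ are convex and compact and $I_q(X;Y|T)$ is concave in $p$ and convex in $q$, so a minimax interchange turns this into $R\le\liminf_{n\to\infty}\big(\max_p\inf_{q}I_q(X;Y|T)\big)$, the $\inf$ over conditional pmfs with average cost $\le\Lambda-\delta$. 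Finally, since $l(b)=0$ for some $b\in\Sset$, any $q$ with average cost $\le\Lambda$ is approximated by $(1-\delta/\Lambda)q+(\delta/\Lambda)q_b$ (average cost $\le\Lambda-\delta$), and $I_q(X;Y|T)$ depends on $q$ continuously with a modulus independent of $n$; letting $\delta\to0$ gives $R\le\liminf_{n\to\infty}\inC_{n}^{\rstarC}\hspace{-0.05cm}(\avc)$.

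\emph{Main obstacle.} The conceptual heart is the achievability step: a direct application of Ahlswede's RT is illegal because permuting the codeword coordinates would scramble the fixed parameter string, so one is forced to work with the restricted group $\Pi(\theta^n)=\{\pi:\pi\theta^n=\theta^n\}$; that this smaller group still robustifies a compound code, at the cost of only the polynomial factor $(n+1)^{|\Sset||\Tset|}$, is exactly the content of Lemma~\ref{lemm:LRT}, so the work here is to check its hypotheses — in particular that the compound-channel code has exponentially small error so that the polynomial blow-up is absorbed. On the converse side the finicky point is reconciling the almost-sure state constraint of the AVC with the average constraint natural to the single-letter bound: this forces the $(\Lambda-\delta)$-relaxation together with a concentration estimate, a minimax interchange, and a continuity-in-$\Lambda$ argument (legitimate because the cost function vanishes somewhere), and one must make sure that all error terms and approximation moduli are uniform in the block length $n$ so that they survive the $\liminf_n$.
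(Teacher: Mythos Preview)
Your proposal is correct and follows essentially the same approach as the paper: achievability via the compound-channel code of Lemma~\ref{lemm:PCcompound} combined with the modified RT (Lemma~\ref{lemm:LRT}) restricted to $\Pi(\theta^n)$, and converse via the $(\Lambda-\delta)$ relaxation with a conditionally i.i.d.\ state law. The only cosmetic difference is that the paper's converse invokes Lemma~\ref{lemm:PCcompound} directly (so the Fano/single-letterization and the max--min form are already packaged there), whereas you redo that argument from scratch and therefore need the explicit minimax interchange; both routes are valid.
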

The proof of Theorem~\ref{theo:PrCav} is given in Appendix~\ref{app:PrCav}. The proof is based on our extension of 
Ahlswede's RT above.
Essentially, we use a reliable code for the compound channel to construct a random code for the AVC by  applying random permutations to the codeword symbols. However, here,  we only use permutations that do not affect the parameter sequence $\theta^n$. 
The result above plays a central role in the proof of the capacity theorem in Section~\ref{sec:GaussCol}, where the AVC with colored Gaussian noise is considered.

We also give an equivalent formulation in terms of the random code capacity of the traditional AVC.
As mentioned in Remark~\ref{rem:constT}, the case of an AVC $\{ W_{Y|X,S,T=t} \}$ with a constant parameter $\theta_i=t$ reduces to the traditional AVC under input and state constraints. For this channel, Csisz\'ar and Narayan \cite{CsiszarNarayan:88p1} showed that the random code capacity is given by 
\begin{align}
\inC_t^{\rstarC}\hspace{-0.05cm}(\plimit,\Lambda) \triangleq \min_{q(s)\,:\; \E l(S)\leq \Lambda} \max_{p(x)\,:\; \E \cost(X)\leq \plimit} I_q(X;Y|T=t) 
=\max_{p(x)\,:\; \E \cost(X)\leq \plimit} \min_{q(s)\,:\; \E l(S)\leq \Lambda}  I_q(X;Y|T=t)  
\label{eq:Ctol}
\end{align}
where the last equality is due to the minimax theorem \cite{sion:58p}.
Then, define 
\begin{align}
\inR_{n}^{\rstarC}\hspace{-0.05cm}(\avc)\triangleq& \min_{ \substack{ \lambda_1,\ldots,\lambda_n \,:\; \\ \frac{1}{n} \sum_{i=1}^n \lambda_i \leq \Lambda } } \;
\max_{ \substack{ \omega_1,\ldots,\omega_n \,:\;\\  \frac{1}{n} \sum_{i=1}^n \omega_i\leq \plimit } }
\frac{1}{n} \sum_{i=1}^n  
\inC_{\theta_i}^{\rstarC}\hspace{-0.05cm}(\omega_i,\lambda_i) \,,
\label{eq:Cieqiv2}
\end{align}
\begin{lemma}
\label{lemm:Ciequiv}
\begin{align}
\inR_{n}^{\rstarC}\hspace{-0.05cm}(\avc)=\inC_{n}^{\rstarC}\hspace{-0.05cm}(\avc) \,.
\end{align}
\end{lemma}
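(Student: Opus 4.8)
The approach is to unfold both sides into nested optimizations and reconcile them via two elementary facts. Fact~(i): for a fixed state-dependent channel $W_{Y|X,S,T}(\cdot|\cdot,\cdot,t)$, the mutual information $I_q(X;Y|T=t)$ is concave in the input law $p(\cdot|t)$ and convex in the state law $q(\cdot|t)$, since the induced channel $\sum_s q(s|t)W_{Y|X,S,T}(y|x,s,t)$ is affine in $q(\cdot|t)$ while $I(X;Y)$ is concave in the input and convex in the channel. Fact~(ii): Sion's minimax theorem \cite{sion:58p}; together with Fact~(i) and compactness of the constraint sets $P_\plimit:=\{p(x|t):\sum_{t}P_T(t)\sum_x p(x|t)\cost(x)\le\plimit\}$ and $Q_\Lambda:=\{q(s|t):\sum_t P_T(t)\sum_s q(s|t)l(s)\le\Lambda\}$ (closed subsets of the product simplices $\pSpace(\Xset|\Tset)$, $\pSpace(\Sset|\Tset)$), it gives, writing $J(p,q):=\sum_{t}P_T(t)\,I_q(X;Y|T=t)$ with $P_T$ the type of $\theta^n$,
\[
\inC_n^{\rstarC}(\avc)=\max_{p\in P_\plimit}\min_{q\in Q_\Lambda}J(p,q)=\min_{q\in Q_\Lambda}\max_{p\in P_\plimit}J(p,q),
\]
with both extrema attained (the first equality is \eqref{eq:Cieqiv3} with the $\inf$ seen to be a $\min$ by compactness; the second is Sion); likewise $\inC_t^{\rstarC}(\cdot,\cdot)$ already has both a min--max and a max--min form by \eqref{eq:Ctol}. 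The structural feature that drives the proof is that the $t$-th summand of $J$ depends on $(p,q)$ only through the ``block'' $(p(\cdot|t),q(\cdot|t))$, and that the input and state constraints couple the blocks only through a single scalar inequality each. I will prove the two inequalities $\inR_n^{\rstarC}(\avc)\le\inC_n^{\rstarC}(\avc)$ and $\inR_n^{\rstarC}(\avc)\ge\inC_n^{\rstarC}(\avc)$; they are dual under the exchange $(p,\plimit,\max,\text{concave},\omega)\leftrightarrow(q,\Lambda,\min,\text{convex},\lambda)$.

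For $\inR_n^{\rstarC}(\avc)\le\inC_n^{\rstarC}(\avc)$: let $q^\star$ attain the outer minimum in the min--max form and set $\lambda_i:=\sum_s q^\star(s|\theta_i)l(s)$ for $i\in[1:n]$. Then $\tfrac1n\sum_i\lambda_i=\sum_t P_T(t)\sum_s q^\star(s|t)l(s)\le\Lambda$, so $(\lambda_1,\dots,\lambda_n)$ is feasible for the outer minimization in $\inR_n^{\rstarC}(\avc)$, whence $\inR_n^{\rstarC}(\avc)\le\max_{\{\omega_i\}:\frac1n\sum_i\omega_i\le\plimit}\tfrac1n\sum_i\inC_{\theta_i}^{\rstarC}(\omega_i,\lambda_i)$. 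Since $q^\star(\cdot|\theta_i)$ meets the constraint $\E l(S)\le\lambda_i$, it is feasible in the inner minimization defining $\inC_{\theta_i}^{\rstarC}(\omega_i,\lambda_i)$, so $\inC_{\theta_i}^{\rstarC}(\omega_i,\lambda_i)\le\psi^{(\theta_i)}(\omega_i)$, where $\psi^{(t)}(\omega):=\max_{p(x):\E\cost(X)\le\omega}I_{q^\star}(X;Y|T=t)$ is the capacity--cost function of the fixed channel obtained by averaging $W_{Y|X,S,T}(\cdot|\cdot,\cdot,t)$ against $q^\star(\cdot|t)$; by Fact~(i) it is concave and nondecreasing. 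By concavity of $\psi^{(t)}$ and Jensen's inequality, replacing, within the maximization over $\{\omega_i\}$, each $\omega_i$ by the average of $\{\omega_j:\theta_j=\theta_i\}$ does not decrease $\tfrac1n\sum_i\psi^{(\theta_i)}(\omega_i)$ and preserves $\sum_i\omega_i$, so that maximization equals $\max_{\{\omega^{(t)}\}:\sum_t P_T(t)\omega^{(t)}\le\plimit}\sum_t P_T(t)\psi^{(t)}(\omega^{(t)})$. By the block structure, $\sum_t P_T(t)\psi^{(t)}(\omega^{(t)})=\max_{p:\,\E[\cost(X)|T=t]\le\omega^{(t)}\ \forall t}J(p,q^\star)$, and since the union of these block-wise constraint sets over all feasible allocations $\{\omega^{(t)}\}$ is exactly $P_\plimit$, the overall maximum is $\max_{p\in P_\plimit}J(p,q^\star)=\inC_n^{\rstarC}(\avc)$, as needed.

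The reverse inequality is obtained by the dual argument: start from the max--min form with maximizer $p^\star$, put $\omega_i:=\sum_x p^\star(x|\theta_i)\cost(x)$ (feasible for the inner maximization in $\inR_n^{\rstarC}(\avc)$), bound $\inC_{\theta_i}^{\rstarC}(\omega_i,\lambda_i)\ge\varphi^{(\theta_i)}(\lambda_i)$ with $\varphi^{(t)}(\lambda):=\min_{q(s):\E l(S)\le\lambda}I_{p^\star,q}(X;Y|T=t)$ (convex and nonincreasing by Fact~(i)), use convexity plus Jensen to collapse the outer minimization to one budget $\lambda^{(t)}$ per symbol $t$, and recognize $\min_{\{\lambda^{(t)}\}:\sum_t P_T(t)\lambda^{(t)}\le\Lambda}\sum_t P_T(t)\varphi^{(t)}(\lambda^{(t)})=\min_{q\in Q_\Lambda}J(p^\star,q)=\inC_n^{\rstarC}(\avc)$. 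The points needing genuine care — the main obstacle, such as it is — are bookkeeping: (a) producing \emph{both} the min--max and the max--min form of $\inC_n^{\rstarC}(\avc)$ with attained extrema, which rests on Fact~(i), compactness, and Sion; (b) keeping the quantifier order straight and tracking exactly where the ``$\le$'' in each constraint is used while passing between the per-letter budgets $\{\omega_i\},\{\lambda_i\}$, the per-symbol budgets $\{\omega^{(t)}\},\{\lambda^{(t)}\}$, and the distributions $p,q$; and (c) checking that within-class averaging of the budgets is lossless \emph{in the correct direction}, which is precisely Jensen for the concave $\psi^{(t)}$ (respectively the convex $\varphi^{(t)}$). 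None of these is deep, but each is essential.
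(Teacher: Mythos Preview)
Your proof is correct and takes essentially the same route as the paper: both rely on concavity of $I_q(X;Y|T=t)$ in $p(\cdot|t)$ and convexity in $q(\cdot|t)$ to reduce the per-index budgets $(\omega_i,\lambda_i)$ to per-symbol budgets $(\omega^{(t)},\lambda^{(t)})$, then identify the resulting nested optimization with $\inC_n^{\rstarC}(\avc)$ via the block structure and Sion's theorem. The only presentational difference is that the paper first isolates the Jensen/averaging step as a separate structural lemma (Lemma~\ref{lemm:Psamet}: any saddle-point allocation may be taken constant on each class $\{i:\theta_i=t\}$) and then runs one chain of equalities, whereas you prove the two inequalities $\le$ and $\ge$ directly by pulling the optimizers $q^\star$ and $p^\star$ of $\inC_n^{\rstarC}(\avc)$ into $\inR_n^{\rstarC}(\avc)$; the underlying mechanism is identical.
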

The proof of Lemma~\ref{lemm:Ciequiv} is given in Appendix~\ref{app:Ciequiv}. Theorem~\ref{theo:PrCav} and Lemma~\ref{lemm:Ciequiv} yield the following consequence.
\begin{coro}
\label{coro:PrCavE}
The random code capacity of the AVC $\avc$ with fixed parameters, under input constraint $\plimit$ and state constraint $\Lambda$, is given by 
\begin{align}
\rCav=\liminf_{n\rightarrow\infty} \inR_{n}^{\rstarC}\hspace{-0.05cm}(\avc)  \,.
\end{align}
\end{coro}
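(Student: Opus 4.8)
The plan is to obtain the statement as an immediate consequence of the two results that precede it, Theorem~\ref{theo:PrCav} and Lemma~\ref{lemm:Ciequiv}. Theorem~\ref{theo:PrCav} already identifies the random code capacity as $\rCav=\liminf_{n\rightarrow\infty}\inC_{n}^{\rstarC}\hspace{-0.05cm}(\avc)$, with $\inC_{n}^{\rstarC}\hspace{-0.05cm}(\avc)$ the single-letter quantity defined through the compound channel in (\ref{eq:Cieqiv3}). Lemma~\ref{lemm:Ciequiv} asserts that, for every fixed blocklength $n$, this quantity equals the quantity $\inR_{n}^{\rstarC}\hspace{-0.05cm}(\avc)$ of (\ref{eq:Cieqiv2}), which is written directly in terms of the constrained random code capacities $\inC_{\theta_i}^{\rstarC}\hspace{-0.05cm}(\omega_i,\lambda_i)$ of the traditional AVCs $\{W_{Y|X,S,T=\theta_i}\}$, optimized over per-coordinate input and state power allocations $(\omega_i)_{i=1}^n$ and $(\lambda_i)_{i=1}^n$ along the parameter sequence $\theta^n$.

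The remaining step I would carry out is a single substitution. Since the identity $\inR_{n}^{\rstarC}\hspace{-0.05cm}(\avc)=\inC_{n}^{\rstarC}\hspace{-0.05cm}(\avc)$ of Lemma~\ref{lemm:Ciequiv} holds for each $n$, the sequences $\{\inR_{n}^{\rstarC}\hspace{-0.05cm}(\avc)\}_{n\geq 1}$ and $\{\inC_{n}^{\rstarC}\hspace{-0.05cm}(\avc)\}_{n\geq 1}$ are term-by-term identical, hence have the same lower limit; substituting $\inR_{n}^{\rstarC}\hspace{-0.05cm}(\avc)$ for $\inC_{n}^{\rstarC}\hspace{-0.05cm}(\avc)$ inside the $\liminf$ in Theorem~\ref{theo:PrCav} yields $\rCav=\liminf_{n\rightarrow\infty}\inR_{n}^{\rstarC}\hspace{-0.05cm}(\avc)$, which is exactly the claim. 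I expect no genuine obstacle at this point: the analytic content---the Robustification-Technique construction underlying Theorem~\ref{theo:PrCav} and the minimax/convexity manipulation underlying Lemma~\ref{lemm:Ciequiv}---has already been discharged. The only point worth flagging is expository rather than technical: the purpose of this reformulation is that it presents the random code capacity as a limit of min--max optimizations, over per-coordinate power allocations, of weighted sums of classical Gaussian-type AVC capacities, which is the form of the ``double water filling'' characterization exploited later in Section~\ref{sec:GaussCol} for the AVC with colored Gaussian noise.
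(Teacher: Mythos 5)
Your proposal is correct and is exactly the paper's argument: the corollary is stated there as an immediate consequence of Theorem~\ref{theo:PrCav} and Lemma~\ref{lemm:Ciequiv}, obtained by substituting the term-by-term identity $\inR_{n}^{\rstarC}\hspace{-0.05cm}(\avc)=\inC_{n}^{\rstarC}\hspace{-0.05cm}(\avc)$ into the $\liminf$ of the theorem. Nothing further is needed.
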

The corollary will also be useful in our analysis of the AVC with colored Gaussian noise.

\subsection{The AVC with Fixed Parameters -- Deterministic Code Capacity}
We move to  the deterministic code capacity  of the AVC with fixed parameters, $\avc=\{W_{Y|X,S,T},\theta^{\infty}\}$,  under input constraint $\plimit$ and state constraint $\Lambda$.
%



\subsubsection{Capacity Theorem}
Before we state the capacity theorem, we give  a few definitions.
We begin with  symmetrizability of a channel \emph{without} parameters.
\begin{definition}[see {\cite{CsiszarNarayan:88p}}]
\label{def:symmetrizable}
 A state-dependent DMC $V_{Y|X,S}$ is said to be \emph{symmetrizable} if for some conditional distribution $J(s|x)$,
\begin{align}
\label{eq:symmetrizable}
\sum_{s\in\Sset} V_{Y|X,S}(y|x_1,s)J(s|x_2)=\sum_{s\in\Sset} V_{Y|X,S}&(y|x_2,s)J(s|x_1) \,,\; \nonumber\\
&\forall\, x_1,x_2\in\Xset \,,\; y\in\Yset \,.
\end{align}
Equivalently, the channel $\widetilde{V}(y|x_1,x_2)$ $=$ $
\sum_{s\in\Sset} V_{Y|X,S}(y|x_1,s)J(s|x_2)$ is symmetric, \ie $\widetilde{V}(y|x_1,x_2)=\widetilde{V}(y|x_2,x_1)$, for all $x_1,x_2\in\Xset$ and $y\in\Yset$. We say that such a  $J:\Xset\rightarrow\Sset$ symmetrizes $V_{Y|X,S}$. 
\end{definition}
Intuitively, symmetrizability identifies a poor channel, where 
the jammer can impinge the communication scheme by randomizing the state sequence $S^n$ according to $J^n(s^n|x_2^n)=\prod_{i=1}^n J(s_i|x_{2,i})$, for some codeword $x_2^n$. 
 Suppose that the transmitted codeword is $x_1^n$. The codeword $x_2^n$ can be thought of as an impostor sent by the jammer.  Now, since  the ``average channel" $\widetilde{V}$ is symmetric with respect to $x_1^n$ and  $x_2^n$, the two codewords appear to the receiver as equally likely. Indeed, by \cite{Ericson:85p}, if the AVC $\{ V_{Y|X,S} \}$ without parameters and free of constraints is symmetrizable, then its capacity is zero. 

We will assume that either the channels $\channel(\cdot|\cdot,\cdot,\theta_i)$ are all symmetrizable, or the number of
non-symmetrizable channels grows linearly with $n$. That is, 
\begin{subequations}
\label{eq:Symmassumption}
\begin{align}
\text{either } \; |\Iset(n)|=0 \,\; \text{or }\;  |\Iset(n)|=\mathbf{\Omega}(n) \,,
\end{align}
where
\begin{align}
\Iset(n)=\left\{ i\in [1:n]  \,:\;  \channel(\cdot|\cdot,\cdot,\theta_i) \;\text{is non-symmetrizable} \right\} \,.
\end{align}
\end{subequations}
The asymptotic notation $f(n)=\mathbf{\Omega}(n)$ means that there exist $n_0>0$ and $0<\alpha\leq 1$ such that $f(n)\geq \alpha n$ for all $n\geq n_0$.
An intuitive explanantion for this assumption is given in Remark~\ref{rem:symmSuff} below.
Next, we define a symmetrizability cost and threshold for the AVC with fixed parameters. For every $n$ and $p(x|t)$ with 
\begin{align}
\frac{1}{n} \sum_{i=1}^n p(x|\theta_i) \cost(x)\leq\plimit \,,
\end{align}
 define the \emph{minimal symmetrizability cost} by
\begin{align}
\tLambda_n(p)\triangleq\min\, \frac{1}{n}\sum_{i=1}^n \sum_{x\in\Xset}\sum_{s\in\Sset} p(x|\theta_i)J_{\theta_i}(s|x)l(s)
= \min\,  \sum_{t\in\Tset} \sum_{x\in\Xset}\sum_{s\in\Sset} P_T(t) p(x|tJ_{t}(s|x)l(s)
 \,,
\label{eq:LambdaOig}
\end{align}
where the minimization is over the conditional distributions $J_t(s|x)$ that symmetrize 
$W_{Y|X,S,T}(\cdot|\cdot,\cdot,t)$, for $t\in \Tset$
(see Definition~\ref{def:symmetrizable}). We use the convention that a minimum value over an empty set is $+\infty$. 
Note that the last equality in (\ref{eq:LambdaOig}) holds since $ P_T$ is defined as the type of the parameter sequence $\theta^n$, hence averaging over time is the same as averaging according to $P_T$.
In addition, define the \emph{symmetrizability threshold}
\begin{align}
L_n^*\triangleq  \max_{p(x|t)\,:\; \frac{1}{n} \sum_{i=1}^n p(x|\theta_i) \cost(x)\leq\plimit} \tLambda_n(p) \,.
\label{eq:1Lstar}
\end{align}
Intuitively, $\tLambda_n(p)$ is the minimal average state cost which the jammer has to pay to symmetrize the channel at each time instance, for a given conditional input distribution $p(x|t)$. If this minimal state cost violates the state constraint $\Lambda$, then the jammer is prohibited from symmetrizing the channel. Indeed, we will show that 
if there exists an input distribution $p(x|t)$ with $\frac{1}{n} \sum_{i=1}^n p(x|\theta_i) \cost(x)\leq\plimit$ and 
$\tLambda_n(p)>\Lambda$ for large $n$, then the deterministic code capacity is positive.
The symmetrizability threshold $L_n^*$ is the worst symmetrizability cost from the jammer's perspective.

Our capacity result is stated below.
Let
\begin{align}
\inC_n(\avc)& \triangleq 
\begin{cases}
\min\limits_{  q(s|t) \,:\; \E_q l(S)\leq \Lambda } \;
\max\limits_{ \substack{ p(x|t)\,:\; \E\,\cost(X)\leq\plimit \,,\; \\ \tLambda_n(p)\geq \Lambda } } \;   I_q(X;Y|T)  &\text{if $L_n^*> \Lambda$}\,,\\
  0 																 &\text{if $L_n^*\leq \Lambda$}
\end{cases}	
\,,
\label{eq:Cieqiv3Det} 
\end{align}
with $(T,S,X)\sim P_{T}(t) p(x|t) q(s|t)$, where $P_{T}$ is the type of the parameter sequence $\theta^n$ with a fixed length $n$.

\begin{theorem}
\label{theo:PCavc}
Assume that $L_n^*\neq \Lambda$ for sufficiently large $n$ and that (\ref{eq:Symmassumption}) holds.
The capacity of an AVC $\avc$ with fixed parameters, under input constraint $\plimit$ and state constraint $\Lambda$, is given by
\begin{align}
\Cavc=\liminf\limits_{n\rightarrow\infty} \inC_n(\avc) \,.
\end{align}
In particular, if the channels $W_{Y|X,S,T}(\cdot|\cdot,\cdot,t)$, $t\in\Tset$, are non-symmetrizable, then
$
\Cavc=\rCav=\,$ $\liminf\limits_{n\rightarrow\infty} \inC_n^{\rstarC}\hspace{-0.05cm}(\avc) 
$. That is, the deterministic code capacity coincides with the random code capacity. 
\end{theorem}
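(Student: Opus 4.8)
The plan is to establish the two bounds separately, following the template of Csisz\'ar–Narayan \cite{CsiszarNarayan:88p} for the constrained AVC, but carrying the parameter sequence $\theta^n$ through the argument and using the type $P_T$ of $\theta^n$ in place of a single channel law. For the \emph{converse}, I would first dispose of the case $L_n^*\le\Lambda$: if for every admissible $p(x|t)$ the minimal symmetrizability cost satisfies $\tLambda_n(p)\le\Lambda$ for infinitely many $n$, then the jammer can pick, for each block length, symmetrizing kernels $J_t(s|x)$ realizing $\tLambda_n(p)$ against (essentially) the codeword-induced input type and randomize $S^n$ by $\prod_i J_{\theta_i}(s_i|x_{2,i})$; because each per-letter channel is made symmetric and the induced state cost is within $\Lambda$, a standard impostor-codeword argument (as in \cite{Ericson:85p,CsiszarNarayan:88p}, applied letter-wise and then averaged via $P_T$) forces the average error probability bounded away from $0$, hence $\Cavc=0$. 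In the case $L_n^*>\Lambda$, the converse is the usual Fano/genie argument against an i.i.d.\ jammer distribution $q(s|t)$ with $\E_q l(S)\le\Lambda$: restricting attention to codewords whose conditional type is close to some $p(x|t)$ forces $\tLambda_n(p)\ge\Lambda$ in the limit (otherwise we are back in the symmetrizable regime and capacity is $0$), and mutual-information continuity in the type plus the single-letterization of $I_q(X;Y|T)$ under the parameter type $P_T$ give the upper bound $\liminf_n \inC_n(\avc)$.

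For the \emph{achievability} direction, I would build a deterministic code by the now-standard two-step recipe: (i) take the random code guaranteed by Theorem~\ref{theo:PrCav} (equivalently Corollary~\ref{coro:PrCavE}) at rate close to $\liminf_n\inC_n^{\rstarC}(\avc)$, and (ii) derandomize it. The derandomization has two sub-parts. First, Ahlswede's elimination-type reduction shows the shared randomness can be cut down to polynomially many codes; since rates and constraints are unaffected, we get a list of $\mathrm{poly}(n)$ deterministic codes $\{\code_\gamma\}$ each good on average over jammer states. Second — and this is the genuinely AVC-specific step — one must pass from ``good on average over states'' to ``good for every state sequence with $l^n(s^n)\le\Lambda$,'' which requires a decoder that can also disambiguate the small index $\gamma$; here non-symmetrizability (guaranteed on the linearly many coordinates $\Iset(n)$, by assumption \eqref{eq:Symmassumption}) is exactly what lets a joint-typicality / maximum-mutual-information decoder reject the jammer's impostor codewords. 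One then verifies that the achievable rate equals $\liminf_n\inC_n^{\rstarC}(\avc)$ in the fully non-symmetrizable case, and in the mixed case equals $\liminf_n\inC_n(\avc)$ after intersecting the input-distribution optimization with the region $\{\tLambda_n(p)\ge\Lambda\}$, which is precisely the set of $p$ the jammer cannot symmetrize.

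The last matching step — checking that the random-code formula $\inC_n^{\rstarC}(\avc)$, when the jammer is forbidden from symmetrizing, collapses to $\inC_n(\avc)$ — is a min-max bookkeeping argument: on the feasible set $\tLambda_n(p)\ge\Lambda$ the jammer's best i.i.d.\ response is the constrained one and the constraint $\E_q l(S)\le\Lambda$ is active, so the $\min$–$\max$ in \eqref{eq:Cieqiv3Det} agrees with the state-constrained $\inC_n^{\rstarC}$; when $L_n^*\le\Lambda$ the feasible set is empty and both are $0$.

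The main obstacle I expect is not either bound in isolation but the \emph{uniformity over $n$ and over the parameter sequence} in the achievability derandomization: Ahlswede's elimination and the Csisz\'ar–Narayan decoding-with-state-disambiguation both quantify a gap between average and maximal (over $s^n$) error, and here that gap and the number of surviving codes must be controlled simultaneously with the per-block-length optimization defining $\inC_n(\avc)$, while only a \emph{linear} fraction $|\Iset(n)|=\mathbf\Omega(n)$ of coordinates are non-symmetrizable. Making the typicality/MMI decoder succeed using only those $\Omega(n)$ good coordinates — effectively a block AVC on a linear-sized sub-block whose per-letter channels are the non-symmetrizable $W(\cdot|\cdot,\cdot,\theta_i)$ — and showing the resulting rate penalty vanishes, is the delicate part; the assumption $L_n^*\neq\Lambda$ for large $n$ is what keeps us away from the boundary where this linear fraction could degrade.
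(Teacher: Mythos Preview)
Your converse sketch is essentially aligned with the paper's (Fano against an i.i.d.\ jammer $q(s|t)$, plus an Ericson-style impostor argument to force $\tLambda_n(P_{X|T})\ge\Lambda$ on the induced input type), so I will not dwell on it.

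The achievability proposal, however, has a genuine gap. You build the deterministic code by \emph{elimination}: reduce the random code of Theorem~\ref{theo:PrCav} to $\mathrm{poly}(n)$ codes and then transmit the index $\gamma$ over the non-symmetrizable coordinates $\Iset(n)$. But assumption~\eqref{eq:Symmassumption} explicitly allows $|\Iset(n)|=0$, i.e.\ \emph{every} $W_{Y|X,S,T}(\cdot|\cdot,\cdot,\theta_i)$ symmetrizable, and the theorem still claims positive capacity whenever $L_n^*>\Lambda$. In that regime there is no sub-block on which the unconstrained AVC has positive deterministic capacity, so the prefix carrying $\gamma$ cannot be reliably decoded (the jammer may concentrate essentially all of its $\Lambda n$ budget on an $o(n)$ prefix, so the per-symbol constraint there is effectively vacuous), and your derandomization collapses. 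This is precisely the obstruction Csisz\'ar and Narayan flagged for the scalar Gaussian AVC: elimination is \emph{not} sufficient under state constraints. The paper therefore takes the opposite route and constructs the deterministic code directly, via the divergence-typicality decoder of Definition~\ref{def:Ldecoder}, the disambiguity Lemma~\ref{lemm:disDec} (which needs only $\tLambda_n(p)>\Lambda$, not per-coordinate non-symmetrizability), and the codebook-generation Lemma~\ref{lemm:codeBsets}; the error analysis then follows the Csisz\'ar--Narayan template with $T$ carried as an extra conditioning variable.

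Relatedly, your ``last matching step'' is mis-stated. When some coordinate is non-symmetrizable, $\tLambda_n(p)=+\infty$ for every $p$ (the minimization in \eqref{eq:LambdaOig} is over an empty set), so the constraint $\tLambda_n(p)\ge\Lambda$ is vacuous and $\inC_n(\avc)=\inC_n^{\rstarC}(\avc)$ automatically; no ``intersecting the optimization'' is needed, and elimination would indeed reach the right answer in that sub-case. The interesting content of the theorem is the all-symmetrizable case, where $\inC_n(\avc)$ can be strictly smaller than $\inC_n^{\rstarC}(\avc)$; there elimination would overshoot the converse if it worked, which is another way to see that it cannot. The ``main obstacle'' you anticipate (uniform control of the elimination gaps) is therefore not the real one: the difficulty is conceptual, not technical, and is resolved by abandoning elimination in favor of the direct constrained-AVC decoder.
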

The proof of Theorem~\ref{theo:PCavc} is given in Appendix~\ref{app:PCavc}. The theorem will also play a central role in the proof of the capacity theorem in Section~\ref{sec:GaussCol}.

\begin{remark}
\label{rem:symmSuff}
Observe that the second part of the theorem implies that for the case where there are no constraints,
\ie $\plimit=\cost_{max}$ and $\Lambda=l_{max}$, non-symmetrizability is a 
sufficient condition for positive capacity. 
%
Specfically, according to the definition of $\tLambda_n(p)$, $L_n^*$ in (\ref{eq:LambdaOig})-(\ref{eq:1Lstar}), if some of the channels $W_{Y|X,S,T}(\cdot|\cdot,\cdot,\theta_i)$ are non-symmetrizable, then the symmetrizability threshold is  $L_n^*=\infty$, hence the capacity is positive. Intuitively, if the number of such channels is constant, i.e. $|\Iset(n)|=c$ for all $n$, it seems that  this assignment of $L_n^*$ does not make sense, since the user cannot achieve positive rates by coding over a negligible fraction of the block.
Yet, our assumption in (\ref{eq:Symmassumption}) excludes this scenario. In particular, if $|\Iset(n)|$ is non-zero, then we assume that $|\Iset(n)|$ grows linealy in $n$, in which case positive rates can be achieved by coding over the part of the block that lies within $\Iset(n)$. 
Furthermore, without constraints, we may replace the linear growth assumption with a poly-logarithmic one, i.e. $|\Iset(n)|=\mathbf{\Omega}((\log n)^a)$, with $a>1$. Indeed,
 based on Ahlswede's elimination technique \cite{Ahlswede:78p}, the random code capacity can be achieved with a code collection of polynomial size, $|\Gamma|=n^2$. Therefore, without state constraints, the random element $\gamma\in\Gamma$  can be reliably sent to the receiver 
 over the sub-block $\Iset(n)$, at rate $\rho_n=\frac{\log|\Gamma|}{(\log n)^a}=2(\log n)^{-(a-1)}$, which tends to zero as $n\rightarrow\infty$, hence the decrease in the overall rate is negligible as well.
We deduce that  if $|\Iset(n)|=\mathbf{\Omega}((\log n)^a)$, then the deterministic code capacity of the AVC with fixed parameters without constraints is the same as the random code capacity, i.e.
\begin{align}
\Cavc=\rCav=\liminf\limits_{n\rightarrow\infty} \min_{q(s|t)} \max_{p(x|t)} I_q(X;Y|T) \,.
\end{align}
\end{remark}

\begin{remark}
\label{rem:boundLstate}
Even in the case where there are no parameters, the boundary case where $L_n^*=\Lambda$ is an open problem. Although, for the traditional AVC, it is conjectured in \cite{CsiszarNarayan:88p} that the capacity is zero in this case. Similarly, we conjecture that the capacity of the AVC with fixed parameters is given by $\Cavc=\liminf\limits_{n\rightarrow\infty} \inC_n(\avc)$ for all values of $\{L_n^*\}_{n\geq 1}$, provided that
 (\ref{eq:Symmassumption}) holds.
There are special cases where we know that this holds, given in the corollary below. The corollary is based on the remark following Theorem 3 in \cite{CsiszarNarayan:88p}.
\end{remark}

\begin{coro}
\label{coro:LCavc01}
Let $\avc$ be an AVC with fixed parameters such that all channels $W_{Y|X,S,T}(\cdot|\cdot,\cdot,t)$, $t\in\Tset$, are symmetrizable.
If the minimum in (\ref{eq:LambdaOig}) is attained by a $0$-$1$ law, for every $n$ and $p(x|t)$ with 
$\frac{1}{n} \sum_{i=1}^n p(x|\theta_i) \cost(x)\leq\plimit$,
then
\begin{align}
\Cavc=\liminf\limits_{n\rightarrow\infty} \inC_n(\avc) \,.
\label{eq:LCavcTheoSp}
\end{align}
\end{coro}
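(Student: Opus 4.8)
The plan is to get the corollary from the proof of Theorem~\ref{theo:PCavc} by adding the one ingredient it lacks: the converse at the \emph{boundary} block lengths $n$ with $L_n^*=\Lambda$, which are exactly the ones excluded by that theorem. Since all $W_{Y|X,S,T}(\cdot|\cdot,\cdot,t)$ are symmetrizable, $\Iset(n)=\emptyset$, so (\ref{eq:Symmassumption}) holds and $L_n^*<\infty$; the case split in (\ref{eq:Cieqiv3Det}) reduces to $L_n^*>\Lambda$ versus $L_n^*\le\Lambda$, with $\inC_n(\avc)=0$ in the latter. Achievability needs nothing new: if $R<\liminf_{n\to\infty}\inC_n(\avc)$ then $\inC_n(\avc)>R\ge 0$, hence $L_n^*>\Lambda$, for all large $n$, so the achievability argument of Appendix~\ref{app:PCavc} (which invokes only $L_n^*>\Lambda$) applies. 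The converse argument there already gives $R\le\inC_n(\avc)+o(1)$ whenever $L_n^*\ne\Lambda$. Hence it remains only to show that $L_n^*=\Lambda$ for infinitely many $n$ forces $\Cavc=0$, matching $\liminf_{n\to\infty}\inC_n(\avc)=0$.

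For such an $n$, I would run the symmetrization converse of Ericson and Csisz\'ar--Narayan \cite{Ericson:85p,CsiszarNarayan:88p} with a \emph{deterministic} jamming law. Take any $(2^{nR},n,\eps)$ code with codewords $x^n(m)=\enc(m,\theta^n)$; each obeys the input constraint, so its conditional type $\hat p_m=\hat P_{x^n(m)|\theta^n}$ is a feasible $p(x|t)$ and $\tLambda_n(\hat p_m)\le L_n^*=\Lambda$. By hypothesis the minimum in (\ref{eq:LambdaOig}) for $\hat p_m$ is attained by a $0$--$1$ law $J_t(s|x)=\dsOne\{s=\phi^{(m)}_t(x)\}$, with each $\phi^{(m)}_t$ symmetrizing $W_{Y|X,S,T}(\cdot|\cdot,\cdot,t)$; because this law is deterministic, the state sequence it produces against impostor $m$, namely $S_i=\phi^{(m)}_{\theta_i}(x_i(m))$, has cost \emph{exactly} $\tLambda_n(\hat p_m)\le\Lambda$, leaving no concentration slack --- the precise place the $0$--$1$ hypothesis is used. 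There are at most $|\Sset|^{|\Xset||\Tset|}$ maps $\Xset\times\Tset\to\Sset$, so the messages split into at most that many groups according to $m\mapsto\phi^{(m)}$, and a group $\mathcal{G}$ with $|\mathcal{G}|\ge 2^{nR}|\Sset|^{-|\Xset||\Tset|}$ shares one symmetrizing law $\phi^*$. The jammer then draws an impostor $m'$ uniformly from $\mathcal{G}$ and sets $S_i=\phi^*_{\theta_i}(x_i(m'))$: this state distribution is admissible, and since $\phi^*$ symmetrizes every $W_{Y|X,S,T}(\cdot|\cdot,\cdot,t)$, for all $m,m'\in\mathcal{G}$ the output law when $m$ is sent against impostor $m'$ equals the one when $m'$ is sent against impostor $m$. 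The usual counting over the $\binom{|\mathcal{G}|}{2}$ confusable pairs then forces the code's average error under this admissible strategy to be at least $\tfrac12|\Sset|^{-|\Xset||\Tset|}-o(1)$, a positive constant, contradicting $\eps\to0$; hence $\Cavc=0$.

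The step I expect to be the main obstacle is this reduction to a \emph{single} symmetrizing law on a \emph{non-vanishing} fraction of the codebook --- passing from ``each message is symmetrizable within budget'' to ``a constant fraction is simultaneously symmetrizable by one fixed law'' --- which works only because the $0$--$1$ hypothesis makes the relevant set of symmetrizing maps finite (and the attack's realized cost deterministic); with a genuinely randomized minimizer the state cost would fluctuate above $\Lambda$ and the attack would be inadmissible at the boundary. The remaining bookkeeping --- identifying $\tfrac1n\sum_{i=1}^n(\cdot)$ with $\sum_{t\in\Tset}P_T(t)\sum_x(\cdot)$ through the type $P_T$ of $\theta^n$, tracking the $o(1)=1/|\mathcal{G}|$ term, and turning the pairwise output symmetry into the stated average-error bound exactly as in the symmetrization lemma of \cite{CsiszarNarayan:88p} --- follows the remark after Theorem~3 of \cite{CsiszarNarayan:88p} together with the computations in Appendix~\ref{app:PCavc}. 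Combining the boundary case with the two cases already handled yields $\Cavc=\liminf_{n\to\infty}\inC_n(\avc)$, \ie (\ref{eq:LCavcTheoSp}).
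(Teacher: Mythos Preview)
Your proposal is correct and takes essentially the same route as the paper: exploit that a $0$--$1$ symmetrizing law makes the realized state cost deterministic (hence exactly admissible at the boundary $L_n^*=\Lambda$), then apply the Ericson--Csisz\'ar--Narayan symmetrization counting. The only cosmetic difference is the pigeonholing: the paper first passes to a constant-conditional-type subcodebook (polynomially many types) and uses the single $0$--$1$ minimizer for that common type, applying Lemma~\ref{lemm:nEricson} on the full subcodebook to get error $\geq 1/4$, whereas you group codewords directly by their $0$--$1$ minimizer (at most $|\Sset|^{|\Xset||\Tset|}$ groups) and restrict the impostor to the largest group, yielding the smaller but still positive constant $\tfrac12|\Sset|^{-|\Xset||\Tset|}-o(1)$.
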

The proof of Corollary~\ref{coro:LCavc01} is given in Appendix~\ref{app:LCavc01}.
In particular, we note that the condition of $0$-$1$ law in Corollary~\ref{coro:LCavc01} holds when the
output $Y$ is a deterministic function of $X$, $S$, and $T$. 
As opposed to Theorem~\ref{theo:PCavc}, the statement in Corollary~\ref{coro:LCavc01} holds for all values of $\{L_n^*\}_{n\geq 1}$.

\subsubsection{
Decoding Rule}
\label{subsec:Dec}
We specify the decoding rule and state the corresponding properties, which are used in the analysis. 
%
To specify the decoding rule, we define the decoding sets $\Dset(m)\subseteq \Yset^n\times\Tset^n$, for $m\in [1:2^{nR}]$, such that 
$g(y^n,\theta^n)=m$ iff $(y^n,\theta^n)\in\Dset(m)$.
\begin{definition}[Decoder]
\label{def:Ldecoder}
Given the codebook $\{ \enc(m,\theta^n) \}_{m\in [1:2^{nR}]}$,  declare that $(y^n,\theta^n)\in \Dset(m)$ if there exists $s^n\in\Sset^n$ with $l^n(s^n)\leq\Lambda$ such that the following hold.
\begin{enumerate}[1)]
\item
For $(T,X,S,Y)$ that is distributed according to the joint type  $\hP_{\theta^n,\encn(m,\theta^n),s^n,y^n}$, we have that 
\begin{align}
D(P_{T,X,S,Y}|| P_T\times P_{X|T}\times P_{S|T} \times W_{Y|X,S,T} )\leq \eta \,.
\end{align}
 
\item
For every $\tm\neq m$ such that for some $\ts^n\in\Sset^n$ with $l^n(\ts^n)\leq\Lambda$, 
\begin{align}
\label{eq:DcompA} 
D(P_{T,\tX,\tS,Y}|| P_T\times P_{\tX|T}\times P_{\tS|T} \times W_{Y|X,S,T})\leq \eta \,,
\end{align} 
where $(T,\tX,\tS,Y)\sim \hP_{\theta^n,\encn(\tm,\theta^n),\ts^n,y^n}$,
 we have that
\begin{align}
I(X,Y;\tX|S,T)\leq \eta  \,.
\end{align}

\end{enumerate}

\end{definition}
We note that in Definition~\ref{def:Ldecoder}, the variables $T,X, \tX,S,\tS,Y$ are dummy random variables, distributed according to the joint type of 
$(\theta^n,\encn(m,\theta^n),$ $\encn(\tm,\theta^n),$ $s^n,\ts^n,y^n)$, where
 $\encn(m,\theta^n)$ is a ``tested" codeword, 
$\encn(\tm,\theta^n)$ is a competing codeword, $s^n$ is a ``tested" state sequence, $\ts^n$ is a competing state sequence, and $y^n$ is the received sequence. None of the sequences are random here.
We may have that the conditional type $P_{Y|X,S,T}$ differs from the actual channel $ W_{Y|X,S,T}$. Therefore, the divergences and mutual informations in Definition~\ref{def:Ldecoder} could be positive.

For the definition above to be proper, the decoding sets need to be disjoint, as stated in the following lemma.
\begin{lemma}[Decoding Disambiguity]
\label{lemm:disDec}
Suppose that in each codebook, all codewords have the same  conditional type, \ie $\hP_{\enc(m,\theta^n)|\theta^n}= p$ for all $m\in [1:2^{nR}]$.
 Assume  (\ref{eq:Symmassumption}) holds, that for some $\delta_0,\delta_1>0$, $P_T(t)\geq \delta_0$,  $p(x|t)\geq\delta_1$, $\forall x\in\Xset$, $t\in\Tset$,  and also
\begin{align}
  \tLambda_n(p)   >\Lambda \,.
\label{eq:decLambda}
\end{align}
 Then, for sufficiently small $\eta>0$, 
\begin{align}
\Dset(m)\cap \Dset(\tm) =\emptyset \,,\;\text{for all $m\neq \tm$} \,.
\label{eq:disDec}
\end{align}
\end{lemma}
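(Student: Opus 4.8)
The plan is to argue by contradiction, in the spirit of Csisz\'ar and Narayan \cite{CsiszarNarayan:88p}, exploiting that for a \emph{fixed} block length $n$ and a fixed parameter sequence $\theta^n$ there are only finitely many codebooks and finitely many candidate sequences, so that a vanishing decoding slack forces the governing inequalities to hold with slack zero. Suppose the conclusion fails. Then for every $\eta_0>0$ there are a codebook obeying the hypotheses, messages $m\neq\tm$, and a point $(y^n,\theta^n)\in\Dset(m)\cap\Dset(\tm)$ for some $\eta<\eta_0$; letting $\eta_0\downarrow0$ produces $\eta_j\downarrow0$ and, for each $j$, a codebook (with codewords $x_j^n(m)$), messages $m_j\neq\tm_j$, an output $y_j^n$, and—by Definition~\ref{def:Ldecoder}—witness state sequences $s_j^n$ (for $m_j$) and $\ts_j^n$ (for $\tm_j$), each of cost at most $\Lambda$. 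The key move is the cross-use of part~2 of the membership condition: $(\tm_j,\ts_j^n)$ qualifies as a competitor in the test for $\Dset(m_j)$, so $I(X,Y;\tX|S,T)\le\eta_j$ on the joint type of $(\theta^n,x_j^n(m_j),x_j^n(\tm_j),s_j^n,y_j^n)$; symmetrically $(m_j,s_j^n)$ is a competitor for $\Dset(\tm_j)$, giving $I(\tX,Y;X|\tS,T)\le\eta_j$ with $\tS=\ts_j^n$. Together with the two part-1 divergences, the joint type of the six-tuple $(\theta^n,x_j^n(m_j),x_j^n(\tm_j),s_j^n,\ts_j^n,y_j^n)$ therefore satisfies $D\big(P_{T,X,S,Y}\big\|P_T\!\times\!p\!\times\!P_{S|T}\!\times\!W_{Y|X,S,T}\big)\le\eta_j$ and its $(\tX,\tS)$-analogue, $I(X,Y;\tX|S,T)\le\eta_j$, $I(\tX,Y;X|\tS,T)\le\eta_j$, and—since $\hP_{x_j^n(m)|\theta^n}=p$ for all $m$—$P_{X|T}=P_{\tX|T}=p$ with $\E\,l(S)\le\Lambda$, $\E\,l(\tS)\le\Lambda$.

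Since this six-tuple ranges over a finite set, one tuple, with joint type $P^*$ and induced variables $(T,X,\tX,S,\tS,Y)$, recurs for infinitely many $j$; the divergences and mutual informations above are then fixed numbers bounded by $\eta_j\to0$, hence all vanish. So $P^*$ satisfies the listed relations at $\eta=0$: $I(X;S|T)=0$ and $P_{Y|X,S,T}(\cdot|x,s,t)=W_{Y|X,S,T}(\cdot|x,s,t)$ whenever $P_{X,S,T}(x,s,t)>0$ (and the $(\tX,\tS)$-versions), $\tX\perp(X,Y)|(S,T)$, $X\perp(\tX,Y)|(\tS,T)$.

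Now I would extract a symmetrizer. Set $J^{(1)}_t(s|x)\triangleq P_{S|\tX,T}(s|x,t)$ and $J^{(2)}_t(s|x)\triangleq P_{\tS|X,T}(s|x,t)$; these are genuine kernels $\Xset\to\Sset$ for every $t$ with $P_T(t)>0$, because $P_T(t)\ge\delta_0$ and $p(x|t)\ge\delta_1$ make every conditioning event have positive probability. Expanding $P_{Y|X,\tX,T}(y|x,\tx,t)=\sum_s P_{Y|X,\tX,S,T}(y|x,\tx,s,t)\,P_{S|X,\tX,T}(s|x,\tx,t)$ and using the $\eta=0$ facts $\tX\perp Y|(X,S,T)$, $P_{Y|X,S,T}=W_{Y|X,S,T}$, and $I(X;S|\tX,T)=0$ (the last from $I(X;S|T)=0$ and $I(X;\tX|S,T)=0$ by the chain rule) yields $P_{Y|X,\tX,T}(y|x,\tx,t)=\sum_s W_{Y|X,S,T}(y|x,s,t)\,J^{(1)}_t(s|\tx)$; the mirror argument through the $(\tX,\tS)$ part-1 bound and $I(\tX,Y;X|\tS,T)=0$ gives $P_{Y|X,\tX,T}(y|x,\tx,t)=\sum_s W_{Y|X,S,T}(y|\tx,s,t)\,J^{(2)}_t(s|x)$. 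Reading off $\sum_s W_{Y|X,S,T}(y|x_1,s,t)J^{(1)}_t(s|x_2)=P_{Y|X,\tX,T}(y|x_1,x_2,t)$ from the first and $\sum_s W_{Y|X,S,T}(y|x_1,s,t)J^{(2)}_t(s|x_2)=P_{Y|X,\tX,T}(y|x_2,x_1,t)$ from the second shows that $\bar J_t\triangleq\tfrac12\big(J^{(1)}_t+J^{(2)}_t\big)$ satisfies $\sum_s W_{Y|X,S,T}(y|x_1,s,t)\bar J_t(s|x_2)=\tfrac12\big[P_{Y|X,\tX,T}(y|x_1,x_2,t)+P_{Y|X,\tX,T}(y|x_2,x_1,t)\big]$, whose right-hand side is invariant under $x_1\leftrightarrow x_2$; hence $\bar J_t$ symmetrizes $W_{Y|X,S,T}(\cdot|\cdot,\cdot,t)$. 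Because both codewords have conditional type $p$, the cost of $\bar J$ telescopes: $\sum_t P_T(t)\sum_{x,s}p(x|t)\bar J_t(s|x)l(s)=\tfrac12\,\E\,l(S)+\tfrac12\,\E\,l(\tS)\le\Lambda$. This gives $\tLambda_n(p)\le\Lambda$, contradicting $\tLambda_n(p)>\Lambda$; and if $\tLambda_n(p)=+\infty$ because some $W_{Y|X,S,T}(\cdot|\cdot,\cdot,t)$ with $P_T(t)>0$ is non-symmetrizable, the same display already exhibits a symmetrizer of it, an immediate contradiction (the standing assumption~(\ref{eq:Symmassumption}) is inherited from Theorem~\ref{theo:PCavc} and carried along).

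I expect the real obstacle to be justifying the two displayed identities for \emph{all} input pairs $(x_1,x_2)$, not merely for those co-occurring in the two codewords: the decoding conditions constrain the empirical joint distribution only on the support of $\hP_{x^n(m),x^n(\tm)|\theta^n}$, whereas symmetrizability is a statement about every $(x_1,x_2)$. Closing this gap is exactly where $p(x|t)\ge\delta_1$ is used—in conjunction with the fact that, in the construction feeding this lemma, the codewords are generated so that every pairwise joint conditional type has full support $\Xset\times\Xset$ for each $t$. The remaining work is bookkeeping: an $\eta>0$ treatment would propagate the slack through the chain of information inequalities via Pinsker, amplified by $1/(\delta_0\delta_1)$-type factors from $P_T(t)\ge\delta_0$ and $p(x|t)\ge\delta_1$ that upgrade averaged closeness to pointwise closeness; the finiteness argument above avoids this by passing directly to $\eta=0$.
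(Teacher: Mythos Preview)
Your approach is essentially the paper's: both assume $(y^n,\theta^n)\in\Dset(m)\cap\Dset(\tm)$, extract $J^{(1)}_t=P_{S|\tX,T}(\cdot|\cdot,t)$ and $J^{(2)}_t=P_{\tS|X,T}(\cdot|\cdot,t)$, average them, and argue the average symmetrizes each $W_{Y|X,S,T}(\cdot|\cdot,\cdot,t)$ at cost at most $\Lambda$, contradicting $\tLambda_n(p)>\Lambda$. The paper packages the symmetrizer step as an auxiliary lemma (its Lemma~\ref{lemm:A2}) and works quantitatively at finite $\eta$ via Pinsker, whereas you pass to $\eta=0$ by finiteness of the type space; that is a legitimate alternative.

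The obstacle you flag is real for your conditional-distribution derivation, but your proposed resolution---that pairwise codeword joint types have full $\Xset\times\Xset$ support---is \emph{not} among the hypotheses and is not how the paper closes the argument. The correct fix is to work with joint distributions rather than conditionals: add the Condition~1 divergence and the Condition~2 mutual information to obtain
\[
D\big(P_{T,X,\tX,S,Y}\,\big\|\,P_T\times p\times p\times P_{S|\tX,T}\times W_{Y|X,S,T}\big)\le 2\eta
\]
(this is the paper's key step). At $\eta=0$ this is an equality of \emph{distributions}, so after marginalizing $S$ one has $P_{T,X,\tX,Y}(t,x,\tx,y)=P_T(t)\,p(x|t)\,p(\tx|t)\,V(y|x,\tx,t)$ for \emph{all} $(t,x,\tx,y)$, where $V(y|x,\tx,t)=\sum_s W(y|x,s,t)J^{(1)}_t(s|\tx)$; the mirror argument gives the same with $V'(y|x,\tx,t)=\sum_s W(y|\tx,s,t)J^{(2)}_t(s|x)$. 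Since $P_T(t)p(x|t)p(\tx|t)\ge\delta_0\delta_1^2>0$ everywhere, $V\equiv V'$ on all of $\Tset\times\Xset\times\Xset\times\Yset$, which is exactly what symmetrizability needs. Thus $p(x|t)\ge\delta_1$ is used to cancel the factor $p(x|t)p(\tx|t)$ from the comparison of the two reference measures (or, in the paper's finite-$\eta$ version, to convert the $\ell_1$ bound into a pointwise bound), not to force full support of the empirical joint type.
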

The proof of Lemma~\ref{lemm:disDec} is given in Appendix~\ref{app:disDec}.

\subsubsection{
Codebook Generation}
\label{subsec:codebooks}
We now extend Csisz\'{a}r and Narayan's lemma for the codebook generation \cite{CsiszarNarayan:88p}. 
\begin{lemma}[Codebooks Generation] 
\label{lemm:codeBsets}
For every $\eps>0$, sufficiently large $n$, rate $R\geq \eps$ and conditional type $p(x|t)$,  there exist a set of codewords 
$\{x^n(m,\theta^n)\}_{m\in [1:2^{nR}]}$ of conditional type $p$, 
such that for every $a^n\in\Xset^n$ and $s^n\in\Sset^n$ with $l^n(s^n)\leq\Lambda$, and every joint type $P_{T,X,\tX,S}$ with 
$P_{X|T}=P_{\tX|T}=p$, the following hold.
\begin{align}
|\{ \tm \,:\; (\theta^n,a^n,x^n(\tm,\theta^n),s^n)\in\Tset^n(P_{T,X,\tX,S})  \}|
 \leq
2^{n\left( \left[ R-I(\tX;X,S|T) \right]_{+} +\eps \right)} \,, 
\label{eq:11ebn}
\end{align}
\begin{align}
|\{ m \,:\; (\theta^n,x^n(m,\theta^n),s^n)\in\Tset^n(P_{T,X,S})  \}|
 \leq
2^{n\left(  R-\frac{\eps}{2}   \right)}
\,,\;\text{if $I(X;S|T)>\eps$} \,,
\label{eq:12ebn}
\end{align}
and
\begin{multline}
|\{ m \,:\; (\theta^n,x^n(m,\theta^n),x^n(\tm,\theta^n),s^n)\in\Tset^n(P_{T,X,\tX,S})   \,,\; \text{for some $\tm\neq m$}
\}| \\
\leq
2^{n\left(  R-\frac{\eps}{2}   \right)}
\,,\;
\text{if $I(X;\tX,S|T)-\left[ R-I(\tX;S|T) \right]_{+}>\eps$} \,.
\label{eq:13ebn}
\end{multline}

\end{lemma}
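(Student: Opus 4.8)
The plan is to carry out a random selection argument in the spirit of Csisz\'ar and Narayan \cite{CsiszarNarayan:88p}, the only essential change being that the codewords must now be drawn from the \emph{conditional} type class tied to the fixed parameter sequence. Fix $\eps>0$, a rate $R\geq\eps$, and a conditional type $p(x|t)$ that is realizable given $\theta^n$, so that $\mathcal{T}(p|\theta^n)\triangleq\{x^n\in\Xset^n:\hP_{x^n|\theta^n}=p\}$ is nonempty (recall $P_T=\hP_{\theta^n}$). I would generate a random codebook by drawing $X^n(m,\theta^n)$, $m\in[1:2^{nR}]$, independently and uniformly at random from $\mathcal{T}(p|\theta^n)$; every realization then automatically has conditional type $p$, so the conditional-type requirement is met for free, and it only remains to show that with probability tending to $1$ the random codebook satisfies (\ref{eq:11ebn})--(\ref{eq:13ebn}) simultaneously for \emph{all} test sequences $a^n$, $s^n$ with $l^n(s^n)\leq\Lambda$ and \emph{all} joint types $P_{T,X,\tX,S}$ with $P_{X|T}=P_{\tX|T}=p$. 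Because there are at most $(n+1)^{|\Tset|\,|\Xset|^2\,|\Sset|}$ such joint types, at most $|\Xset|^n$ sequences $a^n$ and at most $|\Sset|^n$ admissible $s^n$, it suffices to prove that for each fixed configuration the associated bad event has \emph{doubly} exponentially small probability, so that a union bound over the $2^{n\log(|\Xset|\,|\Sset|)}\cdot\mathrm{poly}(n)$ configurations still vanishes.

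I would treat (\ref{eq:11ebn}) and (\ref{eq:12ebn}) first, since they share the same structure. Fixing $a^n$, $s^n$, and $P_{T,X,\tX,S}$, one may assume the joint type of $(\theta^n,a^n,s^n)$ equals the marginal $P_{T,X,S}$, for otherwise the count in (\ref{eq:11ebn}) is $0$. The left-hand side of (\ref{eq:11ebn}) is then $N=\sum_{\tm}Z_{\tm}$, a sum of $2^{nR}$ independent $\{0,1\}$ variables with $Z_{\tm}=1$ iff $X^n(\tm,\theta^n)$ has the prescribed joint type with $(\theta^n,a^n,s^n)$; counting conditional type classes gives $\Pr(Z_{\tm}=1)=q\leq 2^{-n(I(\tX;X,S|T)-\delta_n)}$ with $\delta_n=O(n^{-1}\log n)\to 0$ (and $q=0$ in the trivial case). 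The elementary moment bound $\Pr(N\geq k)\leq\binom{2^{nR}}{k}q^{k}\leq(e\,2^{nR}q/k)^{k}$, applied with $k=\lceil 2^{n([R-I(\tX;X,S|T)]_+ +\eps)}\rceil$, gives $e\,2^{nR}q/k\leq e\,2^{n(\delta_n-\eps)}<\tfrac12$ for $n$ large (using $x-[x]_+\leq 0$ and $[x]_+\geq 0$), whence $\Pr(N\geq k)\leq 2^{-k}\leq 2^{-2^{n\eps}}$. Bound (\ref{eq:12ebn}) is obtained in exactly the same way from a single-codeword indicator of probability $q\leq 2^{-n(I(X;S|T)-\delta_n)}\leq 2^{-n(\eps-\delta_n)}$ (invoking the hypothesis $I(X;S|T)>\eps$) with threshold $2^{n(R-\eps/2)}\geq 2^{n\eps/2}$; it is here that $R\geq\eps$ is used, to keep the threshold exponentially large.

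The main obstacle is (\ref{eq:13ebn}). Here the left-hand side $M'=\sum_m B_m$ counts the messages $m$ that are ``covered'' by some other codeword, with $B_m=\mathbbm{1}\{\exists\,\tm\neq m:(\theta^n,x^n(m,\theta^n),x^n(\tm,\theta^n),s^n)\in\Tset^n(P_{T,X,\tX,S})\}$, and the $B_m$ are \emph{not} independent -- each depends on the whole codebook -- so the per-configuration moment bound used for (\ref{eq:11ebn})--(\ref{eq:12ebn}) is no longer available, and a bare Markov step is too weak to survive the union bound over $s^n$. I would first control the expectation: conditioning on $X^n(m,\theta^n)$ and union-bounding over the witness $\tm$ (legitimate by independence of the codewords) gives $\E[B_m]\leq 2^{-n(I(X;S|T)-\delta_n)}\,2^{-n([I(\tX;X,S|T)-R]_+-\delta_n)}$, and a chain-rule manipulation gives $R-I(X;S|T)-[I(\tX;X,S|T)-R]_+\leq R-I(X;\tX,S|T)+[R-I(\tX;S|T)]_+$, whose right-hand side lies strictly below $R-\eps$ exactly by the hypothesis of (\ref{eq:13ebn}); hence $\E[M']\leq 2^{n(R-\eps+2\delta_n)}$. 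Upgrading this expectation bound to a deviation bound with doubly exponential confidence -- the step I expect to be the real work -- reduces to reproducing the pairwise covering estimate of \cite{CsiszarNarayan:88p} in the conditional-type setting. When $I(X;S|T)>\eps$ it is automatic, since then any $m$ counted by $M'$ already has $(\theta^n,x^n(m,\theta^n),s^n)\in\Tset^n(P_{T,X,S})$ and (\ref{eq:12ebn}) caps the number of such $m$ at $2^{n(R-\eps/2)}$; in the complementary regime one has $I(\tX;X,S|T)>R$, so (\ref{eq:11ebn}) applied with $a^n=x^n(m,\theta^n)$ shows every message is covered by at most $2^{n\eps}$ codewords, and $M'$ is then controlled by a dedicated enumeration of the covering pairs. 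Once (\ref{eq:11ebn})--(\ref{eq:13ebn}) each fail with doubly exponentially small probability for every configuration, the union bound produces a single deterministic codebook that satisfies all of them, which is the assertion of the lemma.
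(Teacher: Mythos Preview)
Your overall plan---draw codewords uniformly from the conditional type class $\mathcal{T}(p|\theta^n)$, establish each of (\ref{eq:11ebn})--(\ref{eq:13ebn}) with doubly exponentially small failure probability for each fixed configuration, and then union-bound over the $2^{O(n)}$ configurations---is exactly the paper's plan. For (\ref{eq:11ebn}) and (\ref{eq:12ebn}) your binomial moment bound $\Pr(N\geq k)\leq(eMq/k)^k$ on an i.i.d.\ sum is a perfectly valid alternative to the martingale device (Lemma~\ref{lemm:bookLD}) the paper uses; both give the required doubly exponential decay. Your Case~1 reduction of (\ref{eq:13ebn}) to (\ref{eq:12ebn}) when $I(X;S|T)>\eps$ is also correct and is a nice shortcut.

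The gap is your Case~2. You correctly deduce $I(\tX;X,S|T)>R$ from $I(X;S|T)\leq\eps$ and the hypothesis of (\ref{eq:13ebn}), and then (\ref{eq:11ebn}) with $a^n=x^n(m,\theta^n)$ indeed caps the number of \emph{witnesses} $\tm$ for each $m$ at $2^{n\eps}$. But this says nothing about how many $m$ have at least one witness: the bound ``each $m$ covered by $\leq 2^{n\eps}$ codewords'' is compatible with $M'=2^{nR}$. A pair-counting argument would need a doubly exponential deviation bound on the total number of covering pairs, and that sum is neither i.i.d.\ nor amenable to your binomial moment trick---this is precisely the place where the dependence you flagged bites. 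No ``dedicated enumeration'' produces the needed concentration here without some martingale-type structure.

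The paper closes this gap as follows. Order the codewords and, for each $m$, let $\Jset_m$ be the set of indices $\tm<m$ with $(\theta^n,Z^n(\tm,\theta^n),s^n)\in\Tset^n(P_{T,\tX,S})$, \emph{truncated} to size at most $2^{n([R-I(\tX;S|T)]_+ +\eps/8)}$; an (\ref{eq:11ebn})-type estimate shows the truncation is vacuous except on a doubly exponentially rare event. Define $\psi_m=\mathbbm{1}\{\exists\,\tm\in\Jset_m:(\theta^n,Z^n(m),Z^n(\tm),s^n)\in\Tset^n(P_{T,X,\tX,S})\}$. Crucially, $\Jset_m$ is measurable with respect to the first $m-1$ codewords and its size is \emph{deterministically} capped, so
\[
\E\bigl[\psi_m\,\big|\,Z^n(1),\ldots,Z^n(m-1)\bigr]\leq |\Jset_m|\cdot 2^{-n(I(X;\tX,S|T)-\eps/8)}\leq 2^{n([R-I(\tX;S|T)]_+-I(X;\tX,S|T)+\eps/4)}<2^{-3n\eps/4},
\]
the last inequality being exactly the hypothesis of (\ref{eq:13ebn}). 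Now Lemma~\ref{lemm:bookLD} (Csisz\'ar--Narayan's Lemma~A1) applied to $\sum_m\psi_m$ gives $\Pr\bigl(\sum_m\psi_m>2^{n(R-\eps/2)}\bigr)\leq e^{-2^{n\eps/4}}$, which is the missing piece. The capped-witness-set construction is what converts the dependent sum into one with a uniformly bounded predictable increment, and that is the idea your Case~2 sketch does not supply.
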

The proof of Lemma~\ref{lemm:codeBsets} is given in Appendix~\ref{app:codeBsets}.

\subsection{Super-Additivity}
We also give an equivalent formulation with a sum over $i\in [1:n]$. Here, as opposed to the previous section, the formula
 \emph{cannot} be expressed in terms of the capacities of the constant-parameter AVCs $\{ W_{Y|X,S,T=\theta_i} \}$.
Considering the AVC without constraints, Schaefer \etal \cite{SchaeferBochePoor:16c} showed that the capacity of any product AVC that is composed of a symmetrizable channel and a non-symmetrizable channel is larger than the sum of the individual capacities (see Theorem 6 in \cite{SchaeferBochePoor:16c}).
Similarly, we give an example at the end of this section where the capacity of the AVC \emph{with fixed parameters} is larger than the weighted sum of the capacities of the constant-parameter AVCs $\{ W_{Y|X,S,T=\theta_i} \}$. 
This phenomenon can be viewed as an instance of the  
super-additivity property in \cite{SchaeferBochePoor:16c}. 

We begin with constant-parameter definitions, \ie for a fixed $T=t$.
For every input distribution $p(x)$ with $\E\cost(X)\leq \plimit$, define the constant-parameter minimal symmetrizability cost by
\begin{align}
\tLambda(p,t)\triangleq\min\, \sum_{x\in\Xset}\sum_{s\in\Sset} p(x)J(s|x)l(s) \,,
\label{eq:LambdaOig1}
\end{align}
where the minimization is over the distributions $J(s|x)$ that symmetrize 
$W_{Y|X,S,T}(\cdot|\cdot,\cdot,t)$, where $t\in\Tset$ is fixed 
(see Definition~\ref{def:symmetrizable}). Then, we can write the minimal symmetrizability cost defined in (\ref{eq:LambdaOig}) as 
\begin{align}
\tLambda_n( p(\cdot|\cdot) )=\frac{1}{n} \sum_{i=1}^n \tLambda(p(\cdot|\theta_i),\theta_i) \,.
\label{eq:LambdaOigEq}
\end{align}
%
%
Let
\begin{align}
\inR_n(\avc)& \triangleq 
\begin{cases}
\min\limits_{ \substack{ \lambda_1,\ldots,\lambda_n \,:\; \\ \frac{1}{n} \sum_{i=1}^n \lambda_i \leq \Lambda } } \;
\max\limits_{ \substack{ \omega_1,\ldots,\omega_n, \tlambda_1,\ldots\tlambda_n \,:\;\\  
\frac{1}{n} \sum_{i=1}^n \omega_i\leq \plimit \,,
\frac{1}{n} \sum_{i=1}^n \tlambda_i\geq \Lambda } }
\frac{1}{n} \sum\limits_{i=1}^n
\inC_{\theta_i}(\omega_i,\tlambda_i,\lambda_i)  &\text{if $L_n^*> \Lambda$}\,,\\
  0 																 &\text{if $L_n^*\leq \Lambda$}
\end{cases}	
\,,
\label{eq:Cieqiv2Det}
\intertext{where }
\inC_t(\plimit,\Delta,\Lambda)\triangleq&
\min\limits_{  q(s) \,:\; \E_q l(S)\leq \Lambda } \;
\max\limits_{ \substack{ p(x)\,:\; \E\,\cost(X)\leq\plimit \,,\; \\ \tLambda(p,t)\geq \Delta } } \;   I_q(X;Y|T=t) 
\label{eq:CtolDet}
\end{align}
We note that based on Csisz{\'a}r and Narayan's result in \cite{CsiszarNarayan:88p}, the capacity of the constant-parameter AVC
$\{W_{Y|X,S,T=t}\}$ is given by $\inC_t(\plimit,\Delta,\Lambda)$ with $\Delta=\Lambda$. 
\begin{lemma}
\label{lemm:CiequivDet}
\begin{align}
\inR_{n}(\avc)=\inC_{n}(\avc) \,.
\end{align}
\end{lemma}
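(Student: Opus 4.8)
The plan is to establish the identity $\inR_n(\avc)=\inC_n(\avc)$ by mirroring the proof of Lemma~\ref{lemm:Ciequiv}, adapting it to account for the additional symmetrizability cost constraint $\tLambda_n(p)\geq\Lambda$. First, note that the two sides agree trivially in the regime $L_n^*\leq\Lambda$, where both vanish; so assume $L_n^*>\Lambda$. The key structural observation is the one already recorded in (\ref{eq:LambdaOigEq}): the global minimal symmetrizability cost decomposes additively over time, $\tLambda_n(p(\cdot|\cdot))=\frac1n\sum_{i=1}^n\tLambda(p(\cdot|\theta_i),\theta_i)$. This means that the side constraint $\tLambda_n(p)\geq\Lambda$ in the definition (\ref{eq:Cieqiv3Det}) of $\inC_n(\avc)$ is precisely the per-letter-averaged constraint $\frac1n\sum_i\tLambda(p(\cdot|\theta_i),\theta_i)\geq\Lambda$, which is exactly the constraint $\frac1n\sum_i\tlambda_i\geq\Lambda$ appearing in (\ref{eq:Cieqiv2Det}) once we identify $\tlambda_i$ with $\tLambda(p(\cdot|\theta_i),\theta_i)$.

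Next I would carry out the standard single-letterization argument. Grouping the time indices by the value of $\theta_i$ (equivalently, by the type $P_T$ of $\theta^n$), and using that the channel, the cost functions, and the symmetrizability condition are all memoryless and depend on time only through $\theta_i$, one shows that both optimizations reduce to a convex program over the conditional marginals $p(x|t)$ and $q(s|t)$. In the $\inR_n(\avc)$ formulation one first maximizes, for each $i$ separately, over $\omega_i$ and $\tlambda_i$ subject to $\inC_{\theta_i}(\omega_i,\tlambda_i,\lambda_i)$ being as large as possible given the feasibility constraint $\tLambda(p,\theta_i)\geq\tlambda_i$; by definition (\ref{eq:CtolDet}), $\inC_t(\omega,\Delta,\lambda)$ is the min-max mutual information under an input distribution whose symmetrizability cost is at least $\Delta$, so choosing the $\omega_i$'s and $\tlambda_i$'s optimally is the same as choosing, for each time-class, the conditional input law $p(\cdot|t)$ directly. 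One then recognizes $\frac1n\sum_i I_{q_i}(X;Y|T=\theta_i)=I_q(X;Y|T)$ with $(T,S,X)\sim P_T(t)p(x|t)q(s|t)$, and the outer minimization over $(\lambda_i)$ with $\frac1n\sum_i\lambda_i\leq\Lambda$ becomes the minimization over $q(s|t)$ with $\E_q l(S)\leq\Lambda$. Both directions of the inequality $\inR_n(\avc)\lessgtr\inC_n(\avc)$ follow: the ``$\leq$'' direction by plugging an optimizer of the right-hand side into the left, and ``$\geq$'' by the reverse substitution, using convexity/concavity to legitimately pass from time-varying allocations to their type-averaged counterparts.

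The main obstacle I anticipate is the careful handling of the interaction between the two nested optimizations and the ordering of $\min$ and $\max$. In $\inC_n(\avc)$ the outer variable is $q(s|t)$ (the jammer) and the inner is $p(x|t)$ (the user) with the constraint $\tLambda_n(p)\geq\Lambda$; in $\inR_n(\avc)$ the jammer's allocation $(\lambda_i)$ is chosen first, then the user's $(\omega_i,\tlambda_i)$. One must verify that the symmetrizability-cost constraint, which couples all time indices only through an average, does not obstruct the exchange between the per-letter picture and the aggregated picture — in particular that an optimal user response can always be taken to depend on $i$ only through $\theta_i$, which requires a convexity argument (time-sharing over indices with the same parameter value, using concavity of $I_q(X;Y|T=t)$ in $p(x|t)$ and the linearity of $\tLambda(p,t)$ in $p$). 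A minor additional point is to confirm the threshold $L_n^*>\Lambda$ guarantees the inner feasible set $\{p:\tLambda_n(p)\geq\Lambda\}$ (equivalently $\{(\omega_i,\tlambda_i):\frac1n\sum\tlambda_i\geq\Lambda\}$) is nonempty, so that neither optimization is vacuous; this is immediate from the definition (\ref{eq:1Lstar}) of $L_n^*$ as the maximum of $\tLambda_n(p)$ over admissible $p$. With these points settled, the equality follows as in the proof of Lemma~\ref{lemm:Ciequiv}.
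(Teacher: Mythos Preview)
Your proposal is correct and follows essentially the same route as the paper's proof: handle the trivial case $L_n^*\leq\Lambda$, then use a convexity/concavity averaging argument (the paper's Lemma~\ref{lemm:PsametDet}) to show that optimal allocations $(\omega_i,\tlambda_i,\lambda_i)$ may be taken constant on each parameter-class $\{i:\theta_i=t\}$, after which the time sum collapses to a $P_T$-weighted sum over $\Tset$ and the slack variables $\omega_t,\tlambda_t,\lambda_t$ are eliminated in favor of $p(x|t),q(s|t)$ exactly as in Lemma~\ref{lemm:Ciequiv}. One small correction: $\tLambda(p,t)$ is not linear in $p$ but concave (being a minimum of linear functionals over the symmetrizing $J$'s), which is precisely the inequality direction you need for the averaging step to preserve the constraint $\frac{1}{n}\sum_i\tlambda_i\geq\Lambda$.
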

The proof of Lemma~\ref{lemm:CiequivDet} is given in Appendix~\ref{app:CiequivDet}. Theorem~\ref{theo:PCavc}, Corollary~\ref{coro:LCavc01}, and Lemma~\ref{lemm:CiequivDet} yield the following consequence.
\begin{coro}
\label{coro:PrCavEDet}
The deterministic code capacity of the AVC $\avc$ with fixed parameters, under input constraint $\plimit$ and state constraint $\Lambda$, is given by 
\begin{align}
\Cavc=\liminf_{n\rightarrow\infty} \inR_{n}(\avc)  \,,\; \text{if $L_n^*\neq\Lambda$ for sufficiently large $n$ and (\ref{eq:Symmassumption}) holds.} \,.
\end{align}
Furthermore, if the minimum in (\ref{eq:LambdaOig1}) is attained by a $0$-$1$ law, for every $p(x)$ with 
$\E \cost(X)\leq\plimit$, and for all $t\in\Tset$,
then
\begin{align}
\Cavc=\liminf_{n\rightarrow\infty} \inR_{n}(\avc)  \,,\;
\end{align}
for all values of $\{L_n \}_{n\geq 1}$.
\end{coro}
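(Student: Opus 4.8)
The plan is to obtain both statements of Corollary~\ref{coro:PrCavEDet} by chaining the three preceding results, so that essentially no new argument is needed beyond a short bookkeeping step. For the first assertion I would begin from Theorem~\ref{theo:PCavc}: under the hypotheses that $L_n^*\neq\Lambda$ for all sufficiently large $n$ and that (\ref{eq:Symmassumption}) holds, that theorem gives $\Cavc=\liminf_{n\rightarrow\infty}\inC_n(\avc)$. I would then invoke Lemma~\ref{lemm:CiequivDet}, which states $\inR_n(\avc)=\inC_n(\avc)$ for every fixed $n$, and substitute term by term inside the $\liminf$ to conclude $\Cavc=\liminf_{n\rightarrow\infty}\inR_n(\avc)$. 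This completes the first part.

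For the second assertion the only extra point is to verify that the $0$-$1$ law hypothesis, which is phrased for the constant-parameter cost (\ref{eq:LambdaOig1}), delivers exactly the hypothesis of Corollary~\ref{coro:LCavc01}, which is phrased for the cost (\ref{eq:LambdaOig}). First, if the minimum in (\ref{eq:LambdaOig1}) is attained for every $p(x)$ with $\E\cost(X)\leq\plimit$ and every $t\in\Tset$, then in particular the set of symmetrizers of $W_{Y|X,S,T}(\cdot|\cdot,\cdot,t)$ is nonempty for each $t$, so every channel $W_{Y|X,S,T}(\cdot|\cdot,\cdot,t)$ is symmetrizable; this is the standing assumption of Corollary~\ref{coro:LCavc01}, and it also makes (\ref{eq:Symmassumption}) hold trivially with $|\Iset(n)|=0$. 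Next I would use the identity (\ref{eq:LambdaOigEq}), $\tLambda_n(p(\cdot|\cdot))=\frac{1}{n}\sum_{i=1}^n\tLambda(p(\cdot|\theta_i),\theta_i)$: the objective minimized in (\ref{eq:LambdaOig}) is a sum whose $i$-th term depends only on the block $J_{\theta_i}$ of the collection $\{J_t\}_{t\in\Tset}$, so the minimization decouples across the distinct parameter values, and an optimal collection is obtained by choosing, for each $t$, a minimizer of (\ref{eq:LambdaOig1}) with input law $p(\cdot|t)$. By hypothesis each such minimizer may be taken to be a deterministic kernel $J_t:\Xset\rightarrow\Sset$, hence the minimum in (\ref{eq:LambdaOig}) is attained by a $0$-$1$ law for every $n$ and every admissible $p(x|t)$.

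With this in place, Corollary~\ref{coro:LCavc01} applies and yields $\Cavc=\liminf_{n\rightarrow\infty}\inC_n(\avc)$ for all values of $\{L_n^*\}_{n\geq1}$, in particular without excluding the boundary case $L_n^*=\Lambda$; applying Lemma~\ref{lemm:CiequivDet} once more inside the $\liminf$ replaces $\inC_n(\avc)$ by $\inR_n(\avc)$ and gives the second assertion. The only step calling for any care --- the ``main obstacle,'' modest as it is --- is the decoupling argument for the symmetrizability cost: one must check that selecting per-coordinate minimizers is consistent (it is, since whenever $\theta_i=\theta_j$ both terms share the \emph{same} block $J_{\theta_i}=J_{\theta_j}$ of the collection, so no conflict arises) and that a collection of deterministic kernels $\{J_t\}_{t\in\Tset}$ indeed qualifies as a $0$-$1$ law in the sense intended in Corollary~\ref{coro:LCavc01}.
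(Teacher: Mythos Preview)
Your proposal is correct and follows exactly the approach the paper intends: the paper simply states that the corollary is a consequence of Theorem~\ref{theo:PCavc}, Corollary~\ref{coro:LCavc01}, and Lemma~\ref{lemm:CiequivDet}, without spelling out the chaining. Your additional bookkeeping step --- using the decomposition (\ref{eq:LambdaOigEq}) to show that a per-parameter $0$-$1$ law for (\ref{eq:LambdaOig1}) yields a $0$-$1$ law for (\ref{eq:LambdaOig}), and noting that the symmetrizability of every $W_{Y|X,S,T}(\cdot|\cdot,\cdot,t)$ is implicit in the hypothesis --- is the correct bridge between the two formulations and is left implicit in the paper.
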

The corollary will also be useful in our analysis of the AVC with colored Gaussian noise.

\begin{example}
Consider the arbitrarily varying binary symmetric channel (BSC) with fixed parameters, 
\begin{align}
Y=X+S+Z_T \;\mod 2
\end{align}
with $\Xset=\Sset=\Tset=\{0,1\}$, where $Z_t\sim\text{Bernoulli}(\eps_t)$, for $t=0,1$, $\eps_0<\eps_1<\frac{1}{2}$.
Consider a parameter sequence with an empirical distribution $P_T(0)=P_T(1)=\frac{1}{2}$, say
 $\theta_{2i}=0$ and $\theta_{2i-1}=1$ for $i=1,2,\ldots$. Suppose that the user and the jammer are subject to input constraint $\plimit$ and state constraint
$\Lambda$, respectively, 
 with Hamming weight cost functions, \ie $\cost(x)=x$ and $l(s)=s$.

For the constant-parameter AVC, we have by Definition~\ref{def:symmetrizable} that $W_{Y|X,S,T=t}$ is symmetrized by any symmetric distribution, \ie
with $J(s|1)=1-J(s|0)$. Denoting $\zeta=J(1|1)=1-J(1|0)$, we have that
\begin{align}
\tLambda(P_X,t)=& \min_{0\leq \zeta\leq 1}  [(1-\zeta)P_X(0)+\zeta P_X(1)] =\min(  P_X(0),P_X(1) ) \,.
\end{align}
Based on the analysis by Csisz\'ar and Narayan \cite[Example 1]{CsiszarNarayan:88p}, the capacity of the constant-parameter AVC under input constraint $\omega$ and state constraint $\lambda$ is given by
\begin{align}
\widetilde{\opC}_t(\omega,\lambda)=\begin{cases}
0 &\text{if $\omega<\lambda < \frac{1}{2}$}\\
h(\omega*\lambda*\eps_t)-h(\omega*\lambda*\eps_t) &\text{if $\lambda<\omega < \frac{1}{2}$} \\
1-h(\omega*\lambda*\eps_t) &\text{if $\lambda< \frac{1}{2}\leq \omega$} \\
0 &\text{if $\lambda\geq \frac{1}{2}$}
\end{cases}
\end{align}
where $h(x)=-x\log x-(1-x)\log x$ is the binary entropy function and $a*b=(1-a)b+a(1-b)$.

Suppose that
\begin{align}
\eps_0=\frac{1}{4} \,,\; \eps_1=\frac{5}{12} \,,\; \plimit=\frac{5}{16}  \,,\; \Lambda=\frac{1}{4} \,.
\end{align}
For those values, we have that
\begin{align}
L_n^*=\max_{P_{X|T} \,:\; \frac{1}{2}\E(X|T=0)+\frac{1}{2}\E(X|T=1)\leq\plimit} 
\left[ \frac{1}{2}P_{X|T}(1|0)+ \frac{1}{2}P_{X|T}(1|1) \right]=\plimit=\frac{5}{16} \,.
\end{align}
Thus, by Corollary~\ref{coro:PrCavEDet}, the capacity is given by
\begin{align}
\Cavc= h( \frac{5}{16}*\frac{7}{16} )-h(\frac{7}{16}) 
= \frac{1}{2} \left( h(\omega_0*\lambda_0*\eps_0)-h(\lambda_0*\eps_0)  \right)+
\frac{1}{2} \left( h(\omega_1*\lambda_1*\eps_1)-h(\lambda_1*\eps_1)  \right) 
\end{align}
with $\omega_0=\omega_1=\frac{5}{16}$, $\lambda_0=\frac{3}{8}$ and $\lambda_1=\frac{1}{8}$. 
Whereas, using two separate codes for $W_{Y|X,S,T=0}$ and $W_{Y|X,S,T=1}$ independently, the rate achieved is 
\begin{align}
\frac{1}{2}\widetilde{\opC}_0(\omega_0,\lambda_0)+\frac{1}{2}\widetilde{\opC}_1(\omega_1,\lambda_1)
=0+\frac{1}{2} \left( h(\omega_1*\lambda_1*\eps_1)-h(\lambda_1*\eps_1)  \right)< \Cavc \,.
\end{align}
This can be viewed as an instance of the more general phenomenon of 
super-additivity, that holds for any product AVC which is composed of a symmetrizable AVC and a non-symmetrizable AVC 
\cite[Theorem 6]{SchaeferBochePoor:16c}.
\end{example}

\subsection{Example: Channel with Fadings}
To illustrate our results, we give another example.
\begin{example}
\label{example:Fading}
Consider an arbitrarily varying fading channel,
\begin{align}
Y_i=\theta_i X_i+S_i+Z_i \,,
\end{align}
with a Gaussian noise sequence $Z^n$ that is i.i.d. $\sim \mathcal{N}(0,\sigma^2)$, where
 $\theta_1,\theta_2,\ldots$ is a sequence of fixed fading coefficients.
Recently, Hosseinigoki and Kosut \cite{HosseinigokKosut:19c1} considered this channel with a random memoryless sequence of fading coefficients. Yet, we assume that the fading coefficients are fixed, and belong to a finite set $\Tset$.
Intuitively, the jammer would like to confuse the decoder by sending a state sequence that simulates the sequence
$\theta^n X^n\equiv (\theta_i X_i)_{i=1}^n$. Indeed, as seen below, the deterministic code capacity is positive only if there exists an input distribution such that $\frac{1}{n} \sum_{i=1}^n \theta_i^2 \E X_i^2>\Lambda$, in which case the jammer cannot simulate $\theta^n X^n$ without violating the state constraint.

Although we previously assumed that the alphabets are finite, our results can be extended to the continuous case as well, using standard discretization techniques \cite{BBT:59p,Ahlswede:78p} \cite[Section 3.4.1]{ElGamalKim:11b}. 
By Theorem~\ref{theo:PrCav}, the random code capacity is given by 
\begin{align}
\rCav=\liminf_{n\rightarrow\infty} \inC_{n}^{\rstarC}\hspace{-0.05cm}(\avc)
\,.
\end{align}
Then, we show that 
\begin{align}
\inC_{n}^{\rstarC}\hspace{-0.05cm}(\avc)=&  \min_{  \lambda(t) \,:\;  \E \lambda(T) \leq \Lambda }  \;
\max_{  \omega(t) \,:\;  \E \omega(T) \leq \plimit  }
\E \left[  \frac{1}{2}\log\left( 1+\frac{T^2 \omega(T) }{\lambda(T)+\sigma^2} \right) \right] \,,
\label{eq:Cieqiv2Fading}
\end{align}
with expectation over $T\sim P_T$,
where $P_T$ is the type of the sequence $\theta^n$.

As for the deterministic code capacity, we show that the minimum in (\ref{eq:LambdaOig}) is attained by a $0$-$1$ law that gives probability $1$ to $s=\theta_i^2 x$, hence we can determine the capacity using Corollary~\ref{coro:LCavc01}.
We show that the minimal symmetrizability cost is given by
\begin{align}
\tLambda_n(F_{X|T})=\frac{1}{n} \sum_{i=1}^n \theta_i^2 \E[X^2|T=\theta_i]=\E(T^2 X^2) 
 \,,
\label{eq:tLambdaF}
\end{align}
and deduce that the capacity of the AVC with fixed fading coeffients is given by 
\begin{align}
\Cavc=\liminf_{n\rightarrow\infty} \inC_{n}(\avc)  \,,\; 
\end{align}
with
\begin{align}
\inC_n(\avc)& \triangleq 
\begin{cases}
\min\limits_{  \lambda(t) \,:\; \E \lambda(T)\leq \Lambda } \;
\max\limits_{ \substack{ \omega(t)\,:\; \E\,\omega(T)\leq\plimit \,,\; \\ \E(T^2 \omega(T)) \geq \Lambda } } \;  \E \left[  \frac{1}{2}\log\left( 1+\frac{T^2 \omega(T) }{\lambda(T)+\sigma^2} \right) \right]  &\text{if $\max\limits_{\omega(t) \,:\; \E \omega(T)\leq \plimit } \E(T^2 \omega(T))> \Lambda$}\,,\\
  0 																 &\text{if $\max\limits_{\omega(t) \,:\; \E \omega(T)\leq \plimit } \E(T^2 \omega(T))\leq \Lambda$}
\end{cases}	
\,.
\label{eq:Cieqiv3DetFading} 
\end{align}
The derivation is given in Appendix~\ref{app:Fading}.
We note that the last expression has the same form as the capacity formula established by Hosseinigoki and Kosut \cite{HosseinigokKosut:19c1} for a random memoryless sequence of fading coefficients.

Next, we extend the result above to continuous fading coefficients, where $\Tset=[-t_0,t_0]\subset\mathbb{R}$. 
First, we  observe that the formulas above can also be written as
\begin{align}
\inC_{n}^{\rstarC}\hspace{-0.05cm}(\avc)=&  \min_{ \substack{ \lambda_1,\ldots,\lambda_n \,:\; \\ \frac{1}{n} \sum_{i=1}^n \lambda_i \leq \Lambda } }  \;
\max_{ \substack{ \omega_1,\ldots,\omega_n \,:\; \\ \frac{1}{n} \sum_{i=1}^n \omega_i \leq \plimit } }
 \frac{1}{n} \sum_{i=1}^n \frac{1}{2}\log\left( 1+\frac{\theta_i^2 \omega_i }{\lambda_i+\sigma^2} \right)  \,,
\label{eq:Cieqiv2FadingEq}
\end{align}
and
\begin{align}
\inC_n(\avc)& =
\begin{cases}
\min\limits_{ \substack{ \lambda_1,\ldots,\lambda_n \,:\; \\ \frac{1}{n} \sum_{i=1}^n \lambda_i \leq \Lambda } }  \;
\max\limits_{ \substack{ \omega_1,\ldots,\omega_n \,:\; \\\frac{1}{n} \sum_{i=1}^n \omega_i \leq\plimit \,,\; \\ 
\frac{1}{n} \sum_{i=1}^n \theta_i^2 \omega_i \geq \Lambda } } \; 
 \frac{1}{n} \sum\limits_{i=1}^n   \frac{1}{2}\log\left( 1+\frac{\theta_i^2 \omega_i }{\lambda_i+\sigma^2} \right)   &\text{if $\max\limits_{ \substack{ \omega_1,\ldots,\omega_n \,:\;\\ \frac{1}{n} \sum_{i=1}^n \omega_i\leq \plimit }} 
\frac{1}{n} \sum\limits_{i=1}^n \theta_i^2 \omega_i > \Lambda$}\,,\\
  0 																 &\text{otherwise.}
\end{cases}	
\label{eq:Cieqiv3DetFadingEq} 
\end{align}
This follows from the same considerations as in the proofs of Lemma~\ref{lemm:Ciequiv} and Lemma~\ref{lemm:CiequivDet}.
Now, if the fading coefficients are continuous, then one may perform the discretization procedure in \cite[Section 3.4.1]{ElGamalKim:11b}. Hence, the  deterministic and random code capacities in the continuous case are also given by the limit infimum of the formulas
 (\ref{eq:Cieqiv2FadingEq}) and (\ref{eq:Cieqiv3DetFadingEq}), respectively.

\end{example}

\section{The Arbitrarily Varying Gaussian Product Channel}
\label{sec:def}
From this point on, we consider Gaussian AVCs, without parameters. In this section, we consider the Gaussian product channel.
Our results on the AVC with colored Gaussian noise, in the next section, are based on the capacity theorems of the AVC with fixed parameters, in the previous section, and on the analysis in the current section.

	\subsection{Channel Description}
	\label{subsec:pGchannels}
	The state-dependent Gaussian product channel consists of a set of $d$ parallel channels,
\begin{align}
Y_j=  X_j+S_j+Z_j \,,\; j\in [1:d] \,,
\end{align}
 where $j$ is the channel index, $d$ is the dimension (number of channels), and 
$Z^d$ is a  Gaussian vector with zero mean and covariance matrix $K_Z$. 
Let $\Xvec_j=(X_{j,i})_{i=1}^n$, $\Svec_j=(S_{j,i})_{i=1}^n$ and $\Zvec_j=(Z_{j,i})_{i=1}^n$  denote the input, state and noise sequences associated with the $j$th channel, respectively,
where $i\in [1:n]$ is the time index, and let 
$\Xvec^d=(\Xvec_j)_{j=1}^d$, $\Svec^d=(\Svec_j)_{j=1}^d$  and $\Zvec^d=(\Zvec_j)_{j=1}^d$.  The corresponding  output of the product channel is the vector sequence $\Yvec^d=\Xvec^d+\Svec^d+\Zvec^d$.

The Gaussian arbitrarily varying product channel (AVGPC) is a state-dependent Gaussian product channel 
with $d$ state sequences $(\Svec_1,\ldots,\Svec_d)$ of unknown distribution,  not necessarily independent nor stationary. That is, $(\Svec_1,\ldots,\Svec_d)\sim  F_{\Svec_1,\ldots,\Svec_d}$, 
where $F_{\Svec_1,\ldots,\Svec_d}$ is an unknown  joint cumulative distribution function (cdf) over $\mathbb{R}^{nd}$. In particular, $F_{\Svec_1,\ldots,\Svec_d}$ could give probability mass $1$ to a particular  sequence of state vectors  $(\svec_1,\ldots,\svec_d)\in\mathbb{R}^{nd}$.
The channel is subject to input constraint $\plimit>0$ and state constraint $\Lambda>0$,
\begin{align}
&\sum_{j=1}^d\norm{\Xvec_j}^2 \leq n\plimit \quad\text{w.p. $1$} \,, \nonumber\\
&\sum_{j=1}^d\norm{\Svec_j}^2\leq n\Lambda \quad\text{w.p. $1$}
\,.
\end{align}

\subsection{Coding}
\label{subsec:pGcoding}
We introduce preliminary definitions for the AVGPC. 

\begin{definition}[Code] 
\label{def:pGcapacity}
A $(2^{nR},n)$ code for the AVGPC consists of the following;   
a message set $[1:2^{nR}]$, 
where it is assumed throughout that $2^{nR}$ is an integer,
a sequence of $d$ encoding functions 
$\fvec_j:  [1:2^{nR}] \rightarrow \mathbb{R}^{n}$, for $j\in [1:d]$,
such that
\begin{align}
\sum_{j=1}^d  \norm{\fvec_j(m)}^2 
\leq n\plimit \,,\;\text{for $m\in [1:2^{nR}]$} \,,
\end{align}
  and a decoding function
$
\dec: \mathbb{R}^{nd}\rightarrow [1:2^{nR}]  
$. 
Given a message $m\in [1:2^{nR}]$, the encoder transmits $\xvec_j=\fvec_j(m)$, for $j\in [1:d]$.
 The codeword is then given by 
$
\xvec^d= \fvec^d(m) \triangleq
 \left( \fvec_1(m),\fvec_2(m),\ldots,\fvec_d(m)   \right) 
$. 
The decoder  receives the channel outputs $\yvec^d=(\yvec_1,\ldots,\yvec_d)$, and finds an estimate of the message $\hm=g(\yvec^d)$.  
We denote the code by $\code=\left(\fvec^d,\dec \right)$.

 Define the conditional probability of error of a code $\code$ given the sequence 
 $\svec^d=(\svec_1,\ldots,\svec_d)$ by  
\begin{align}
\label{eq:Gpcerr}
&\Gcerr(\code)\triangleq 
\frac{1}{2^{ nR }}\sum_{m=1}^{2^ {nR}}
\int_{ \yvec^d\in\mathbb{R}^{nd} \,:\; g(\yvec^d)\neq m} 
d\yvec^d  \cdot f_{\Yvec^d|m,\svec^d}(\yvec^d) \,,
\end{align}
where $f_{\Yvec^d|m,\svec^d}(\yvec^d)=\prod_{i=1}^n f_{Z^d}(y^d_i-\mathrm{f}^d_i(m)-s^d_i)$, with
\begin{align}
&f_{Z^d}(z^d)= 
 \frac{1}{ \sqrt{ (2\pi)^{d}  |K_{Z}| } } 
e^{ -\frac{1}{2} z^d K_{Z}^{-1} (z^d)^T } \,.
\end{align}
%
A code $\code=(\fvec^d,\dec)$ is called a
$(2^{nR},n,\eps)$ code for the AVGPC  
if 
\begin{align}
\label{eq:Gperr}
 \Gcerr(\code) 
\leq \eps \,,
\quad\text{for all $\svec^d\in\mathbb{R}^{nd}$ with $\sum_{j=1}^d \norm{\svec_j}^2 \leq n\Lambda$} \,.
\end{align}

  We say that a rate $R$ is achievable 
	if for every $\eps>0$ and sufficiently large $n$, there exists a  $(2^{nR},n,\eps)$ code for the AVGPC.
The operational capacity is defined as the supremum of all achievable rates, and it is denoted by $\KCavc$. 
 We use the term `capacity' referring to this operational
meaning, and in some places we call it the deterministic code capacity to emphasize that achievability is measured with respect to  deterministic codes.   
\end{definition}

We proceed now to coding schemes 
when using stochastic-encoder stochastic-decoder pairs with common randomness.

\begin{definition}[Random code]
\label{def:GPcorrC} 
A $(2^{nR},n)$ random code for the AVGPC consists of a collection of 
$(2^{nR},n)$ codes $\{\code_{\gamma}=(\fvec_{\gamma}^d ,\dec_\gamma)\}_{\gamma\in\Gamma}$, along with a pmf $\mu(\gamma)$ over the code collection $\Gamma$. 
We denote such a code by $\gcode=(\mu,\Gamma,\{\code_{\gamma}\}_{\gamma\in\Gamma})$.
Analogously to the deterministic case,  a $(2^{nR},n,\eps)$ random code 
 for the AVGPC 
satisfies 
\label{eq:GrcodeReq}
\begin{align}
&
\sum_{\gamma\in\Gamma}\mu(\gamma) 
  \sum_{j=1}^d \norm{\fvec_{\gamma,j}(m)}^2    \leq n\plimit 
\,,\; \text{for all $m\in [1:2^{nR}]$}\,,  \label{eq:GcodeInputCr}
\intertext{and} 
&\Gcerr(\gcode)\triangleq \sum_{\gamma\in\Gamma} \mu(\gamma)  \Gcerr(\code_\gamma) 
\leq \eps \text{
for all $\svec^d\in\mathbb{R}^{nd}$ with $\sum_{j=1}^d \norm{\svec_j}^2 \leq n\Lambda$}\,. 
\end{align}
The capacity achieved by random codes is denoted by $\KrCav$, and it 
 is referred to as the \emph{random code capacity}.
\end{definition}


\subsection{Related Work}
Consider the AVGPC with parallel Gaussian channels,  where the covariance matrix of the additive noise is 
\begin{align}
\Sigma=\diag\{\sigma_1^2,\ldots,\sigma_d^2\} \,,
\end{align}
 \ie
 $Z_1,\ldots,Z_d$ are independent and $Z_j\sim\mathcal{N}(0,\sigma_j^2)$. 
Denote the random code capacity of the AVGPC with parallel channels by $\sigmarCav$.
%
Hughes and Narayan \cite{HughesNarayan:88p} have shown that the solution for the random code capacity is given by ``double" water filling, where the jammer performs water filling first, attempting to whiten the overall noise as much as possible, and then the user performs water filling taking into account the total noise power, which is contributed by both the channel and the jammer.
The formal definitions are given below.
Let
\begin{align}
N_j^*=\left[ \beta-\sigma_j^2 \right]_{+} \,,\; j\in [1:d]\,
\label{eq:GPNjdef}
\end{align}
with $[t]_{+}=\max\{0,t\}$, where $\beta\geq 0$ is chosen to satisfy
\begin{align}
\sum_{j=1}^d \left[ \beta-\sigma_j^2 \right]_{+}=\Lambda \,.
\label{eq:GPbetadef}
\end{align}
Next, let
\begin{align}
P_j^*=
\left[ \alpha-(N_j^*+\sigma_j^2) \right]_{+} \,,\; j\in [1:d]\,,
\label{eq:GPPjdef}
\end{align}
 where $\alpha\geq 0$ is chosen to satisfy
\begin{align}
\sum_{j=1}^d \left[ \alpha-(N_j^*+\sigma_j^2) \right]_{+}=\plimit \,.
\label{eq:GPalphadef}
\end{align}
We can now define Hughes and Narayan's capacity formula \cite{HughesNarayan:88p},  
\begin{align}
\sigmarICav\triangleq& \sum_{j=1}^d \frac{1}{2} \log\left( 1+\frac{P_j^*}{N_j^*+\sigma_j^2}  \right) \,.
\label{eq:sigmarICavdef}
\end{align}

 \begin{theorem}[see {\cite{HughesNarayan:88p}}] 
\label{theo:GPavcRand}
The random code capacity of the AVGPC is given by
\begin{align}
\sigmarCav=\sigmarICav \,.
\end{align}
\end{theorem}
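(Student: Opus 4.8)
The plan is to establish the two inequalities $\sigmarCav\le\sigmarICav$ and $\sigmarCav\ge\sigmarICav$ separately, with an auxiliary saddle-point lemma doing the bookkeeping for the second. Write $f(\mathbf P,\mathbf N)=\sum_{j=1}^d\frac12\log\bigl(1+\tfrac{P_j}{N_j+\sigma_j^2}\bigr)$ on the simplices $\mathcal P=\{\mathbf P\ge\mathbf 0:\sum_jP_j\le\plimit\}$ and $\mathcal N=\{\mathbf N\ge\mathbf 0:\sum_jN_j\le\Lambda\}$, and note that $\sigmarICav=f(\mathbf P^*,\mathbf N^*)$, where $\mathbf P^*,\mathbf N^*$ are the double-water-filling allocations of (\ref{eq:GPNjdef})--(\ref{eq:GPalphadef}). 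Two facts are used repeatedly: by (\ref{eq:GPPjdef})--(\ref{eq:GPalphadef}), $\mathbf P^*$ is exactly the Shannon water-filling maximizer of $f(\cdot,\mathbf N^*)$ over $\mathcal P$, so $\max_{\mathbf P}f(\mathbf P,\mathbf N^*)=\sigmarICav$; and, term by term, $f$ is concave in $\mathbf P$ and convex in $\mathbf N$, since $N_j\mapsto\log(1+\tfrac{P_j}{N_j+\sigma_j^2})=\log(N_j+\sigma_j^2+P_j)-\log(N_j+\sigma_j^2)$ has positive second derivative.

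For the converse, let the jammer disregard the code and play a truncated i.i.d.\ Gaussian state: draw $S_{j,i}\sim\mathcal N\bigl(0,(1-\delta)N_j^*\bigr)$ independently over $i\in[1:n]$, $j\in[1:d]$ and independently of the message and the common randomness, overwriting $\Svec^d$ by $\mathbf 0$ on the exponentially improbable event $\sum_j\norm{\Svec_j}^2>n\Lambda$, so that the state constraint holds w.p.\ $1$. Conditioned on each value $\gamma$ of the common randomness the channel is, up to a vanishing correction from the truncation, a memoryless Gaussian product channel with per-branch noise variance $(1-\delta)N_j^*+\sigma_j^2$ and state independent of the input; hence Fano's inequality gives $R\le\frac1n\sum_j I(\Xvec_j;\Yvec_j)+o(1)\le\sum_j\frac12\log\bigl(1+\tfrac{\bar P_j}{(1-\delta)N_j^*+\sigma_j^2}\bigr)+o(1)$ with $\bar P_j=\tfrac1n\E\norm{\Xvec_j}^2$ and $\sum_j\bar P_j\le\plimit$. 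Maximizing over $\bar{\mathbf P}\in\mathcal P$ (Shannon water filling) and letting $\delta\to0$, $n\to\infty$ yields $\sigmarCav\le\max_{\mathbf P}f(\mathbf P,\mathbf N^*)=\sigmarICav$.

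For achievability, use a rotation-randomized spherical product code at the backed-off allocation $(1-\delta)\mathbf P^*$: for each value of the common randomness, draw independent codewords $\xvec^d(m)$ with $\xvec_j(m)$ uniform on the sphere of squared norm $n(1-\delta)P_j^*$ in $\mathbb{R}^n$, together with an independent uniform rotation applied to each of the $d$ branch-blocks (this is the standard device that, against \emph{any} state sequence, makes the true codeword almost orthogonal to $\Svec_j+\Zvec_j$ on each branch). The decoder is the usual universal Gaussian-AVC rule: accept $\hat m$ if, on every branch $j$, the residual $\yvec_j-\xvec_j(\hat m)$ is nearly orthogonal to $\xvec_j(\hat m)$, with the residual's empirical power used as the scale; a competing codeword is forced into strong anti-correlation with its own residual, and a type/counting argument then shows reliability against a state with branch powers $\Lambda_j=\tfrac1n\norm{\Svec_j}^2$ (arbitrary with $\sum_j\Lambda_j\le\Lambda$) as long as $R<\sum_j\frac12\log\bigl(1+\tfrac{(1-\delta)P_j^*}{\Lambda_j+\sigma_j^2}\bigr)$. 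Since the code commits to a power split but \emph{not} to a rate split and decoding is joint, uniform reliability over all admissible states requires only $R<\min_{\mathbf N\in\mathcal N}f\bigl((1-\delta)\mathbf P^*,\mathbf N\bigr)$; this is the step where the scalar Gaussian-AVC random-coding result \cite{HughesNarayan:87p} (stated in the introduction) is lifted to the product channel.

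It remains to identify that minimum --- the saddle-point (Nash/``double water filling'') lemma. Since $f$ is concave--convex on compact convex sets, Sion's minimax theorem \cite{sion:58p} gives $\max_{\mathbf P}\min_{\mathbf N}f=\min_{\mathbf N}\max_{\mathbf P}f$; concretely I would verify directly that $\mathbf N^*$ minimizes the convex map $\mathbf N\mapsto f(\mathbf P^*,\mathbf N)$ over $\mathcal N$ via its KKT conditions: using $N_j^*+\sigma_j^2=\max(\beta,\sigma_j^2)$ and $P_j^*=[\alpha-(N_j^*+\sigma_j^2)]_{+}$ (with $\alpha>\beta$, automatic since $\plimit>0$), the quantity $\tfrac12\bigl(\tfrac1{N_j^*+\sigma_j^2}-\tfrac1{N_j^*+\sigma_j^2+P_j^*}\bigr)$ is a common constant on $\{j:N_j^*>0\}$ and is no larger elsewhere. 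Hence $\min_{\mathbf N}f(\mathbf P^*,\mathbf N)=f(\mathbf P^*,\mathbf N^*)=\sigmarICav$, and by continuity of $\mathbf P\mapsto\min_{\mathbf N}f(\mathbf P,\mathbf N)$ the back-off loss vanishes as $\delta\to0$, giving $\sigmarCav\ge\sigmarICav$. The main obstacle is the achievability step: rigorously proving that a rotation-randomized product code renders an \emph{arbitrary} jamming vector no worse than coordinatewise Gaussian noise of the same empirical power, \emph{uniformly} over the adversary's unknown split of state power across the $d$ branches, and that a single joint decoder is universal over that split. This is precisely why the answer is a genuine min--max rather than $d$ uncoupled scalar problems --- committing to per-branch rates, instead of only per-branch powers with joint decoding, would be strictly suboptimal.
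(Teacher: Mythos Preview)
The paper does not supply its own proof of Theorem~\ref{theo:GPavcRand}: it is stated as a citation to Hughes and Narayan \cite{HughesNarayan:88p} (the theorem carries the tag ``see \cite{HughesNarayan:88p}'' and appears in the subsection titled \emph{Related Work}) and is used as a black box thereafter --- in the converse of Theorem~\ref{theo:GPavcDet} and in the proof of Theorem~\ref{theo:sKGPavcRand}. So there is nothing in this paper to compare your proposal against.

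That said, your reconstruction is the right shape and largely correct. The converse via a truncated i.i.d.\ Gaussian jammer at allocation $(1-\delta)\mathbf N^*$ is standard and sound, and your identification $\max_{\mathbf P}f(\mathbf P,\mathbf N^*)=f(\mathbf P^*,\mathbf N^*)$ follows directly from (\ref{eq:GPPjdef})--(\ref{eq:GPalphadef}). The saddle-point verification --- that $\mathbf N^*$ minimizes $f(\mathbf P^*,\cdot)$ via KKT --- is in fact carried out in this paper, not for Theorem~\ref{theo:GPavcRand} but inside the achievability proof of Theorem~\ref{theo:GPavcDet} in Appendix~\ref{app:GPavcDet}, equations (\ref{eq:nBasicCond})--(\ref{eq:WaterSol}); your sketch matches that computation, and the water-filling facts you invoke ($\alpha>\beta$, $N_j^*>0\Rightarrow P_j^*>0$) are exactly Lemma~\ref{lemm:WaterProp}.

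The only soft spot is the one you yourself flag: the achievability step, where a rotation-randomized spherical product code with a joint universal decoder must be shown to succeed against \emph{any} state sequence with branch powers $(\Lambda_j)$ summing to at most $\Lambda$, at any rate below $\sum_j\tfrac12\log\bigl(1+\tfrac{(1-\delta)P_j^*}{\Lambda_j+\sigma_j^2}\bigr)$. Your description is a plan rather than a proof; the actual error-probability bound, uniform over the jammer's unknown power split, is the substance of \cite{HughesNarayan:88p}. You are right that committing only to per-branch \emph{powers} while decoding jointly (rather than fixing per-branch \emph{rates}) is essential --- this is precisely the point the paper returns to in Section~\ref{sec:AVGPCdisc}, where it notes that independent scalar codes are suboptimal for the AVGPC.
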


\subsection{Observations on The Water Filling Game}
We give further observations on the results by Hughes and Narayan \cite{HughesNarayan:88p}, which will be useful in the sequel.

\subsubsection{Game Theoretic Interpretation}
By \cite[Theorem 3]{HughesNarayan:88p}, the random code capacity is the solution of the following optimization problem,
\begin{align}
\min \max \sum_{j=1}^d \frac{1}{2} \log\left( 1+\frac{P_j}{N_j+\sigma^2}  \right) \,,
\label{eq:GPoptimi}
\end{align}
 where the minimization is over the simplex
 $\Fset_{\text{state}}=\{ (N_1,\ldots,N_d) \,:\; \sum_{j=1}^d N_j\leq\Lambda \}$,
and the maximization is over the simplex
 $\Fset_{\text{input}}=\{ (P_1,\ldots,P_d) \,:\; \sum_{j=1}^d P_j\leq\plimit \}$.

The optimization problem is thus interpreted as a two-player zero-sum simultaneous game, played by the user and the jammer, where  $\Fset_{\text{input}}$ and $\Fset_{\text{state}}$ are the respective action sets.
The payoff function $v:\Fset_{\text{input}}\times \Fset_{\text{state}}\rightarrow \mathbb{R}$ is defined such that, given a profile $(P_1,\ldots,P_d,N_1,\ldots,N_d)$,
\begin{align}
v(P_1,\ldots,P_d,N_1,\ldots,N_d)\triangleq \sum_{j=1}^d \frac{1}{2} \log\left( 1+\frac{P_j}{N_j+\sigma^2}  \right) \,.
\end{align} 
We have defined a game with pure strategies, \ie the players' actions are deterministic.
In the communication model, the optimal coding and jamming scheme are random in general, yet the capacity can be achieved with deterministic power allocations, as in the game. 

The optimal power allocation has a water filling analogy (see \eg \cite[Section 9.4]{CoverThomas:06b}), 
where the jammer pours water of volume $\Lambda$ to a vessel,
 and then the encoder pours more water of volume $\plimit$.
The shape of the bottom of the vessel is determined by the noise variances $\sigma_1^2,\ldots,$$\sigma_d^2$.
 The jammer brings the water level to $\beta$, and then the encoder brings the water level to $\alpha$.
Water filling for the AVGPC is illustrated in Figure~\ref{fig:WaterF}, for $\plimit=13$, $\Lambda=8$, $d=10$, 
$(\sigma_j^2)_{j=1}^{10}=(5,8,3,1.5,2.5,1.8,3.2,9,4.5,5.5)$. 
The light shade ``fluid" is the jammer's water filling and the dark shade ``fluid" is the transmitter's.
 The resulting ``water levels" are $\beta=4$ and $\alpha=6$.
Then, substituting into  (\ref{eq:GPNjdef}) and (\ref{eq:GPPjdef}) yields the power allocations $(N_j^*)_{j=1}^{10}=(0,0,1,2.5,1.5,2.2,0.8,0,0,0)$ for the jammer and $(P_j^*)_{j=1}^{10}=(1,0,2,2,2,2,2,1.5,0.5)$ for the transmitter.

\begin{center}
\begin{figure}[htb]
        \centering
        \includegraphics[scale=0.5]
				{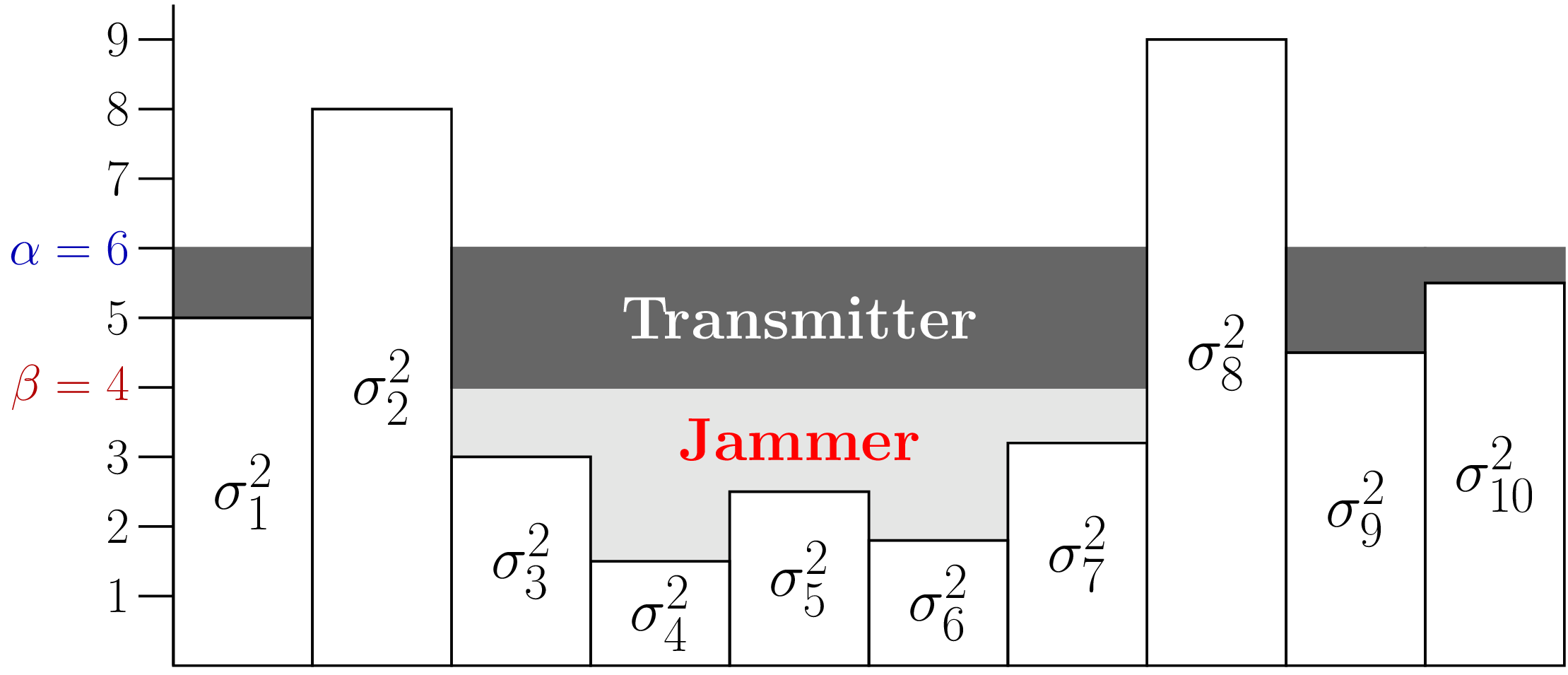}
        
\caption{Water filling for the AVGPC, for $\plimit=13$, $\Lambda=8$, $d=10$, 
$(\sigma_j^2)_{j=1}^{10}=(5,8,3,1.5,2.5,1.8,3.2,9,4.5,5.5)$. 
The light shade ``fluid" is the jammer's water filling and the dark shade ``fluid" is the transmitter's.
 The resulting ``water levels" are $\beta=4$ and $\alpha=6$, hence
$(N_j^*)_{j=1}^{10}=(0,0,1,2.5,1.5,2.2,0.8,0,0,0)$ and $(P_j^*)_{j=1}^{10}=(1,0,2,2,2,2,2,1.5,0.5)$.
  }
\label{fig:WaterF}
\end{figure}
\end{center}

One can easily prove the following properties of the random code capacity characterization.
\begin{lemma}
\label{lemm:WaterProp}
The quantities defined by (\ref{eq:GPNjdef})-(\ref{eq:sigmarICavdef}) 
 satisfy
\begin{align}
\begin{array}{ll}
1) \; \alpha>\beta			
&\quad 2)\;  N_j^*>0 \,\Rightarrow\; P_j^*>0 \; \forall\, j\in [1:d] \vspace{0.25cm}
\\ 
3) \; P_j^*+N_j^*+\sigma_j^2=\max(\alpha,\sigma_j^2)
&\quad 4) \; \sigmarICav= \sum_{j=1}^d \frac{1}{2} \log \frac{\max(\alpha,\sigma_j^2)}{\max(\beta,\sigma_j^2)} 
\,.
\end{array}
\end{align}
\end{lemma}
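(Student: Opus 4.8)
The plan is to reduce all four items to one elementary identity for the positive-part function, namely $[a]_{+}+c=\max(a+c,c)$. Applying it with $a=\beta-\sigma_j^2$ and $c=\sigma_j^2$ gives the key observation
$N_j^*+\sigma_j^2=[\beta-\sigma_j^2]_{+}+\sigma_j^2=\max(\beta,\sigma_j^2)$ for every $j\in[1:d]$. Consequently $P_j^*=[\alpha-\max(\beta,\sigma_j^2)]_{+}$, and the defining constraint (\ref{eq:GPalphadef}) can be rewritten as $\sum_{j=1}^d[\alpha-\max(\beta,\sigma_j^2)]_{+}=\plimit$. (As a preliminary remark I would note that $\beta$ and $\alpha$ are well defined, since $\beta\mapsto\sum_j[\beta-\sigma_j^2]_{+}$ and $\alpha\mapsto\sum_j[\alpha-\max(\beta,\sigma_j^2)]_{+}$ are continuous, nondecreasing and unbounded; uniqueness is not actually needed for the lemma.)

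Item 1 follows immediately: because $\plimit>0$, the sum $\sum_{j=1}^d[\alpha-\max(\beta,\sigma_j^2)]_{+}$ is strictly positive, so there is at least one index $j$ with $\alpha>\max(\beta,\sigma_j^2)\ge\beta$, whence $\alpha>\beta$. Item 2 is then a one-liner: if $N_j^*>0$ then $\beta>\sigma_j^2$, so $\max(\beta,\sigma_j^2)=\beta$ and $P_j^*=[\alpha-\beta]_{+}=\alpha-\beta>0$ by item 1. For item 3 I would use the positive-part identity a second time, now with $a=\alpha-\max(\beta,\sigma_j^2)$ and $c=\max(\beta,\sigma_j^2)$, to get $P_j^*+N_j^*+\sigma_j^2=[\alpha-\max(\beta,\sigma_j^2)]_{+}+\max(\beta,\sigma_j^2)=\max(\alpha,\max(\beta,\sigma_j^2))=\max(\alpha,\beta,\sigma_j^2)$, which equals $\max(\alpha,\sigma_j^2)$ since $\alpha>\beta$. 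Finally, item 4 drops out by writing $1+\frac{P_j^*}{N_j^*+\sigma_j^2}=\frac{P_j^*+N_j^*+\sigma_j^2}{N_j^*+\sigma_j^2}=\frac{\max(\alpha,\sigma_j^2)}{\max(\beta,\sigma_j^2)}$ (using item 3 in the numerator and the key observation in the denominator), and summing $\tfrac12\log$ of this over $j$.

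There is essentially no hard step here — the entire argument is bookkeeping around $[a]_{+}+c=\max(a+c,c)$. The only place that requires a moment's care is item 1, where one must invoke $\plimit>0$ rather than merely $\plimit\ge0$ to obtain the \emph{strict} inequality $\alpha>\beta$; this strictness is what makes items 2 and 3 go through, and through item 3 it feeds the clean ``max over max'' form used in item 4 and throughout the rest of the paper.
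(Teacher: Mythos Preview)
Your proof is correct and follows essentially the same route as the paper's: both hinge on the identity $N_j^*+\sigma_j^2=\max(\beta,\sigma_j^2)$, use $\plimit>0$ to find an index with $P_j^*>0$ and hence $\alpha>\beta$, and then derive parts 2--4 exactly as you do. Your presentation is marginally more streamlined in that you organize everything around the single identity $[a]_{+}+c=\max(a+c,c)$ from the outset, whereas the paper argues part~1 by a short case split on $N_j^*=0$ versus $N_j^*>0$; the content is the same.
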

For completeness, we give the proof of Lemma~\ref{lemm:WaterProp} is given in Appendix~\ref{app:WaterProp}. 
%
 Based on the water filling analogy of the power allocation above, part 1 of Lemma~\ref{lemm:WaterProp} is natural, since $\beta$ is interpreted as the water level after the jammer pours his share, and $\alpha$ is interpreted as the water level after the user pours \emph{additional} water after that (see Figure~\ref{fig:WaterF}). 
Part 3 and part 4 are not surprising either since, as can be seen in Figure~\ref{fig:WaterF}, 
the variance of the combined interference $(Z_j+S_j)$ 
is $\max(\beta,\sigma_j^2)$ and the variance of the channel output $Y_j$ is 
$\max(\alpha,\sigma_j^2)$.

Observe that an equivalent statement of part 2 is the following. If the user discards a channel, \ie assigns $P_j^*=0$ to the $j$th channel, then the jammer does not invest power in this channel either, \ie  $N_j^*=0$. This claim is also intuitive, and from a game theoretic perspective, it is an aspect of the jammer's rationality, as explained below. 
As mentioned above the optimization problem is interpreted as a two-player zero-sum simultaneous game
between the user and the jammer. 
The value of such a game is attained by a pair of strategies 
 which forms a Nash equilibrium \cite{vonNeumannMorgenstern:07b} (see also \cite{Owen:13b}\cite[Theorem 3.1.4]{LeytonBrownShoham:08b}). That is, if the user and the jammer were to agree to use the power allocation strategies $(P_j^*)_{j=1}^d$ and $(N_j^*)_{j=1}^d$, then neither player could profit by deviating from his original strategy, provided that the other player respects the agreement. Now, suppose that for some $j\in [1:d]$, $P_j^*=0$ and $N_j^*>0$. Then, the jammer is wasting energy, and can surely profit from diverging this energy to some other channel $j'$ with $P_{j'}^*>0$. Thus, such strategy profile is irrational and cannot be a Nash equilibrium. 

For a general AVC, a coding scheme which assumes that the jammer is using his optimal strategy would typically fail. The code needs to be robust standing against any state sequence that satisfies the state constraint. 
For example, consider a scalar Gaussian AVC \cite{HughesNarayan:87p}, specified by $\Yvec=\Xvec+\Svec+\Zvec$, under input constraint $\norm{\Xvec}^2\leq n\plimit$ and state constraint $\norm{\Svec}^2\leq n\Lambda$, where the noise sequence $\Zvec$ is i.i.d. $\sim\mathcal{N}(0,\sigma^2)$. Suppose that the receiver is using joint typicality decoding  for a Gaussian channel $\Yvec=\Xvec+\Vvec$, where $\Vvec$ is i.i.d. $\sim\mathcal{N}(0,\Lambda+\sigma^2)$ (see \cite[Section 9.1]{CoverThomas:06b}), corresponding to the optimal jamming strategy. Then, the jammer can fail the decoder by selecting a state sequence such that $\norm{\Svec}^2=\frac{n\Lambda}{2}$, 
for instance. As a result, there is a high probability that
 the square norm of the output sequence is below 
$n(\Lambda+\sigma^2-\delta)$, for small $\delta>0$, in which case the decoder cannot establish  joint typicality and declares an error.
The same principle holds in our problem. The user cannot assume that the jammer is using his optimal power allocation,  and a reliable code must be robust standing against any power allocation of the jammer.

\subsubsection{Multiple Access Channel Analogy}

Water filling in two (or more) stages appears in other settings in the literature, \eg
\cite{ChengVerdu:93p,LaiElGamal:08p,WCLM:99p,YuCioffi:02p}.
Consider a Gaussian product multiple access channel (MAC), where $Y_j=X_{1,j}+X_{2,j}+Z_j$, $j\in [1:d]$, under the input constraints 
$\norm{\Xvec_1^d}^2\leq n\plimit$ and $\norm{\Xvec_2^d}^2\leq n\Lambda$.
This can be viewed as a different variation of the AVGPC where a second transmitter replaces the jammer.
By \cite{ChengVerdu:93p}, 
a corner point of the capacity region can be achieved by applying water filling to the total power in the first step, and then to the power of User 2 in the second step.
Specifically, by \cite[Section III.B.]{ChengVerdu:93p}, the optimal power allocations $(P_j^*)_{j=1}^d$ and $(N_j^*)_{j=1}^d$, for Encoder 1 and Encoder 2, respectively, which achieve a corner point of the capacity region,
 satisfy 
\begin{align}
& P_j^*+N_j^*=\left[ \alpha-\sigma_j^2 \right]_{+} \,,\; j\in [1:d]\,,
\label{eq:GPNjdefMAC}
\intertext{such that $\sum_{j=1}^d (P_j^*+N_j^*)=\plimit+\Lambda$,
and}
& N_j^*=\left[ \beta-\sigma_j^2 \right]_{+} \,,\; j\in [1:d]\,,
\label{eq:GPPjdefMAC}
\end{align}
such that $\sum_{j=1}^d N_j^*=\Lambda$. 
Following part 3 of Lemma~\ref{lemm:WaterProp}, it can be seen that the strategy above is equivalent to 
(\ref{eq:GPNjdef})-(\ref{eq:GPalphadef}). The total power allocation in (\ref{eq:GPNjdefMAC}) seems natural in order to  maximize the sum rate. Though, our presentation in (\ref{eq:GPNjdef})-(\ref{eq:GPalphadef}) is intuitive for the Gaussian product MAC as well. Indeed, 
using successive cancellation decoding, the receiver estimates  the transmission of User 1  while treating the transmission of User 2 as noise, and then subtracts the estimated sequence from the received sequence to decode the transmission of User 2.
Hence, decoding for User 1 is analogous to the decoder in our problem.
Nevertheless, in the next section, we show that the deterministic code capacity in our adversarial problem has a different behavior.
%
%

Another water filling game is described by Lai and El Gamal in  \cite{LaiElGamal:08p}, who considered the flat fading MAC $Y=h_1 X_1+h_2 X_2+Z$ with selfish users, 
where the fading coefficients  are continuous random variables, distributed according to $(h_1,h_2)\sim\mu$.
Suppose that the users are subject to average input constraints, $\E_\mu \norm{\Xvec_1}^2\leq n\plimit$ and 
$\E_\mu \norm{\Xvec_2}^2\leq n\Lambda$.
As shown in  \cite{LaiElGamal:08p}, a maximum sum-rate point on the capacity region boundary is achieved if the users perform water filling treating each other's transmission as noise. 
It is further shown that opportunistic communication is optimal, where User 1 only transmits if his water level times fading coefficient is at least as high as that of User 2, and vice versa.
That is, the power allocations of the users are given by
\begin{align}
P_{h_1,h_2}^*= \begin{cases}
\left[ \beta_1-\sigma^2/h_1 \right]_{+} &
\text{if $\beta_1 h_1 \geq \beta_2 h_2$} \,,
\\ 
0 &\text{otherwise} 
\end{cases}
\,,\quad \nonumber\\
N_{h_1,h_2}^*=\begin{cases}
\left[ \beta_2-\sigma_j^2/h_2 \right]_{+} &
\text{if $\beta_1 h_1 \leq \beta_2 h_2$} \,,
\\
0 &\text{otherwise} 
\end{cases} \,,
\end{align}
where $\beta_1$ and $\beta_2$ are chosen such that 
$\E P_{h_1,h_2}^*=\plimit$ and $\E N_{h_1,h_2}^*=\Lambda$.
This threshold operation resembles the result in the next section, on the deterministic code capacity of the AVGPC, except that the phase transition of the AVGPC depends only on the ``water volumes" $\plimit$ and $\Lambda$ (see Subsection~\ref{sec:AVGPCdisc}). 

\subsection{Results}
\label{sec:AVGPCres}
We give our result on the AVGPC with parallel Gaussian channels,  where the covariance matrix of the additive noise is 
$\Sigma=\diag\{\sigma_1^2,\ldots,\sigma_d^2\}$, \ie
 $Z_1,\ldots,Z_d$ are independent and $Z_j\sim\mathcal{N}(0,\sigma_j^2)$.
The deterministic code capacity of the AVGPC with parallel channels is denoted by $\sigmaCavc$.

We establish the capacity of the AVGPC.
Based on Csisz\'{a}r and Narayan's result in \cite{CsiszarNarayan:88p}, 
 the deterministic code capacity of an AVC under input and state constraints
is given in terms of channel symmetrizability and the minimal state cost for the jammer to symmetrize the channel  (see also \cite{LapidothNarayan:98p} \cite[Definition 5 and Theorem 5]{PeregSteinberg:19p1}). 
By \cite[Definition 2]{CsiszarNarayan:88p}, a AVGPC is symmetrized by 
a conditional pdf
$\varphi(s^d|x^d)$ if 
\begin{align}
\label{eq:GPsymmetrizable}
\int_{-\infty}^\infty\cdots \int_{-\infty}^\infty  \varphi(s^d|x_2^d)f_{Z^d}(y^d-x_1^d-s^d)ds^d=
\int_{-\infty}^\infty\cdots \int_{-\infty}^\infty  \varphi(s^d|x_1^d)f_{Z^d}(y^d-x_2^d-s^d)ds^d
 \,,\; 
\forall\, x_1^d,x_2^d,y^d\in\mathbb{R}^d \,,
\end{align}
where $f_{Z^d}(z^d)=\prod_{j=1}^d\frac{1}{\sqrt{2\pi\sigma_j^2}} e^{-z_j^2/2\sigma_j^2}$. 
%
In particular, observe that (\ref{eq:GPsymmetrizable}) holds for $\varphi(s^d|x^d)=\delta(s^d-x^d)$, where $\delta(\cdot)$ is the Dirac delta function. In other words, the channel is symmetrized by a distribution $\varphi(s^d|x^d)$ which gives probability $1$ to $S^d=x^d$.
For the AVGPC, the minimal state cost for the jammer to symmetrize the channel, for an input distribution 
$f_{X^d}$, is given by
\begin{align}
\LambdaOig(F_{X^d})=\min\, \int_{-\infty}^\infty\cdots \int_{-\infty}^\infty 
  f_{X^d}(x^d)\varphi(s^d|x^d)\norm{s^d}^2 ds^d dx^d
 \,,
\label{eq:GPlambdaT}
\end{align}
where the minimization is over all conditional pdfs $\varphi(s^d|x^d)$ that symmetrize the channel, that is, satisfy (\ref{eq:GPsymmetrizable}).
The following lemma states that the minimal state cost for symmetrizability is the same 
as the input power. The lemma will be used in the achievability proof of the capacity theorem.
\begin{lemma}
\label{lemm:GPscostP}
%
For a zero mean Gaussian vector $X^d\sim\mathcal{N}(\mathbf{0},K_X)$,
\begin{align}
\LambdaOig(F_{X^d})= \trace(K_X) \,.
\end{align}
\end{lemma}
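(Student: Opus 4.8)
The plan is to prove two matching inequalities. For the upper bound $\LambdaOig(F_{X^d})\le\trace(K_X)$, I would substitute the degenerate symmetrizer $\varphi(s^d|x^d)=\delta(s^d-x^d)$, which the paper already observes satisfies the symmetrizability condition (\ref{eq:GPsymmetrizable}). Plugging it into (\ref{eq:GPlambdaT}) collapses the inner integral and leaves $\int f_{X^d}(x^d)\norm{x^d}^2\,dx^d=\E\norm{X^d}^2=\trace(K_X)$.

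The substance is the matching lower bound: \emph{every} symmetrizing $\varphi$ incurs cost at least $\trace(K_X)$. The idea is to take Fourier transforms of (\ref{eq:GPsymmetrizable}) in the variable $y^d$. Let $\hat\varphi(\omega|x^d)=\int e^{i\omega^{T}s^d}\varphi(s^d|x^d)\,ds^d$ be the characteristic function of the conditional state law, and recall that $f_{Z^d}$ has Fourier transform $e^{-\frac12\,\omega^{T}\Sigma\omega}$, which never vanishes. Since the two sides of (\ref{eq:GPsymmetrizable}) are, in $y^d$, convolutions of $f_{Z^d}$ with $\varphi(\cdot|x_2^d)$ and with $\varphi(\cdot|x_1^d)$ shifted by $x_1^d$ and $x_2^d$ respectively, the identity transforms to
\[
e^{i\omega^{T}x_1^d}\,\hat\varphi(\omega|x_2^d)=e^{i\omega^{T}x_2^d}\,\hat\varphi(\omega|x_1^d)\,,\qquad\forall\,\omega,x_1^d,x_2^d\in\mathbb{R}^d\,.
\]
Rearranging shows $e^{-i\omega^{T}x^d}\hat\varphi(\omega|x^d)$ does not depend on $x^d$; call it $g(\omega)$. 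But $g(\omega)=e^{-i\omega^{T}x^d}\hat\varphi(\omega|x^d)$ is precisely the characteristic function of $S^d-x^d$ when $S^d\sim\varphi(\cdot|x^d)$, so $g$ is a genuine characteristic function of a probability law $\nu$ on $\mathbb{R}^d$ that is the \emph{same} for every $x^d$. Equivalently, one may write $S^d=X^d+N^d$ with $N^d\sim\nu$, and the conditional law of $N^d$ given $X^d$ equals $\nu$ irrespective of $X^d$, i.e.\ $N^d$ and $X^d$ are independent.

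The cost then computes directly. If $\nu$ has infinite second moment, the integral in (\ref{eq:GPlambdaT}) is $+\infty\ge\trace(K_X)$ and there is nothing to prove; otherwise
\[
\LambdaOig(F_{X^d})=\E\norm{X^d+N^d}^2=\E\norm{X^d}^2+2\,\E\big[(X^d)^{T}N^d\big]+\E\norm{N^d}^2=\trace(K_X)+\E\norm{N^d}^2\ge\trace(K_X)\,,
\]
the cross term vanishing because $X^d$ is zero-mean and independent of $N^d$. Combined with the upper bound, this gives $\LambdaOig(F_{X^d})=\trace(K_X)$, the infimum being attained by $\nu=\delta_{\mathbf 0}$, i.e.\ $\varphi(s^d|x^d)=\delta(s^d-x^d)$.

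I expect the main obstacle to be making the Fourier step fully rigorous: justifying that the transform in $y^d$ applies to arbitrary conditional pdfs $\varphi$ (which need not have finite moments, and are not obviously in $L^1$ or $L^2$ in a way that makes the convolution theorem immediate), that $\hat\varphi(\omega|x^d)$ is well defined and continuous, and that cancelling the nowhere-zero Gaussian factor is legitimate pointwise in $\omega$. The clean way is to read (\ref{eq:GPsymmetrizable}) as an identity of finite measures (or tempered distributions) in $y^d$ for each fixed $(x_1^d,x_2^d)$ and invoke uniqueness of the Fourier transform of a finite measure; this needs to be phrased carefully but is routine once set up.
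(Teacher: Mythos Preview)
Your proposal is correct and follows essentially the same route as the paper. Both arguments reduce to the structural fact that any symmetrizing kernel must satisfy $\varphi(s^d\mid x^d)=\varphi_0(s^d-x^d)$ for some fixed law $\varphi_0$, after which the cost computation is identical; the paper obtains this by observing that the two sides of (\ref{eq:GPsymmetrizable}) are convolutions of $f_{Z^d}$ with $\varphi_0(\cdot-x^d)$ and $\varphi_{x^d}(\cdot)$ respectively and cancelling the Gaussian, whereas your Fourier-transform step is precisely the rigorous justification of that cancellation (the Gaussian characteristic function never vanishes).
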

The proof of Lemma~\ref{lemm:GPscostP} is given in Appendix~\ref{app:GPscostP}.
The proof builds on our observation that 
(\ref{eq:GPsymmetrizable}) holds if and only if $\varphi(s^d|x^d)=\varphi(s^d-x^d|0)$. This in turn leads to the conclusion that the minimum in (\ref{eq:GPlambdaT}) is attained by $\varphi_{x^d}(s^d)=\delta(s^d-x^d)$.
Moving to the capacity theorem, define
\begin{align}
\sigmaICavc=
\begin{cases}
\sigmarICav &\text{if $\plimit>\Lambda$}, \\
0						&\text{otherwise}.
\end{cases}
\end{align}

\begin{theorem}
\label{theo:GPavcDet}
The deterministic code capacity of the AVGPC is given by
\begin{align}
\sigmaCavc=\sigmaICavc \,.
\end{align}
\end{theorem}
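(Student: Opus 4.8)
The plan is to prove the two inequalities $\sigmaCavc\le\sigmaICavc$ and $\sigmaCavc\ge\sigmaICavc$, splitting according to whether $\plimit>\Lambda$ or $\plimit\le\Lambda$. For the converse, observe first that a deterministic code is a random code with a degenerate distribution $\mu$, so $\sigmaCavc\le\sigmarCav=\sigmarICav$ by Theorem~\ref{theo:GPavcRand}; when $\plimit>\Lambda$ this already gives $\sigmaCavc\le\sigmaICavc$. When $\plimit\le\Lambda$ I would show $\sigmaCavc=0$ directly. Viewing the AVGPC as a memoryless Gaussian AVC over the super-alphabet $\mathbb{R}^d$, the ``impostor'' strategy $\varphi(s^d\mid x^d)=\delta(s^d-x^d)$ satisfies (\ref{eq:GPsymmetrizable}). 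Given any $(2^{nR},n)$ code with $R>0$, let the jammer draw a message $M'$ uniformly at random, independently of the true message $M$, and play $s^d_i=\mathrm{f}^d_i(M')$; the state constraint holds a.s.\ because every codeword obeys the input constraint and $\plimit\le\Lambda$. Conditioned on $(M,M')=(m,m')$ with $m\ne m'$, the output law is that of $\fvec^d(m)+\fvec^d(m')+\Zvec^d$, which is invariant under $m\leftrightarrow m'$, so $P(g(\Yvec^d)\ne m\mid m,m')+P(g(\Yvec^d)\ne m'\mid m',m)\ge 1$; averaging over the uniform independent pair $(M,M')$ gives average error $\ge\tfrac12(1-2^{-nR})$, which is bounded away from $0$. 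Hence no positive rate is achievable and $\sigmaCavc=0=\sigmaICavc$ in this regime (the boundary $\plimit=\Lambda$ is included, the point-mass nature of the symmetrizing strategy playing the role of the $0$--$1$ law in Corollary~\ref{coro:LCavc01}).

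For achievability when $\plimit>\Lambda$, I would fix the ``double water filling'' Gaussian input $X^d\sim\mathcal{N}(\mathbf{0},\diag(P_1^*,\ldots,P_d^*))$ with $P_j^*$ as in (\ref{eq:GPPjdef}), so that $\trace(K_X)=\sum_{j}P_j^*=\plimit$. By Lemma~\ref{lemm:GPscostP}, $\LambdaOig(F_{X^d})=\plimit>\Lambda$: the jammer cannot symmetrize the AVGPC within its state budget, and with a strictly positive margin $\plimit-\Lambda$. I would then run the constrained-AVC achievability construction of Csisz\'ar and Narayan~\cite{CsiszarNarayan:88p} (see also \cite[Theorem 5]{PeregSteinberg:19p1}), adapted to the continuous $d$-dimensional Gaussian channel via the discretization of \cite[Section 3.4.1]{ElGamalKim:11b} together with thin spherical-shell codebooks of per-channel powers $P_j^*$ and a state-aware minimum-distance / joint-typicality decoder of the type in Definition~\ref{def:Ldecoder}. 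The strict inequality $\LambdaOig(F_{X^d})>\Lambda$ yields a decoding-disambiguity statement in the spirit of Lemma~\ref{lemm:disDec}, after which the error probability vanishes uniformly over all $\svec^d$ with $\tfrac1n\sum_i\|s^d_i\|^2\le\Lambda$. This makes achievable every rate below
\begin{align}
\min_{q(s^d)\,:\;\E_q\|S^d\|^2\le\Lambda}\; I_q\bigl(X^d;\,X^d+S^d+Z^d\bigr),
\end{align}
the minimum over memoryless state laws (the almost-sure constraint collapsing to this in the limit through the state maximization built into the decoder). Finally, by~\cite{HughesNarayan:88p} and the min--max structure in (\ref{eq:GPoptimi}), the allocation $\diag(P_1^*,\ldots,P_d^*)$ is an optimal input, so this minimum is attained at the Gaussian state $\mathcal{N}(\mathbf{0},\diag(N_1^*,\ldots,N_d^*))$ of (\ref{eq:GPNjdef}) and equals $\sum_{j=1}^{d}\tfrac12\log\bigl(1+\tfrac{P_j^*}{N_j^*+\sigma_j^2}\bigr)=\sigmarICav$. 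Hence $\sigmaCavc\ge\sigmarICav=\sigmaICavc$, and together with the converse, $\sigmaCavc=\sigmaICavc$.

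The main obstacle is the achievability in the regime $\plimit>\Lambda$: making the Csisz\'ar--Narayan construction rigorous for the continuous $d$-dimensional Gaussian AVC. One must discretize without destroying the input or state power constraints or the strict margin $\LambdaOig(F_{X^d})>\Lambda$, control the number of mutually confusable codewords via a counting lemma in the spirit of Lemma~\ref{lemm:codeBsets}, and verify that the decoder's error bound holds uniformly over the uncountable family of admissible state sequences. A point deserving particular care is that the gain here is genuinely $d$-dimensional: per-channel one may have $N_j^*\ge P_j^*$ on some sub-channels, so that independent scalar coding fails to overcome symmetrization on those channels, and it is only the aggregate inequality $\sum_j P_j^*=\plimit>\Lambda=\sum_j N_j^*$, equivalently $\trace(K_X)>\Lambda$, that restores non-symmetrizability of the product channel.
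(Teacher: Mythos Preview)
Your proposal is correct and follows essentially the same approach as the paper. The paper streamlines your achievability step by invoking \cite[Theorem~3]{Csiszar:92p} directly---the general-alphabet extension of the Csisz\'ar--Narayan constrained-AVC result---which sidesteps the discretization you flag as the main obstacle; it then bounds the resulting $\min_{F_{S^d}} I(X^d;Y^d)$ via the worst-additive-noise lemma and verifies explicitly through the KKT conditions that $N_j=N_j^*$ is optimal, rather than deferring to the saddle-point structure of \cite{HughesNarayan:88p}, but the substance is the same.
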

The proof of Theorem~\ref{theo:GPavcDet} is given in Appendix~\ref{app:GPavcDet}.
Considering the scalar case, 
Csisz{\'a}r and Narayan showed the direct part by providing a coding scheme for the Gaussian AVC \cite{CsiszarNarayan:91p}. While the receiver in their coding scheme uses simple minimum-distance decoding, the analysis is fairly complicated.
Here, on the other hand, we treat the AVGPC using a much simpler approach. To prove direct part, we consider the optimization problem based on the capacity formula of the general AVC under input and state constraints, which is given in terms of symmetrizing state distributions. We use Lemma~\ref{lemm:GPscostP} to show that if $\plimit>\Lambda$, then the transmitter's water filling strategy in (\ref{eq:GPPjdef}) guarantees that $\LambdaOig(F_{x^d})>\Lambda$. Intuitively, this means that the jammer cannot symmetrize the channel without violating the state constraint. In this scenario, the random code capacity can be achieved with deterministic codes as well.

\subsection{Discussion}
\label{sec:AVGPCdisc}
We give a couple of remarks on our result in Theorem~\ref{theo:GPavcDet}.
%
As in the case of the Gaussian scalar AVC \cite{CsiszarNarayan:91p}, the capacity is disconinuous in the input constraint, and has a  phase transition behavior, depending on whether 
$\plimit>\Lambda$ or $\plimit\leq\Lambda$.
We give an intuitive explanation below. 
For the classic Gaussian AVC, reliable communication requires the power of the transmitted signal to be higher than the  power of the jamming signal, otherwise the jammer can confuse the receiver by making the state sequence $\Svec$ ``look like" the input sequence $\Xvec$ \cite{CsiszarNarayan:91p}.
At a first glance at our problem, one might have expected that the input power $P_j$ of the $j$th channel also needs to be higher than the jamming power $N_j$, in order for the output $\Yvec_j$ to be useful. 
This is not the case. Since the decoder has the vector of outputs $(\Yvec_1,\ldots,\Yvec_d)$, even if 
$\Svec_j$ looks like $\Xvec_j$, the receiver could still gain information from $\Yvec_j$ as the other outputs may ``break the symmetry".  

Based on Shannon's classic water filling result \cite{Shannon:49p},
the capacity of the Gaussian product channel, $Y_j=X_j+V_j$, $j\in [1:d]$,  can be achieved by combining $d$ independent encoder-decoder pairs, where the $j$th pair is associated with a capacity achieving code for the scalar Gaussian channel under input constraint $P_j^*$. However,
based on Csisz{\'{a}}r and Narayan's result on the Gaussian single AVC \cite{CsiszarNarayan:91p}, the capacity of the $j$th AVC, $Y_j=X_j+S_j+Z_j$, is zero under input constraint $P_j^*$ and state constraint $N_j^*$ for $P_j^*\leq N_j^*$. This means that, in contrast to the Shannon's Gaussian product channel \cite{Shannon:49p}, using $d$ independent encoder-decoder pairs over the AVGPC is suboptimal in general. This can be viewed as a constrained version of the  
super-additivity phenomenon in \cite{SchaeferBochePoor:16c}. 

\section{Main Results -- AVC with Colored Gaussian Noise}
\label{sec:GaussCol}

\begin{center}
\begin{figure}[htb]
        \centering
        \includegraphics[scale=0.4,trim={1cm 0 0 0},clip] 
				{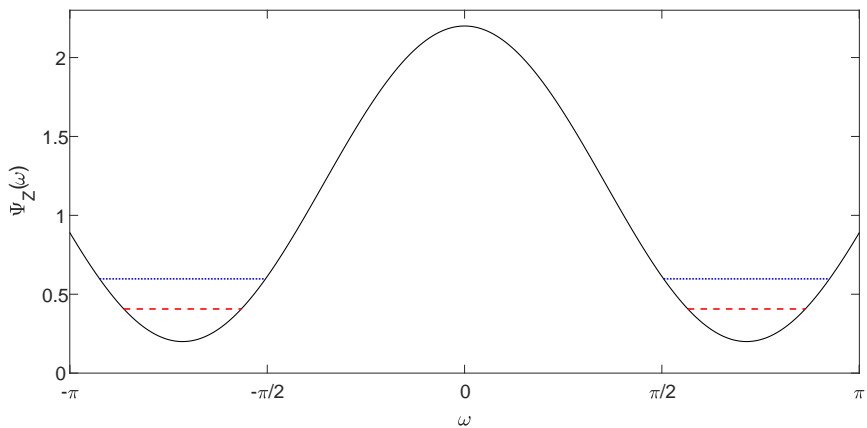}
        
\caption{Water filling in the frequency domain for the AVC with colored Gaussian noise.
The curve depicts the power spectral density $\Psi_Z(\omega)$ of the noise process $Z^n$. 
The red dashed line indicates the ``water level" $\beta$ which corresponds to the jammer's water filling,
and the blue dotted line indicates the ``water level" $\alpha$ which corresponds to the transmitter's water filling.  
 }
\label{fig:WaterFstationary}
\end{figure}
\end{center}

We consider an AVC with colored Gaussian noise, \ie
\begin{align}
\Yvec=\Xvec+\Zvec+\Svec \,,\; 
\label{eq:coloredAVC}
\end{align}
where $\Zvec$ is a zero mean stationary Gaussian process,  with 
%
 power spectral density $\Psi_Z(\omega)$. 
Assume that the power spectral density is bounded and integrable. 
We denote the random code capacity and the deterministic code capacity  of this channel by $\sKrCav$ and
$\sKCavc$, respectively.

We show that the optimal power allocations of the user and the jammer are given by ``double" water filling in the frequency domain.
Define 
\begin{align}
b^*(\omega)=  \left[ \beta-\Psi_Z(\omega) \right]_{+}  \,,\; -\pi\leq\omega\leq\pi\,,
\label{eq:sKGPbidef}
\end{align}
where $\beta\geq 0$ is chosen to satisfy
\begin{align}
\frac{1}{2\pi}\int_{-\pi}^{\pi} \left[ \beta-\Psi_Z(\omega) \right]_{+} \,d\omega=\Lambda \,.
\label{eq:sKGPbetadef}
\end{align}
Next, define 
\begin{align}
a^*(\omega)=\left[ \alpha-(b^*(\omega)+\Psi_Z(\omega)) \right]_{+} \,,\; -\pi\leq\omega\leq\pi\,,
\label{eq:sKGPaidef}
\end{align}
 where $\alpha\geq 0$ is chosen to satisfy
\begin{align}
\frac{1}{2\pi}\int_{-\pi}^{\pi} \left[ \alpha-(b^*(\omega)+\Psi_Z(\omega)) \right]_{+} \,d\omega=\plimit \,.
\label{eq:sKGPalphadef}
\end{align}
Now, let
\begin{align}
\sKrICav\triangleq \frac{1}{2\pi}\int_{-\pi}^{\pi} \frac{1}{2}\log\left(1+\frac{a^*(\omega)}{b^*(\omega)+
\Psi_Z(\omega)}  \right)\, d\omega \,.
\end{align}

\begin{theorem}
\label{theo:sKGPavcRand}
The random code capacity of the AVC with colored Gaussian noise is given by
\begin{align}
\sKrCav=\sKrICav \,,
\end{align}
and the deterministic code capacity is given by 
\begin{align}
\sKCavc= \begin{cases}
\sKrICav   & \text{if $\plimit>\Lambda$}\,,\\
0					& \text{otherwise}\,.
\end{cases}
\end{align}
\end{theorem}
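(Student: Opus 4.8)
The plan is to reduce the AVC with colored Gaussian noise, at each fixed block length $n$, to an AVC with fixed parameters (equivalently, to an $n$‑dimensional Gaussian product AVC with prescribed noise floors) by an orthogonal change of basis that diagonalizes the noise covariance, to invoke the finite‑block forms of our results from Section~\ref{sec:Pchannels} and from Theorem~\ref{theo:GPavcDet}, and then to pass to the limit $n\to\infty$ by a Grenander–Szeg\H{o} limit theorem for Toeplitz matrices, which turns the resulting finite‑dimensional ``double water filling'' into the frequency‑domain integral. Both directions (achievability and converse) are obtained this way.

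Fix $n$ and let $K_Z^{(n)}$ be the $n\times n$ (Toeplitz) covariance of $Z^n$, with spectral decomposition $K_Z^{(n)}=U_n D_n U_n^{\mathsf T}$, $D_n=\diag(\lambda_1^{(n)},\dots,\lambda_n^{(n)})$, $U_n$ orthogonal. Applying $U_n^{\mathsf T}$ to (\ref{eq:coloredAVC}) gives $\tYvec=\tXvec+\tSvec+\tZvec$ with $\tYvec=U_n^{\mathsf T}\Yvec$, etc., where $\tZvec$ now has \emph{independent} coordinates, $\tZ_i\sim\mathcal N(0,\lambda_i^{(n)})$. Since $U_n$ is orthogonal, $\norm{\tXvec}=\norm{\Xvec}$ and $\norm{\tSvec}=\norm{\Svec}$, so the input and state constraints are invariant, and $\svec\mapsto U_n^{\mathsf T}\svec$ is a norm‑preserving bijection; hence choosing the distribution of $\Svec^n$ arbitrarily subject to $\norm{\Svec^n}^2\le n\Lambda$ a.s.\ is equivalent to doing so for $\tSvec^n$. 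The transformed channel is therefore exactly an AVC with fixed parameters in the sense of Section~\ref{sec:Pchannels}: the ``time'' index is the coordinate $i$; the parameter $\theta_i$ is (a quantization of) the noise floor $\lambda_i^{(n)}$, which lies in the bounded essential range of $\Psi_Z$ and is known to encoder, decoder and jammer; the cost functions are $\cost(x)=x^2$, $l(s)=s^2$; and the per‑coordinate channel with parameter $t$ is the scalar Gaussian AVC $Y=X+S+Z$, $Z\sim\mathcal N(0,t)$. Continuous alphabets are handled by the standard discretization argument, exactly as in Example~\ref{example:Fading} (cf.\ \cite{BBT:59p,Ahlswede:78p} and \cite[Section 3.4.1]{ElGamalKim:11b}): quantize the range of $\Psi_Z$ to a finite $\Tset$ and shrink the mesh after $n\to\infty$.

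Next, I would invoke the finite‑block forms of the fixed‑parameter results. For the random code part, Corollary~\ref{coro:PrCavE} (in its finite‑$n$ form; since the eigenvalue sequence depends on $n$ one really uses the constructions behind Theorem~\ref{theo:PrCav} — the modified Robustification Technique of Lemma~\ref{lemm:LRT} applied to a compound‑channel code, Lemma~\ref{lemm:PCcompound}) yields random codes of rate approaching
\begin{align}
\inR_n^{\rstarC}(\avc)=\min_{\substack{\lambda_1,\ldots,\lambda_n\,:\\ \frac1n\sum_{i}\lambda_i\le\Lambda}}\ \max_{\substack{\omega_1,\ldots,\omega_n\,:\\ \frac1n\sum_{i}\omega_i\le\plimit}}\ \frac1n\sum_{i=1}^n\frac12\log\!\left(1+\frac{\omega_i}{\lambda_i+\lambda_i^{(n)}}\right),\nonumber
\end{align}
using $\inC_{\theta_i}^{\rstarC}(\omega,\lambda)=\frac12\log(1+\omega/(\lambda+\theta_i))$, the scalar Gaussian AVC random code capacity of Hughes–Narayan \cite{HughesNarayan:87p}. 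This is a finite‑dimensional double water filling with floors $\lambda_i^{(n)}$: the inner maximization is solved by the transmitter's water filling and the outer minimization by the jammer's, with levels $\alpha_n,\beta_n$, and by the analog of Lemma~\ref{lemm:WaterProp}(4) its value equals $\frac1n\sum_i\frac12\log\frac{\max(\alpha_n,\lambda_i^{(n)})}{\max(\beta_n,\lambda_i^{(n)})}$. For the deterministic code part, I would use the generalization of Lemma~\ref{lemm:GPscostP} — a symmetrizing $\varphi(s^d|x^d)$ must be a shift, $\varphi(s^d|x^d)=\varphi(s^d-x^d|0)$, and the minimizing one is the point mass $s=x$ — to conclude that the minimal symmetrizability cost equals the used input power. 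Hence the threshold is $L_n^*=\plimit$, the condition $L_n^*>\Lambda$ reads $\plimit>\Lambda$, and in that regime the extra constraint $\tLambda_n(p)\ge\Lambda$ in $\inR_n(\avc)$ is slack at the optimum (the water‑filling input uses its full power $\plimit>\Lambda$); moreover the point‑mass minimizer is the continuous analog of the $0$‑$1$ law, so Corollary~\ref{coro:PrCavEDet} applies for \emph{all} values of $L_n^*$ (removing the boundary caveat), giving $\inR_n(\avc)=\inR_n^{\rstarC}(\avc)$ when $\plimit>\Lambda$ and $\inR_n(\avc)=0$ when $\plimit\le\Lambda$. The matching converse comes for free: any colored‑noise code, read through $U_n^{\mathsf T}$, is a fixed‑parameter code, so its rate is at most $\inR_n^{\rstarC}(\avc)+o(1)$ (random) or $\inR_n(\avc)+o(1)$ (deterministic).

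Finally, I would take $n\to\infty$. By the Grenander–Szeg\H{o} theorem, the empirical distribution of $\{\lambda_i^{(n)}\}_{i=1}^n$ converges weakly to the law of $\Psi_Z(\omega)$ under $\omega\sim\mathrm{Unif}[-\pi,\pi]$, and since $\Psi_Z$ is bounded this upgrades to $\frac1n\sum_i\phi(\lambda_i^{(n)})\to\frac1{2\pi}\int_{-\pi}^{\pi}\phi(\Psi_Z(\omega))\,d\omega$ for every $\phi$ continuous on the essential range of $\Psi_Z$ (all the relevant $\phi$ — $t\mapsto[\beta-t]_+$, $t\mapsto\tfrac12\log\tfrac{\max(\alpha,t)}{\max(\beta,t)}$ — are of this form). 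Substituting into the water‑filling characterization forces $\beta_n\to\beta$ and $\alpha_n\to\alpha$ of (\ref{eq:sKGPbetadef})–(\ref{eq:sKGPalphadef}) and $\inR_n^{\rstarC}(\avc)\to\sKrICav$, which together with the $\liminf_n$ in the fixed‑parameter capacity formulas and the $\Tset$‑discretization limit gives $\sKrCav=\sKrICav$ and $\sKCavc=\sKrICav$ if $\plimit>\Lambda$, $0$ otherwise. I expect the main obstacle to be exactly this last step: showing the nested $\min$–$\max$ value (and the induced water levels) is continuous in the empirical spectrum, controlling the two nested approximations — quantization of $\Tset$ and the Szeg\H{o} convergence — uniformly enough to interchange them with the $\liminf_n$, and verifying that the continuous‑to‑discrete reduction of the Gaussian channel is valid against a \emph{worst‑case} jammer in the converse direction as well as in the direct one.
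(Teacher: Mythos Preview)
Your proposal is correct and follows essentially the same route as the paper's proof in Appendix~\ref{app:KGPavcRand}: orthogonal diagonalization of $K_Z^{(n)}$ to obtain an AVC with fixed parameters $(\sigma_i^2)_{i=1}^n$, application of Corollary~\ref{coro:PrCavE} and Corollary~\ref{coro:PrCavEDet} together with the scalar Gaussian results of \cite{HughesNarayan:87p} and the symmetrizability-cost identification of Lemma~\ref{lemm:GPscostP} (giving $L_n^*=\plimit$ and the $0$--$1$ law), and a Szeg\H{o}-type limit to pass from eigenvalue sums to the spectral integral. The only noteworthy difference is in the limit step: the paper first \emph{solves} the finite-$n$ min--max explicitly (recognizing it as the AVGPC optimization of Theorem~\ref{theo:GPavcRand} with $d\leftarrow n$) and then applies Lemma~\ref{lemm:Ebert} to the single bounded monotone function $G(x)=\tfrac12\log\frac{\max(\alpha',x)}{\max(\beta',x)}$, whereas you argue via weak convergence of the empirical eigenvalue distribution and continuity of the water levels; the paper's order of operations sidesteps the uniformity concerns you flagged at the end.
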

The proof of Theorem~\ref{theo:sKGPavcRand} is given in Appendix~\ref{app:KGPavcRand}, combining our previous results on the AVC with fixed parameters and the AVGPC.
Despite the common belief that the characterization for a channel with colored Gaussian noise easily follows  from the results for the product channel setting, the analysis is more involved. While standard orthogonalization transforms the channel into an equivalent one  with statistically independent noise instances, the noise in the transformed channel is not necessarily white. As the noise variance may change over time, we observe that the transformed channel is in fact an AVC with fixed parameters which represent the sequence of noise variances. Using Corollary~\ref{coro:PrCavE} and Corollary~\ref{coro:PrCavEDet}, we obtain deterministic and random capacity formulas that are analogous to those of the AVGPC, and use Toeplitz matrix properties to express the formulas as integrals in the frequency domain.

The optimal power allocation has a water filling analogy in the frequency domain (see \eg \cite[Section 9.5]{CoverThomas:06b}), where the jammer pours water of volume $\Lambda$ on top of the power spectral density
$\Psi_Z(\omega)$,
 and then the encoder pours more water of volume $\plimit$. The jammer brings the water level to $\beta$, and then the encoder brings the water level to $\alpha$.
The process is illustrated in Figure~\ref{fig:WaterFstationary}.

\begin{appendices}

\section{Proof of Theorem~\ref{lemm:PCcompound}}
\label{app:PCcompound}
Consider the compound channel $\compound$ with fixed parameters under input constraint $\plimit$ and state constraint $\Lambda$. 

\subsection{Achievability Proof}
To show achievability,  we construct a code based on conditional typicality decoding  with respect to a channel state type, which is ``close" to one of the state distributions in $\Qset$. 

Denote the type of the parameter sequence by $P_T=\emp_{\theta^n}$.
Define a set $\tQ$ of conditional state types,
\begin{align}
\tQ= \left\{ \emp_{s^n|\theta^n} \,:\; (\theta^n,s^n)\in\Aset^{(n)}_{\delta_1}(P_T\times q) \,,\;\text{for some $q\in\Qset$}  \right\} \,,
\label{eq:1tQ}
\end{align}
with $(P_T\times q)(t,s)=P_T(t)q(s|t)$, and
\begin{align}
\label{eq:delta1CompNoSI}
\delta_1\triangleq \frac{\delta}{2\cdot |\Sset|} \,,
\end{align}
where $\delta>0$ is arbitrarily small.
In words, $\tQ$ is the set of conditional types $q'(s|t)$, given a parameter sequence $\theta^n$, such that the joint type 
is $\delta_1$-close to $P_T(t) q(s|t)$, for some conditional state distribution $q(s|t)$ in $\Qset$. 
We note that the sets $\Qset$ and $\tQ$ could be disjoint, since $\Qset$ is not limited to conditional empirical distributions.
Nevertheless, for a fixed $\delta>0$ and sufficiently large $n$,  every $q\in\Qset$ can be approximated by 
some $q'\in\tQ$. Indeed, for sufficiently large $n$, there exists a joint type $P'_T(t) q'(s|t)$ such that 
$|P'_T(t) q'(s|t)-P_T(t)q(s|t)|\leq \delta_1/|\Sset|$, hence $|P'_T(t)-P_T(t)|\leq \delta_1$ and
$|P_T(t) q'(s|t)-P_T(t)q(s|t)|\leq \delta_1 q'(s|t)\leq \delta_1 $.
%
Now, a code is constructed as follows.

\emph{Codebook Generation:} Fix $P_{X|T}$ such that $\E \cost(X)\leq \plimit-\eps$, where
\begin{align}
\E \cost(X)=\sum_{t\in\Tset} P_T(t) \E(\cost(X)|T=t)=
\frac{1}{n} \sum_{i=1}^n \sum_{x\in\Xset} P_{X|T}(x|\theta_i) \cost(x) \,.
\end{align}
 Generate $2^{nR}$ independent sequences at random,
$
x^{n}(m,\theta^n) \sim  \prod_{i=1}^{n} P_{X|T}(x_{i}|\theta_i) 
$, 
for $m\in[1:2^{n R}]$.

\emph{Encoding}: To send a message $m$, if $\cost^n(x^n(m,\theta^n))\leq \plimit$, transmit $x^n(m,\theta^n)$. Otherwise, transmit an idle sequence 
$x^n=(a,a,\ldots,a)$ with $\cost(a)=0$. 

\emph{Decoding}: Find a unique $\hm\in [1:2^{nR}]$ for which there exists $q\in\tQ$ such that
 $(\theta^n,x^{n}(\hm,\theta^n),y^{n})\in\tset( P_T P^{q}_{X,Y|T})$, where 
 \begin{align}
\label{eq:UchannelY}
P^{q}_{X,Y|T}(x,y|t)= 
 P_{X|T}(x|t)  \sum_{s\in\Sset} q(s|t)  W_{Y|X,S,T}(y|x,s,t) \,.
 \end{align}
If there is none, or more than one such $\hm$,  declare an error. We note that using the set of types $\tQ$ instead of the original set of state distributions $\Qset$ alleviates the analysis, since $\Qset$ is not necessarily finite nor countable.


\emph{Analysis of Probability of Error}:
Assume without loss of generality that the user sent $M=1$. 
By the union of events bound, we have that
$
 \prob{\hM\neq 1}\leq \prob{ \Eset_{1} }+ \cprob{ \Eset_{2} }{ \Eset_1^c }+ 
\cprob{ \Eset_{3} }{ \Eset_1^c } 
$, 
where
\begin{align}
\Eset_1=& \{ (\theta^n,X^{n}(1,\theta^n))\notin \tset( P_T P_{X|T} ) \} \,, \nonumber\\
\Eset_{2} =&\{ (\theta^n,X^{n}(1,\theta^n),Y^{n})\notin \tset( P_T P_{X|T} P^{q'}_{Y|X,T}) \;\text{for all $q'\in\tQ$ } \} \,, \nonumber\\
\Eset_{3} =&\{ (\theta^n,X^{n}(m,\theta^n),Y^{n})\in \tset( P_T P_{X|T} P^{q'}_{Y|X,T}) 
\;\text{ for some  $m\neq 1$,\, $q'\in\tQ$ } \} \,.
\end{align}
The first term tends to zero exponentially by the law of large numbers and Chernoff's bound (see \eg \cite[Theorem 1.2]{Kramer:08n}).
Now, suppose that the event $\Eset_1^c$ occurs.
Then, for sufficiently small $\delta$, we have that $\cost^n(X^n(1,\theta^n))\leq \plimit$, since 
$\E \cost(X)\leq \plimit-\eps$. Hence, $X^n(1,\theta^n)$ is the channel input.

Next, we claim that the second error event implies that $(\theta^n,X^{n}(1,\theta^n),Y^{n})\notin \Aset^{(n)}_{\nicefrac{\delta}{2}}
(P_T P_{X|T} P^{q}_{Y|X,T})$, where $q(s|t)$ is the \emph{actual} state distribution chosen by the jammer. 	Assume to the contrary that $\Eset_{2}$ holds, but   
	$(\theta^n,X^{n}(1,\theta^n),Y^{n})\in \Aset^{(n)}_{\nicefrac{\delta}{2}}(P_T P_{X|T} P^{q}_{Y|X,T})$. 
	For sufficiently large $n$, there exists a conditional type $q'\in\tQ$ that approximates $q$  in the sense that 
	$
	|P_T(t) q'(s|t)-P_T(t) q(s|t)|\leq \delta_1 
	$ for all $s\in\Sset$ and $t\in\Tset$, hence  
	\begin{align}
	|P_T(t) P_{Y|X,T}^{q'}(y|x,t)-P_T(t) P_{Y|X,T}^{q}(y|x,t)|\leq |\Sset|\cdot\delta_1=\frac{\delta}{2} \,,
	\label{eq:qqPcompD}
	\end{align}
 for all $x\in\Xset$, $t\in\Tset$, $y\in\Yset$	 (see  (\ref{eq:delta1CompNoSI})-(\ref{eq:UchannelY})).
	To show $\delta$-typicality with respect to $q'(s|t)$, 
	we observe that
	\begin{align}
	&|\hP_{\theta^n,X^{n}(1,\theta^n),Y^{n}}(t,x,y) - P_T(t) P_{X|T}(x|t) P_{Y|X,T}^{q'}(y|x,t)| \nonumber\\
	=& \Big|\hP_{\theta^n,X^{n}(1,\theta^n),Y^{n}}(t,x,y) - P_T(t) P_{X|T}(x|t) P_{Y|X,T}^{q}(y|x,t) 	+ P_T(t) P_{X|T}(x|t) P_{Y|X,T}^{q}(y|x,t) \nonumber\\&- P_T(t) P_{X|T}(x|t) P_{Y|X,T}^{q'}(y|x,t) \Big| \nonumber\\
	\leq& |\hP_{\theta^n,X^{n}(1,\theta^n),Y^{n}}(t,x,y) - P_T(t) P_{X|T}(x|t) P_{Y|X,T}^{q}(y|x,t)| \nonumber\\&+ |P_T(t) P_{X|T}(x|t) P_{Y|X,T}^{q}(y|x,t)- P_T(t) P_{X|T}(x|t) P_{Y|X,T}^{q'}(y|x,t)| \nonumber\\
	\leq& \frac{\delta}{2}+\frac{\delta}{2}P_{X|T}(x|t) \leq \delta \,,
	\end{align}
	where the first inequality is due to the triangle inequality, and the second inequality follows from (\ref{eq:qqPcompD}) and the assumption that $(\theta^n,X^{n}(1,\theta^n),Y^{n})\in \Aset^{(n)}_{\nicefrac{\delta}{2}}(P_T P_{X|T} P^{q}_{Y|X,T})$.
	It follows that $(\theta^n,X^{n}(1,\theta^n),Y^{n})\in \Aset^{(n)}_{\delta}( P_T P_{X|T} P^{q'}_{Y|X,T})$, and
	  $\Eset_2$ does not hold. 
	Thus,
	\begin{align} 
	\label{eq:llnRL}
 \cprob{ \Eset_{2} }{\Eset_1^c} 
	\leq& \prob{(\theta^n,X^{n}(1,\theta^n),Y^{n})\notin \Aset^{(n)}_{\nicefrac{\delta}{2}} (P_T P_{X|T} P^{q}_{Y|X,T}) } \,.
	\end{align}
 This tends to zero exponentially as $n\rightarrow\infty$ by the law of large numbers and Chernoff's bound (see \eg \cite[Theorem 1.2]{Kramer:08n}).
	
	Moving to the third error event, as the number of type classes in $\Sset^n$ is bounded by $(n+1)^{|\Sset|}$, 
 we have that  
\begin{align}
\cprob{\Eset_{3}}{\Eset_1^c}
\leq (n+1)^{|\Sset|} \cdot \sup_{q'\in\tQ} 
 \prob{
(\theta^n,X^{n}(m,\theta^n),Y^{n})\in \tset( P_T P_{X|T} P^{q}_{Y|X,T})  \;\text{ for some  $m\neq 1$} 
}.
\label{eq:E2poly}
\end{align}
For every $m\neq 1$, $X^{n}(m,\theta^n)$ is independent of $Y^{n}$, hence
\begin{align}
&
\prob{  (\theta^n,X^{n}(m),Y^{n})\in \tset( P_T P_{X|T} P^{q}_{Y|X,T}) } 
\nonumber\\
=&  \sum_{x^n\in\Xset^n} P_{X^{n}|T^{n}}(x^{n}|\theta^n)  \sum_{y^{n} \,:\; (\theta^n,x^n,y^n)\in \tset(P_T P_{X|T} P^{q'}_{Y|X,T})} 
P_{Y^{n}|T^{n}}^q(y^{n}|\theta^{n}) \,.
\label{eq:E2bound0} 
\end{align}
 Let 
 $(\theta^n,x^n,y^n)\in \tset( P_T P_{X|T} P^{q'}_{Y|X,T})$. Then, $\, (\theta^n,y^n)\in\Aset^{(n)}_{\delta_2}( P_T P_{Y|T}^{q'})$ with $\delta_2\triangleq |\Xset|\cdot\delta$. By Lemmas 2.6-2.7 in \cite{CsiszarKorner:82b},
\begin{align}
\label{eq:pYbound}
P_{Y^{n}|T^n}^{q}(y^{n}|\theta^n)=2^{-n\left(  H(\hP_{y^{n_t}|\theta^n})+D(\hP_{y^{n}|\theta^n}||P_{Y|T}) \right)}\leq 2^{-n 
H(\hP_{y^{n}|\theta^n})}
\leq 2^{-n \left( H_{q'}(Y|T) -\eps_1(\delta) \right)} \,,
\end{align}
where $\eps_1(\delta)\rightarrow 0$ as $\delta\rightarrow 0$. Therefore, by (\ref{eq:E2poly})$-$(\ref{eq:pYbound}),
\begin{align}
 &
\prob{\Eset_{2}}           								\leq
 (n+1)^{|\Sset|} \cdot \sup_{q'\in\tQ}    2^{n R}  \nonumber\\ &  
\sum_{x^{n}\in\Xset^{n}} P_{X^{n}|T^{n}}(x^{n}|\theta^{n})
\cdot |\{y^{n}\,:\; (\theta^n,x^{n},y^{n})\in\tset(P_T P_{X|T} P_{Y|X,T}^{q'})\}| \cdot 
 2^{-n\left( H_{q'}(Y|T) -\eps_1(\delta) \right)}		   												\nonumber\\
\leq& \sup_{q'\in\tQ}  (n+1)^{|\Sset|} 
2^{-n[ I_{q'}(X;Y|T) 
-R-\eps_2(\delta) ]} \label{eq:expCR2} \,,
\end{align}
with $\eps_2(\delta)\rightarrow 0$ as $\delta\rightarrow 0$, 
where the last inequality is due to \cite[Lemma 2.13]{CsiszarKorner:82b}. The RHS of (\ref{eq:expCR2})
  tends to zero exponentially as $n\rightarrow\infty$, provided that $R< I_{q'}(X;Y|T)
-\eps_2(\delta)$.  
The probability of error, averaged over the class of codebooks, exponentially decays to zero  as $n\rightarrow\infty$. Therefore, there must exist a $(2^{nR},n,e^{-an})$ deterministic code, for a sufficiently large $n$.
This completes the proof of the direct part.

\subsection{Converse Proof}
Since the deterministic code capacity is always bounded by the random code capacity,
we consider a sequence of $(2^{nR},n,\alpha_n)$ random codes, where $\alpha_n\rightarrow 0$  as $n\rightarrow\infty$.
Then, let $X^n=f_{\gamma}^n(M,\theta^n)$  be the channel input sequence, and 
$Y^n$ be the corresponding output sequence, where $\gamma\in\Gamma$ is the random element shared between the encoders and the decoder.
 For every $q\in\Qset$, we have by 
Fano's inequality that $H_q(M|Y^n,T^n=\theta^n,\gamma)\leq n\eps_n $, hence
\begin{align}
nR=& H(M|T^n=\theta^n,\gamma)=I_q(M;Y^n|T^n=\theta^n,\gamma)+H(M|Y^n,T^n=\theta^n,\gamma)
\nonumber\\
\leq& I_q(M,\gamma;Y^n|T^n=\theta^n)+n \eps_n =I_q(M,\gamma,X^n;Y^n|T^n=\theta^n)+n \eps_n
\nonumber\\
=& I_q(X^n;Y^n|T^n=\theta^n)+n \eps_n 
\,,
\end{align}
where $\eps_n\rightarrow 0$ as $n\rightarrow\infty$. 
The third equality holds since $X^n$ is a deterministic function of $(M,\gamma,\theta^n)$, and 
the last equality since $(M,\gamma)\Cbar(X^n,T^n)\Cbar Y^n$ form a Markov chain.
%
It follows that
\begin{align}
R-\eps_n\leq&    \frac{1}{n}\sum_{i=1}^n I_q(X_i;Y_i|T_i=\theta_i)= I_q(X;Y|T,K)  \leq I_q(X,K;Y|T)
\end{align}
for all $q\in\Qset$, with $X\equiv X_K$, $Y\equiv Y_K$, $T\equiv T_K=\theta_K$, where the random variable $K$ is uniformly distributed over $[1:n]$,
 and $\eps_n\rightarrow 0$ as $n\rightarrow\infty$.
Observe that the random variable $T$ is distributed according to
\begin{align}
P_T(t)=\prob{ \theta_K=t} =\sum_{i \,:\; \theta_i=t} \prob{K=i}=\frac{1}{n} \cdot N(t|\theta^n) =\hP_{\theta^n}(t) \,,
\label{eq:ConvPT}
\end{align}
where $N(t|\theta^n)$ is the number of occurrences of the symbol $t\in\Tset$ in the sequence $\theta^n$.
Since $K\Cbar (T,X)\Cbar Y$ form a Markov chain, we have that 
\begin{align}
R-\eps_n \leq \inf_{q\in\Qset} I_q(K,X;Y|T)= \inf_{q\in\Qset} I_q(X;Y|T) \,.
\end{align}
\qed

\section{Proof of Lemma~\ref{lemm:LRT}}
\label{app:LRT}
We state the proof of our modified version of Ahlswede's RT \cite{Ahlswede:86p}. The proof follows the lines of \cite[Subsection IV-B]{Ahlswede:86p}, which we modify here  to include a constraint on the family of state distributions $q(s)$ and the parameter sequence 
$\theta^n$. 
Let $\widetilde{s}^{\;n}\in\Sset^n$ such that $l^n(\widetilde{s}^{\;n})\leq\Lambda$. Denote the conditional type of  $\widetilde{s}^{\;n}\in\Sset^n$ given $\theta^n$ by $\hq(s|t) 
$.  
Observe that 
$
\hq\in \apLSpaceS 
$ (see (\ref{eq:StateCcompound})), 
since $\frac{1}{n}\sum_{i=1}^n \sum_{s\in\Sset} q(s|\theta_i) l(s) =l^n(\widetilde{s}^{\;n})$.

Given a permutation $\pi\in\Pi(\theta^n)$,
\begin{align}
\sum_{s^n\in\Sset^n} q^n(s^n|\theta^n) h(s^n,\theta^n)
=\sum_{s^n\in\Sset^n} q^n(\pi s^n|\theta^n) h(\pi s^n,\theta^n)
=\sum_{s^n\in\Sset^n} q^n(\pi s^n|\pi \theta^n) h(\pi s^n,\pi \theta^n)
=\sum_{s^n\in\Sset^n} q^n(s^n|\theta^n) h(\pi s^n,\pi \theta^n) \,,
\end{align}
where the first equality holds since $\pi$ is a bijection, the second equality holds since $\pi \theta^n=\theta^n$
for every $\pi\in\Pi(\theta^n)$, and the last equality holds due to the product form of the conditional distribution $q^n(s^n|t^n)=\prod_{i=1}^n q(s_i|t_i)$.
Hence, taking $q=\hq$,
\begin{align}
\sum_{s^n\in\Sset^n} \hq^{\;n}(s^n|\theta^n) h(s^n,\theta^n)=\frac{1}{|\Pi(\theta^n)|} \sum_{\pi\in\Pi(\theta^n)} \sum_{s^n\in\Sset^n} \hq^{\;n}(s^n|\theta^n) h(\pi s^n,\pi \theta^n) \,,
\end{align}
and by (\ref{eq:RTcondCs}),
\begin{align}
\sum_{s^n\in\Sset^n}  \hq^{\;n}(s^n|\theta^n) \left[\frac{1}{|\Pi(\theta^n)|}\sum_{\pi\in\Pi(\theta^n)} h(\pi s^n,\pi \theta^n)\right] \leq \alpha_n \,.
\end{align}
Thus,
\begin{align}
\sum_{s^n \,:\; \hP_{s^n|\theta^n}=\hq}  \hq^{\;n}(s^n|\theta^n) \left[\frac{1}{|\Pi(\theta^n)|}\sum_{\pi\in\Pi(\theta^n)} h(\pi s^n,\pi \theta^n)\right] \leq \alpha_n \,.
\end{align}
As the expression in the square brackets is identical for all sequences $s^n$ of conditional type $\hq$, we have that 
\begin{align}
\label{eq:rtineq1}
\left[\frac{1}{|\Pi(\theta^n)|}\sum_{\pi\in\Pi(\theta^n)} h(\pi \widetilde{s}^{\;n}, \pi \theta^n)\right]\cdot
\sum_{s^n \,:\; \hP_{s^n|\theta^n}=\hq}  \hq^{\;n}(s^n|\theta^n)  \leq \alpha_n \,.
\end{align}
The second sum is the probability of 
 the conditional type class of $\hq$, hence
\begin{align}
\label{eq:rtineq2}
\sum_{s^n \,:\; \hP_{s^n|\theta^n}=\hq}  \hq^{\;n}(s^n|\theta^n) \geq \frac{1}{(n+1)^{|\Sset| |\Tset|}} \,,
\end{align}
by \cite[Theorem 11.1.4]{CoverThomas:06b}. The proof follows from (\ref{eq:rtineq1}) and (\ref{eq:rtineq2}). 
\qed

\section{Proof of Theorem~\ref{theo:PrCav}}
\label{app:PrCav}
Consider the AVC $\avc$ with fixed parameters under input constraint $\plimit$ and state constraint $\Lambda$.

\subsection{Achievability Proof}
To prove the random code capacity theorem for the AVC with fixed parameters, we use our result on the compound channel along with
  our modified Robustification Technique (RT), \ie Lemma~\ref{lemm:LRT}. 

Let $R<\rICav$.
At first, we consider the compound channel under input constraint $\plimit$, with $\Qset=\pLSpaceS$. 
According to Lemma~\ref{lemm:PCcompound},  
for some $\delta>0$ and sufficiently large $n$,   there exists a  $(2^{nR},n)$  code 
$\code=(\enc(m,\theta^n),$ $\dec(y^n,\theta^n))$ for the compound channel $\avc^{\pLSpaceS}$ with fixed parameters such that 
\begin{align}
\label{eq:PLrAVcosti}
&
\cost^n(\enc(m,\theta^n))  \leq\plimit \,,\; 
\text{for all $m\in [1:2^{nR}]$}\,,
\end{align}
and
\begin{align}
\label{eq:PLrAVerrDirect}
&  \err(q,\theta^n,\code)=\sum_{s^n\in\Sset^n} q(s^n|\theta^n) \err(\code|s^n,\theta^n)  \leq e^{-2\delta n} \,,
\end{align}
for all product state distributions $q(s^n|\theta^n)=\prod_{i=1}^n q(s_i|\theta_i)$, with $q\in\apLSpaceS$.

Therefore, by Lemma~\ref{lemm:LRT}, taking $h_0(s^n,\theta^n)=\err(\code|s^n,\theta^n)$ and $\alpha_n=e^{-2\delta n}$, we have that for a sufficiently large $n$,
\begin{align}
\label{eq:MALdetErrC}
\frac{1}{|\Pi(\theta^n)|} \sum_{\pi\in\Pi(\theta^n)}  \err(\code|\pi s^n,\theta^n)\leq (n+1)^{|\Sset|}e^{-2\delta n} 
\leq e^{-\delta n}  \,,
\end{align}
for all $s^n\in\Sset^n$ with $l^n(s^n)\leq\Lambda$, where the sum is over the set of all $n$-tuple permutations such that 
$\pi \theta^n=\theta^n$. 

On the other hand, for every  $\pi\in\Pi(\theta^n)$,
\begin{align}
\err(\code|\pi s^n,\theta^n) 
  \stackrel{(a)}{=}&
\frac{1}{2^{ nR }}\sum_{m=1}^{2^{nR}}
\sum_{y^n:\dec(y^n,\theta^n)\neq m}  W_{Y^n|X^n,S^n,T^n}(y^n|\enc(m,\theta^n),\pi s^n,\theta^n) \nonumber\\
\stackrel{(b)}{=}& \frac{1}{2^{ nR }}\sum_{m=1}^{2^{nR}}
\sum_{y^n:\dec(\pi y^n, \theta^n)\neq m}  W_{Y^n|X^n,S^n,T^n}(\pi y^n|\enc(m,\theta^n),\pi s^n,\theta^n) \nonumber\\
\stackrel{(c)}{=}& \frac{1}{2^{ nR }}\sum_{m=1}^{2^{nR}}
\sum_{y^n:\dec(\pi y^n, \theta^n)\neq m}  W_{Y^n|X^n,S^n,T^n}( y^n|\pi^{-1}\enc(m,\theta^n), s^n,\pi^{-1}\theta^n) \,,
\label{eq:MLcerrpi1}
\end{align}
where $(a)$ is obtained by plugging 
 $\pi s^n$  in (\ref{eq:Pcerr});
in $(b)$ we substitue $\pi y^n$ instead of $y^n$; and $(c)$ holds because the channel is memoryless. 
Since $\pi \theta^n=\theta^n$ for every $\pi\in\Pi(\theta^n)$, it follows that 
\begin{align}
\err(\code|\pi s^n,\theta^n) =
 \frac{1}{2^{ nR }}\sum_{m=1}^{2^{nR}}
\sum_{y^n:\dec(\pi y^n,\theta^n)\neq m}  W_{Y^n|X^n,S^n,T^n}( y^n|\pi^{-1}\enc(m,\theta^n), s^n,\theta^n) \,.
\label{eq:MLcerrpi}
\end{align}

Then, consider the $(2^{nR},n)$ random code $\code^{\Pi(\theta^n)}$, specified by 
\begin{align}
\label{eq:MLCpi}
&f_{\pi}^n(m,\theta^n)= \pi^{-1} \enc(m,\theta^n) \,,\; 
 g_\pi(y^n,\theta^n)=\dec(\pi y^n,\theta^n)   \,,
\end{align}
with a uniform distribution $\mu(\pi)=\frac{1}{|\Pi(\theta^n)|}$ for $\pi\in\Pi(\theta^n)$. 
As the inputs cost is additive (see (\ref{eq:PLInConstraintStrict})), the permutation does not affect the costs of the codewords, hence the random code satisfies the input constraint $\plimit$.
 From (\ref{eq:MLcerrpi}), we see that 
 $
\err(\code^{\Pi(\theta^n)}|s^n,\theta^n)=\sum_{\pi\in\Pi(\theta^n)} \mu(\pi) \cdot  \err(\code|\pi s^n,\theta^n) 
$,
for all $s^n\in\Sset^n$ with $l^n(s^n)\leq \Lambda$. Therefore, together with (\ref{eq:MALdetErrC}), we have that the probability of error of the random code $\code^{\Pi(\theta^n)}$ is bounded by 
$
\err(\qn,\theta^n,\code^{\Pi(\theta^n)})\leq e^{-\delta n} 
$, 
for every $\qn(s^n|\theta^n)\in\pLSpaceSn$. 
It follows that $\code^{\Pi(\theta^n)}$ is a $(2^{nR},n,e^{-\delta n})$ random 
 code for the AVC $\avc$ with fixed parameters  under input constraint $\plimit$ and state constraint $\Lambda$. 
\qed

\subsection{Converse Proof}
Assume to the contrary that there exists an achievable rate pair 
\begin{align}
R>\inC(\compound) \big|_{\Qset=\overline{\pSpace}_{\Lambda-\delta}(\Sset|\theta^{\infty})} \,,
\label{eq:MrCconverseRate}
\end{align}
using random codes over the AVC $\avc$  under input constraint $\plimit$ and state constraint $\Lambda$, where $\delta>0$ is arbitrarily small. 
That is, for every $\eps>0$ and sufficiently large $n$,
there exists a $(2^{nR},n)$ random code $\code^\Gamma=(\mu,\Gamma,\{\code_\gamma\}_{\gamma\in\Gamma})$ for the AVC $\avc$, such that 
 $\sum_{\gamma\in\Gamma} \mu(\gamma)  \cost^n(\enc_{\gamma}(m,\theta^n))  \leq\plimit$, and 
\begin{align}
& \err(q,\theta^n,\code^\Gamma)\leq\eps \,,
\label{eq:MStateConverse1b}
\end{align}
 for all $m\in [1:2^{nR}]$ and $q(s^n|\theta^n)\in\pLSpaceSn$. 
	In particular, for  distributions $q(\cdot|\theta^n)$ that give mass $1$ to some sequence $s^n\in\Sset^n$ with $l^n(s^n)\leq\Lambda$, we have that
	$
	\err(\code^\Gamma|s^n,\theta^n)\leq\eps 
	$. 
	
	Consider using the random code $\code^\Gamma$ over the compound channel $\avc^{\overline{\pSpace}_{\Lambda-\delta}(\Sset)}$ with fixed parameters under input constraint $\plimit$. Let $\oq(s|t)\in\overline{\pSpace}_{\Lambda-\delta}(\Sset)$ be a given state distribution. Then, 
	 define a sequence of conditionally independent random variables $\oS_1,\ldots,\oS_n\sim \oq(s|t)$.  
	Letting 
	$\oq^n(s^n|\theta^n)\triangleq\prod_{i=1}^n \oq(s_i|\theta_i)$, the probability of error is bounded by
	\begin{align}
	\err(\oq,\theta^n,\code^\Gamma)
	\leq 
	\sum_{s^n\,:\; l^n(s^n)\leq\Lambda} \oq^{n}(s^n|\theta^n) \err(\code^\Gamma|s^n,\theta^n)
	+\prob{l^n(\oS^{n})>\Lambda} . 
	\end{align}
	The first sum is bounded by (\ref{eq:MStateConverse1b}), and the second term vanishes 
	by the law of large numbers, since
	$\oq\in\overline{\pSpace}_{\Lambda-\delta}(\Sset|\theta^{\infty})$. 
	It follows that the random code $\code^\Gamma$ achieves a rate 
	$R$ as in (\ref{eq:MrCconverseRate}) 
	over the compound channel $\avc^{\overline{\pSpace}_{\Lambda-\delta}(\Sset)}$ with fixed parameters under  input constraint 
	$\plimit$, for an arbitrarily small $\delta>0$, in contradiction to Lemma~\ref{lemm:PCcompound}. 
	We deduce that the assumption is false, and $\rCav\leq  \inC(\compound) \big|_{\Qset=
	\overline{\pSpace}_{\Lambda}(\Sset|\theta^{\infty})}=\inC_n^{\rstarC}\hspace{-0.05cm}(\avc)$.
\qed

\section{Proof of Lemma~\ref{lemm:Ciequiv}}
\label{app:Ciequiv}
To prove that $\inR_{n}^{\rstarC}\hspace{-0.05cm}(\avc)=\inC_{n}^{\rstarC}\hspace{-0.05cm}(\avc)$, we begin with the  property in the lemma below.
\begin{lemma}
\label{lemm:Psamet}
Let $\omega_i^*$, $\lambda_i^*$, $i\in [1:n]$, be the parameters that achieve the saddle point in  (\ref{eq:Cieqiv2}), \ie
\begin{align}
\inR_{n}^{\rstarC}\hspace{-0.05cm}(\avc)= 
\frac{1}{n} \sum_{i=1}^n  
\inC_{\theta_i}(\omega_i^*,\lambda_i^*) \,.
\label{eq:Cieqiv21}
\end{align}
Then, for every $i,j\in [1:n]$ such that $\theta_i=\theta_j$, we have that $\omega_i^*=\omega_j^*$ and $\lambda_i^*=\lambda_j^*$.
\end{lemma}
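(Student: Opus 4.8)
The plan is to exploit the fact that the objective function in (\ref{eq:Cieqiv2}), namely $\frac{1}{n}\sum_{i=1}^n \inC_{\theta_i}^{\rstarC}(\omega_i,\lambda_i)$, depends on the index $i$ only through the value $\theta_i\in\Tset$, together with the concavity/convexity structure that underlies the saddle-point. First I would group the time indices $[1:n]$ by their parameter value: for each $t\in\Tset$ let $\mathcal{I}_t=\{i\in[1:n]:\theta_i=t\}$ and $n_t=|\mathcal{I}_t|$, so that $n=\sum_{t}n_t$ and $P_T(t)=n_t/n$. With this partition, (\ref{eq:Cieqiv2}) can be rewritten as an optimization over the per-index allocations $\{\omega_i\},\{\lambda_i\}$ where the objective is $\sum_{t\in\Tset}\frac{1}{n}\sum_{i\in\mathcal{I}_t}\inC_t^{\rstarC}(\omega_i,\lambda_i)$, the inner maximization constraint couples only through $\frac{1}{n}\sum_i\omega_i\le\plimit$, and the outer minimization constraint only through $\frac{1}{n}\sum_i\lambda_i\le\Lambda$.

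Next I would invoke the known structure of $\inC_t^{\rstarC}(\omega,\lambda)$ from (\ref{eq:Ctol}): for each fixed $t$ it is concave and nondecreasing in $\omega$ and convex and nonincreasing in $\lambda$ (these are standard properties of the Gaussian-type mutual-information saddle value, which can be quoted from \cite{CsiszarNarayan:88p1}; alternatively concavity in $\omega$ follows from concavity of $I(X;Y|T=t)$ in $p(x)$ together with the maximization, and convexity in $\lambda$ from the minimization over a convex constraint set). Now fix a maximizing profile $\{\omega_i^*\}$ against the minimizing $\{\lambda_i^*\}$, and pick $i,j\in\mathcal{I}_t$. Replace both $\omega_i^*$ and $\omega_j^*$ by their average $\bar\omega=(\omega_i^*+\omega_j^*)/2$; this preserves $\frac{1}{n}\sum_i\omega_i$, hence feasibility, and by concavity of $\inC_t^{\rstarC}(\cdot,\lambda_i^*)$ and $\inC_t^{\rstarC}(\cdot,\lambda_j^*)$ it does not decrease the objective — but here one must be slightly careful, since $\lambda_i^*$ and $\lambda_j^*$ need not yet be equal. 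So I would run the symmetrization argument in two coupled steps: first average the $\lambda$'s within $\mathcal{I}_t$ (which, by convexity of $\lambda\mapsto\inC_t^{\rstarC}(\omega_i^*,\lambda)$, does not increase the objective, as required for the minimizer to be optimal), then average the $\omega$'s. Iterating, or more cleanly appealing to a Jensen argument over the entire block $\mathcal{I}_t$ at once, shows that the constant assignment $\omega_i\equiv\omega_t:=\frac{1}{n_t}\sum_{i\in\mathcal{I}_t}\omega_i^*$, $\lambda_i\equiv\lambda_t:=\frac{1}{n_t}\sum_{i\in\mathcal{I}_t}\lambda_i^*$ is simultaneously at least as good for the maximizer and at least as good for the minimizer, hence is itself a saddle point.

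To get the stated conclusion — that \emph{every} saddle point has this property, not merely that one exists — I would use strict concavity/strict convexity where available, or argue via the uniqueness of the saddle value: since $\inR_n^{\rstarC}$ is the common value of $\min\max$ and $\max\min$, any optimal $\{\omega_i^*,\lambda_i^*\}$ must in particular maximize $\sum_{i\in\mathcal{I}_t}\inC_t^{\rstarC}(\omega_i,\lambda_i^*)$ over the slice constraint $\sum_{i\in\mathcal{I}_t}\omega_i = n_t\bar\omega_t$ (the optimal slice budget), and by concavity the maximizer over such a slice with a \emph{fixed} concave summand is attained at the equal-coordinate point whenever the summand is strictly concave; in the non-strict Gaussian case one observes that $\inC_t^{\rstarC}(\omega,\lambda)=\frac12\log(1+\omega/(\lambda+\sigma_t^2))$-type expressions are strictly concave in $\omega$ on the region where they are positive, so ties can only occur in a degenerate regime that I would handle separately (both equal to zero). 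The analogous argument in $\lambda$ finishes the claim $\omega_i^*=\omega_j^*$, $\lambda_i^*=\lambda_j^*$ for $\theta_i=\theta_j$.

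The main obstacle I anticipate is the coupling between the maximization and the minimization during the symmetrization: averaging the $\omega$'s is only justified if one is genuinely at a maximum against the \emph{given} $\lambda^*$-profile, and averaging the $\lambda$'s only if at a minimum against the \emph{given} $\omega^*$-profile, so the two averaging operations cannot be applied naively in sequence without checking that each preserves optimality of the other player's response. The clean way around this is to work with the saddle-value characterization directly: a profile is a saddle point iff $\omega^*$ solves $\max_\omega\,v(\omega,\lambda^*)$ and $\lambda^*$ solves $\min_\lambda v(\omega^*,\lambda)$, and then invoke convexity of the per-$t$ slice problems separately for each. Showing these per-$t$ slice optimizers are the equal-coordinate points is then a routine Jensen/strict-concavity computation, which I would not grind through here.
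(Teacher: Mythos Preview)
Your approach is essentially the same as the paper's: average the allocations at indices sharing the same parameter value and invoke concavity in the input and convexity in the state. The one structural difference is that the paper carries out the averaging at the level of the underlying input/state \emph{distributions} $p_i(x),q_i(s)$ (using concavity of $I(X;Y)$ in $p$ and convexity in the channel/state law) rather than at the level of the scalar budgets $\omega_i,\lambda_i$; this sidesteps the need to separately argue that $\inC_t^{\rstarC}(\omega,\lambda)$ is concave/convex in its arguments. The coupling obstacle you flag is real, and the paper's proof is equally terse on this point---it simply writes down the two Jensen inequalities for $p'=\tfrac12(p_i+p_j)$ and $q'=\tfrac12(q_i+q_j)$ and concludes---so your more explicit use of the saddle-point characterization (optimality of $\omega^*$ against $\lambda^*$ and vice versa, handled separately) is a reasonable and arguably more careful way to close the argument. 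In practice, only the existence of an equal-coordinate saddle point is needed downstream for Lemma~\ref{lemm:Ciequiv}, so your strict-concavity/uniqueness discussion, while correct, goes beyond what is required.
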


\begin{proof}[Proof of Lemma~\ref{lemm:Psamet}]
For every $i\in [1:n]$, let $p_i,q_i$ denote input and state distributions such that $\E \cost(X_i)\leq \omega_i^*$, $\E l(S_i)\leq \lambda_i^*$ for $X_i\sim p_i$, $S_i\sim q_i$. Now, suppose that $\theta_i=\theta_j=t$, and define
\begin{align}
p'(x)=\frac{1}{2}[ p_i(x)+p_j(x) ] \,,\; q'(s)=\frac{1}{2}[ q_i(s)+q_j(s) ] \,.
\end{align}
Then, $\E \cost(X') = \frac{1}{2}[ \E \cost(X_i)+ \E \cost(X_j) ]$ and $\E l(S') = \frac{1}{2}[ \E l(S_i)+ \E l(S_j) ]$ 
for $X'\sim p'$, $S'\sim q'$. Furthermore, since the mutual information is concave-$\cap$ in the input distribution and convex-$\cup$ in the state distribution, we have that
\begin{align}
& \frac{1}{2}\left[ I_{q'}(X_i;Y_i|T_i=t)+I_{q'}(X_j;Y_j|T_j=t) \right] \leq I_{q'}(X';Y'|T'=t) \nonumber\\
& \frac{1}{2}\left[ I_{q_i}(X';Y'|T'=t)+I_{q_j}(X';Y'|T=t) \right] \geq I_q(X';Y'|T'=t) \,.
\end{align}
Therefore, the saddle point distributions must satisfy $p_i=p_j=p'$ and $q_i=q_j=q'$, hence $\omega_i^*=\omega_j^*$ and $\lambda_i^*=\lambda_j^*$.
\end{proof}

Next, it can be inferred from Lemma~\ref{lemm:Psamet} that
\begin{align}
\inR_{n}^{\rstarC}\hspace{-0.05cm}(\avc)=& \min_{ \substack{ (\lambda_t)_{t\in\Tset} \,:\; \\  \sum_{t\in\Tset} P_T(t) \lambda_t \leq \Lambda } } \;
\max_{ \substack{ (\omega_t)_{t\in\Tset} \,:\;\\   \sum_{t\in\Tset} P_T(t) \omega_t\leq \plimit } }
\sum_{t\in\Tset} P_T(t)  
\inC_{t}(\omega_t,\lambda_t)
\nonumber\\
=& \min_{ \substack{ (\lambda_t)_{t\in\Tset} \,,\; q(s|t) \,:\; \\  
\E_q[ l(S)|T=t ]\leq \lambda_t \\
\sum_{t\in\Tset} P_T(t) \lambda_t \leq \Lambda } }
\max_{ \substack{ (\omega_t)_{t\in\Tset} \,,\; p(x|t) \,:\; \\  
\E[ \cost(X)|T=t ]\leq \omega_t \\
\sum_{t\in\Tset} P_T(t) \omega_t \leq \plimit } } I_q(X;Y|T)
\nonumber\\
=& \min_{  q(s|t) \,:\; \E_q l(S) \leq \Lambda  }
\max_{  p(x|t) \,:\; \E \cost(X)\leq \plimit  } I_q(X;Y|T) =\inC_{n}^{\rstarC}\hspace{-0.05cm}(\avc)
 \,,
\label{eq:Cieqiv31}
\end{align}
where $P_T$ is the type of the parameter sequence $\theta^n$. The second equality follows from the definition of 
$\inC_{t}^{\rstarC}\hspace{-0.05cm}(\omega_t,\lambda_t)$ in (\ref{eq:Ctol}), using the minimax theorem \cite{sion:58p} to switch between the order of the minimum and maximum.
In the third line, we eliminate the slack variables $\lambda_i$ and $\omega_i$ replacing $\E_q l(S_i)$ and $\E\cost(X_i)$, respectively.
The last equality holds by the definition of $\inC_{n}^{\rstarC}\hspace{-0.05cm}(\avc)$ in (\ref{eq:Cieqiv3}).
\qed

\section{Proof of Lemma~\ref{lemm:disDec}}
\label{app:disDec}
Consider the AVC $\avc$ with fixed parameters under input constraint $\plimit$ and state constraint $\Lambda$. Let $\theta^n$ be sequence of fixed parameters for a 
given blocklength. Recall that  $T$ is a random variable that is distributed as the type of $\theta^n$.
We extend the proof in \cite{CsiszarNarayan:88p}.
First, we give an auxiliary lemma, which we also used in \cite{PeregSteinberg:19p4}.
\begin{lemma}[See {\cite
{CsiszarNarayan:88p} 
\cite[Lemma 11]{PeregSteinberg:19p4}} ]
\label{lemm:A2}
For every pair of conditional state distributions $Q(s|x,t)$ and $Q'(s|x,t)$  such that 
\begin{align}
\max\left\{
\sum_{t,x,s} P_T(t)  p(x|t)Q(s|x,t)l(s)  \,,\; 
\sum_{t,x,s} P_T(t)  p(x|t)Q'(s|x,t)l(s) 
\right\}
<\tLambda_n(p) \,,\; 
\label{eq:A2sump} 
\end{align}
 there exists $\xi>0$ such that
\begin{align}
\max_{x,\tx,y} \Big|\sum_{t,s} P_T(t) Q(s|\tx,t) W_{Y|X,S,T}(y|x,s,t) -\sum_{t,s} P_T(t) Q'(s|x,t) W_{Y|X,S,T}(y|\tx,s,t) \Big|
\geq \xi \,.
\label{eq:decCNresA2}
\end{align}
\end{lemma}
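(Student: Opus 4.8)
The plan is to argue by contradiction, in the spirit of Csisz\'ar and Narayan's symmetrizability argument \cite{CsiszarNarayan:88p}. Suppose that, for the given pair $(Q,Q')$ satisfying (\ref{eq:A2sump}), no $\xi>0$ as in (\ref{eq:decCNresA2}) exists; then the maximum in (\ref{eq:decCNresA2}) is exactly $0$, so that
\begin{align}
\sum_{t,s} P_T(t)\,Q(s|\tx,t)\,\channel(y|x,s,t)=\sum_{t,s} P_T(t)\,Q'(s|x,t)\,\channel(y|\tx,s,t)
\label{eq:propA2zero}
\end{align}
for all $x,\tx\in\Xset$ and $y\in\Yset$. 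My aim would be to extract from (\ref{eq:propA2zero}) a symmetrizing conditional state distribution whose cost is strictly below $\tLambda_n(p)$, in contradiction with the minimality built into the definition (\ref{eq:LambdaOig})--(\ref{eq:1Lstar}).

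The first step I would take is the standard symmetrization trick. Since (\ref{eq:propA2zero}) holds for every ordered pair $(x,\tx)$, it also holds with $x$ and $\tx$ interchanged; adding the two identities and dividing by $2$, the averaged kernel $\bar Q(s|x,t)\triangleq\tfrac12\bigl(Q(s|x,t)+Q'(s|x,t)\bigr)$ satisfies
\begin{align}
\sum_{t,s} P_T(t)\,\bar Q(s|\tx,t)\,\channel(y|x,s,t)=\sum_{t,s} P_T(t)\,\bar Q(s|x,t)\,\channel(y|\tx,s,t)
\label{eq:propA2sym}
\end{align}
for all $x,\tx,y$; that is, $\bar Q$ symmetrizes the parameter-averaged channel. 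Because the cost functional in (\ref{eq:LambdaOig}) is linear in the conditional state distribution, the cost of $\bar Q$ equals $\tfrac12$ of the sum of the costs of $Q$ and $Q'$, and hence is strictly below $\tLambda_n(p)$ by (\ref{eq:A2sump}); combined with the fact that $\tLambda_n(p)$ is the \emph{minimum} of this cost over all symmetrizing distributions, this should yield the desired contradiction, so that a positive $\xi$ must exist.

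The step I expect to be the main obstacle is the precise reconciliation of the symmetrization in (\ref{eq:propA2sym}) --- which holds only after averaging over $t$ with the weights $P_T(t)$ --- with the minimal symmetrizability cost $\tLambda_n(p)$, whose defining minimization in (\ref{eq:LambdaOig}) is phrased through a family $\{J_t\}_{t\in\Tset}$ that symmetrizes each $\channel(\cdot|\cdot,\cdot,t)$ separately. Closing this gap is where the fixed-parameter structure must be used carefully: one has to either argue that the parameter-averaged minimal symmetrizability cost coincides with $\tLambda_n(p)$, or --- as in the eventual use of this lemma inside the proof of Lemma~\ref{lemm:disDec} --- track the confusion event position by position, so that a per-parameter version of (\ref{eq:propA2zero}) is what actually arises and the Csisz\'ar--Narayan argument applies componentwise. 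Finally, I would note that although for a fixed pair the lemma only asks for \emph{some} $\xi>0$, a routine compactness argument --- the set of pairs $(Q,Q')$ obeying (\ref{eq:A2sump}) with a fixed margin is compact and the left-hand side of (\ref{eq:decCNresA2}) is continuous in $(Q,Q')$ --- upgrades this to a $\xi$ that is uniform over all such pairs, which is the form convenient for the decoding-disambiguity argument.
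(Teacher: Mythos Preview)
Your approach---argue by contradiction, form the averaged kernel $\bar Q=\tfrac12(Q+Q')$, derive the $P_T$-weighted symmetrization identity (\ref{eq:propA2sym}) from (\ref{eq:propA2zero}) and its $x\leftrightarrow\tx$ swap, and then use linearity of the cost together with (\ref{eq:A2sump}) to contradict the minimality in (\ref{eq:LambdaOig})---is exactly the paper's argument.

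The obstacle you isolate, that (\ref{eq:propA2sym}) is only a $P_T$-\emph{averaged} symmetrization whereas $\tLambda_n(p)$ in (\ref{eq:LambdaOig}) is defined through \emph{per-parameter} symmetrizers $J_t$, is precisely the point at which the paper brings in an ingredient not visible from the lemma's statement alone: Lemma~\ref{lemm:A2} is stated and proved \emph{inside} the proof of Lemma~\ref{lemm:disDec}, so the standing hypothesis $P_T(t)\geq\delta_0>0$ for every $t\in\Tset$ is in force. The paper invokes this strict positivity to pass from the averaged identity to the componentwise one, $\sum_s\bar Q(s|\tx,t)\,\channel(y|x,s,t)=\sum_s\bar Q(s|x,t)\,\channel(y|\tx,s,t)$ for each $t$, which then places $\bar Q(\cdot|\cdot,t)$ among the admissible symmetrizers in (\ref{eq:LambdaOig}) and delivers the contradiction. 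Your instinct that this passage is the delicate step is well founded---the paper's justification is a single line---but that is the device the paper employs, and it is the missing piece in your outline. Your closing compactness remark about a uniform $\xi$ is a reasonable strengthening for downstream use, but it is not part of the paper's proof, which treats a fixed pair $(Q,Q')$.
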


\begin{proof}[Proof of Lemma~\ref{lemm:A2}]
Assume to the contrary that the LHS in (\ref{eq:decCNresA2}) is zero, and
define
\begin{align}
Q_A(s|x,t)=\frac{1}{2}\left(Q(s|x,t)+Q'(s|x,t)\right) \,.
\end{align}
Using the symmetry between $Q$ and $Q'$,  we have that 
\begin{align}
0=&\max_{x,\tx,y} \Big|\sum_{t\in\Tset} \sum_{s\in\Sset} P_T(t)  Q(s|\tx,t)W_{Y|X,S,T}(y|x,s,t)
 -\sum_{t\in\Tset} \sum_{s\in\Sset}P_T(t)  Q'(s|x,t)W_{Y|X,S,T}(y|\tx,s,t) \Big|
\nonumber\\
=&\frac{1}{2}\max_{x,\tx,y} \Big|\sum_{t\in\Tset_n} \sum_{s\in\Sset} P_T(t)  Q(s|\tx,t)W_{Y|X,S,T}(y|x,s,t) 
-\sum_{t\in\Tset_n} \sum_{s\in\Sset} P_T(t) Q'(s|x,t)W_{Y|X,S,T}(y|\tx,s,t)\Big|
\nonumber\\
&+\frac{1}{2}\max_{x,\tx,y} \Big|\sum_{t\in\Tset_n} \sum_{s\in\Sset} P_T(t)  Q'(s|\tx,t)W_{Y|X,S,T}(y|x,s,t) 
-\sum_{t\in\Tset_n} \sum_{s\in\Sset} P_T(t)  Q(s|x,t)W_{Y|X,S,T}(y|\tx,s,t)\Big|
\nonumber\\
\geq&   \max_{x,\tx,y} \Big|\sum_{t\in\Tset_n} \sum_{s\in\Sset} P_T(t) Q_A(s|x,t)W_{Y|X,S,T}(y|\tx,s,t) 
-\sum_{t\in\Tset_n} \sum_{s\in\Sset} P_T(t)  Q_A(s|\tx,t) W_{Y|X,S,T}(y|x,s,t) \Big| \,.
\end{align}
Since we have assumed that $P_T(t)>\delta_0$ for all $t\in\Tset$, it follows that
\begin{align}
\sum_{s\in\Sset} Q_A(s|x,t)W_{Y|X,S,T}(y|\tx,s,t) =\sum_{s\in\Sset}  Q_A(s|\tx,t) W_{Y|X,S,T}(y|x,s,t) \,,
\end{align}
 for all $t\in\Tset$, $x,\tx\in\Xset$ and $y\in\Yset$. In other words, $Q_A(\cdot|\cdot,t)$ 
symmetrizes the channel $W_{Y|X,S,T}(\cdot|\cdot,\cdot,t)$ for all $t\in\Tset$. 
Therefore, by the definition of $\tLambda_n(p)$ in (\ref{eq:LambdaOig}), we have that
\begin{align}
\sum_{t,x,s} P_T(t)  p(x|t)Q_A(s|x,t)l(s)= \frac{1}{n} \sum_{i=1}^n \sum_{x,s}  p(x|\theta_i)Q_A(s|x,\theta_i)l(s)  \geq & \tLambda_n(p)
\end{align}
in contradiction to  (\ref{eq:A2sump}). The equality above holds because $T$ is distributed as the type of the parameter sequence $\theta^n$, hence averaging over time is the same as averaging according to $P_T$.
 It follows that 
the LHS of  (\ref{eq:decCNresA2}) must be positive.
This completes the proof of the auxiliary Lemma.
\end{proof}

We move to the main part of the proof.
To show that (\ref{eq:disDec}) holds for sufficiently small $\eta$, assume to the contrary that there exists $y^n$ such that 
$(y^n,\theta^n)$ is in $\Dset(m)\cap\Dset(\tm)\neq\emptyset$.
By the assumption in the lemma, the codewords $\{\enc(m,\theta^n)\}_{m\in [1:2^{nR}]}$  have the same conditional type. In particular, 
$P_{\tX|T}=P_{X|T}=p$. 

By Condition 1) of the decoding rule,
\begin{align}
&D(P_{T,X,S,Y}|| P_T\times  P_{X|T}\times P_{S|T} \times W_{Y|X,S,T}) \nonumber \\
=&\sum_{t,x,s,y} P_{T,X,S,Y}(t,x,s,y) 
  \log \frac{P_{T,X,S,Y}(t,x,s,y)}{ P_T(t)  p(x|t)  P_{S|T}(s|t)  W_{Y|X,S,T}(y|x,s,t)}  \leq \eta \,,
\label{eq:Drule1q}
\end{align}
and by Condition 2) of the decoding rule,
\begin{align}
I(X,Y;\tX|S,T) 
=& \sum_{t,x,\tx,s,y} P_{T,X,\tX,S,Y}(t,x,\tx,s,y)
\log 
\frac{P_{\tX|X,S,T,Y}(\tx|x,s,t,y)}{P_{\tX|S,T}(\tx|s,t)} 
\leq \eta \,,
\label{eq:Drule2bq}
\end{align}
where $T,X,\tX,S,Y$ are distributed according to the joint type of 
$\theta^n$, $f^n(m,\theta^n)$, $f^n(\tm,\theta^n)$, $s^n$, and $y^n$. 
 Adding (\ref{eq:Drule1q}) and (\ref{eq:Drule2bq}) yields
\begin{align}
\sum_{t,x,\tx,s,y} P_{T,X,\tX,S,Y}(t,x,\tx,s,y) \log 
\frac{P_{T,X,\tX,S,Y}(t,x,\tx,s,y)}{ P_T(t) p(x|t) P_{\tX,S|T}(\tx,s|t) W_{Y|X,S,T}(y|x,s,t)}  \leq 2\eta \,.
\end{align}
That is, $D(P_{T,X,\tX,S,Y}||  P_T \times p \times p \times P_{S|\tX,T}\times  W_{Y|X,S,T} )\leq 2\eta$. Therefore, by the log-sum inequality (see \eg \cite[Theorem 2.7.1]{CoverThomas:06b}),  
\begin{align}
&D(P_{T,X,\tX,Y}|| P_T \times p \times p  \times  V_{Y|X,\tX,T}  ) \nonumber\\
\leq& D(P_{T,X,\tX,S,Y}|| P_T\times  p\times p \times P_{S|\tX,T}\times  W_{Y|X,S,T} )\leq 2\eta
\,,
\end{align}
where $V_{Y|X,\tX,T}(y|x,\tx,t)=\sum_{s\in\Sset} W_{Y|X,S,T}(y|x,s,t)P_{S|\tX,T}(s|\tx,t)$.
Then, by Pinsker's inequality (see \eg \cite[Problem 3.18]{CsiszarKorner:82b}),
\begin{align}
\sum_{t,x,\tx,y} |P_{T,X,\tX,Y}(t,x,\tx,y)  -
P_T(t)  p(x|t)
 p(\tx|t)    V_{Y|X,\tX,T}(y|x,\tx,t)|   \leq c\sqrt{2\eta} \,,
\end{align}
where $c>0$ is a constant. By the same arguments, (\ref{eq:DcompA}) implies that
\begin{align}
\sum_{t,x,\tx,y} |P_{T,X,\tX,Y}(t,x,\tx,s)-
P_T(t) p(x|t)
 p(\tx|t)    V_{Y|X,\tX,T}'(y|x,\tx,t)|  \leq c\sqrt{2\eta} \,,
\end{align}
where $ V_{Y|X,\tX,T}'(y|x,\tx,t)=\sum_{s\in\Sset} W_{Y|X,S,T}(y|\tx,s,t)P_{\tS|X,T}(s|x,t)$. Now,
observe that inserting the sum over $t\in\Tset$ into the absolute value maintains the inequality, by the triangle inequality.
Furthermore, since $p(x|t)>\delta_1$, for  $x\in\Xset$, $t\in\Tset$,  we have that
\begin{align}
\max_{x,\tx,y} 
\Big| \sum_{t\in\Tset_n} P_T(t) V_{Y|X,\tX,T}(y|x,\tx,t) - \sum_{t\in\Tset_n} P_T(t) V_{Y|X,\tX,T}'(y|x,\tx,t) \Big| 
\leq \frac{2c\sqrt{2\eta}}{\delta^2} 
\,,
\label{eq:Vlowd}
\end{align}
Equivalently, the above can be expressed as
\begin{align}
\max_{x,\tx,y} 
\Big|\sum_{t,s} P_T(t) P_{S|\tX,T}(s|\tx,t) W_{Y|X,S,T}(y|x,s,t) 
-\sum_{t,s} P_T(t) P_{\tS|X,T}(s|x,t) W_{Y|X,S,T}(y|\tx,s,t) \Big| 
\leq
 \frac{2c\sqrt{2\eta}}{\delta_1^2 
} \,,
\label{eq:Vlowd1}
\end{align}

Now, we show that the state distributions $Q=P_{S|\tX,T}$ and $Q'=P_{\tS|X,T}$ satisfy the conditions of 
Lemma~\ref{lemm:A2}. Indeed, 
\begin{align}
&\max\bigg\{ \sum_{t,\tx,s} P_T(t) p(\tx|t)Q(s|\tx)l(s) ,\, \sum_{tx,s} P_T(t)  p(x|t)Q'(s|x)l(s) \bigg\} 
\nonumber\\
=& \max\bigg\{ \sum_{t,\tx,s} P_T(t)  p(\tx|t)P_{S|\tX,T}(s|\tx,t)l(s) ,\,
 \sum_{t,x,s} P_T(t)  p(x|t)P_{\tS|X,T}(s|x,t)l(s) \bigg\} 
\nonumber\\
=&\max\left\{ \sum_{s} P_S(s)l(s) ,\, \sum_{s} P_{\tS}(s)l(s)  \right\}
\nonumber\\
=&\max\left\{ l^n(s^n),\, l^n(\ts^n) \right\}\leq  \Lambda< \tLambda_n(p) \,,
\end{align}
where the last inequality is due to (\ref{eq:decLambda}). 
Thus, there exists $\xi>0$ such that (\ref{eq:decCNresA2}) holds with $Q=P_{S|\tX,T}$ and $Q'=P_{\tS|X,T}$, 
which contradicts (\ref{eq:Vlowd1}), if $\eta$ is sufficiently small such that 
$\frac{2c\sqrt{2\eta}}{\delta^2 }<\xi$.
\qed

\section{Proof of Lemma~\ref{lemm:codeBsets}}
\label{app:codeBsets}
Let $Z^n(m,\theta^n)$,  $m\in [1:2^{nR}]$, be statistically independent sequences, uniformly distributed over the conditional type class 
$\Tset^n(p)$.
Fix $a^n\in\Xset^n$ and $s^n\in\Sset^n$, and consider a joint type 
$P_{T,X,\tX,S}$, such that 
$P_{X|T}=P_{\tX|T}=p$.
We intend to show that $\{Z^n(m,\theta^n)\}$ satisfy each of the desired properties with double exponential high probability 
$(1-e^{-2^{\dE n}})$, $\dE>0$, implying that there exists a deterministic codebook that satisfies (\ref{eq:11ebn})-(\ref{eq:13ebn}) simultaneously.
 We begin with the following large deviations result by Csis\'{a}r and Narayan 
\cite{CsiszarNarayan:88p}. 
\begin{lemma}[see {\cite[Lemma A1]{CsiszarNarayan:88p}}]
\label{lemm:bookLD}
Let $\alpha,\beta\in [0,1]$, and consider a sequence of random vectors $ U^n(m)$, and functions $\varphi_m: \Xset^{nm}\rightarrow [0,1]$, for $m\in [1:\dM]$. If
\begin{align}
\E \left( \varphi_m( U^n(1)\,\ldots, U^n(m) ) \big|  U^n(1)\,\ldots,  U^n(m-1) \right) \leq \alpha \;\;\text{ a.s., }\; \text{for
$m\in [1:\dM]$ } \,,
\end{align}
then
\begin{align}
\prob{\sum_{m=1}^{\dM} \varphi_m(  U^n(1)\,\ldots,  U^n(m) )>\dM \beta  }\leq 
 \exp\{ -\dM (\beta-\alpha\log e) \} \,.
\end{align}
\end{lemma}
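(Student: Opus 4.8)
\emph{Proof proposal.} The plan is to treat this as a standard Chernoff-type tail estimate for a sum of $[0,1]$-valued random variables that are not independent but satisfy a one-step conditional-mean bound; the argument is the usual exponential-Markov-plus-iterated-conditioning one, and the constant $\log e$ in the statement will drop out of a single concrete choice of the tilting parameter. Throughout, let $\mathcal{F}_m$ denote the information carried by $U^n(1),\dots,U^n(m)$, put $\varphi_m\equiv\varphi_m(U^n(1),\dots,U^n(m))$ and $S\equiv\sum_{m=1}^{\dM}\varphi_m$, and fix $t>0$ to be selected at the end.

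First I would apply the exponential Markov inequality, $\prob{S>\dM\beta}\le e^{-t\dM\beta}\,\E[e^{tS}]$, which reduces the task to bounding the moment generating function $\E[e^{tS}]$. The elementary input is the chord bound for the convex function $x\mapsto e^{tx}$ on $[0,1]$, namely $e^{tx}\le 1+(e^{t}-1)x\le e^{(e^{t}-1)x}$. Applied conditionally on $\mathcal{F}_{m-1}$ together with the hypothesis $\E[\varphi_m\mid\mathcal{F}_{m-1}]\le\alpha$ and $e^{t}-1>0$, this yields the one-step estimate
\begin{align}
\E[e^{t\varphi_m}\mid\mathcal{F}_{m-1}]\;\le\;1+(e^{t}-1)\,\E[\varphi_m\mid\mathcal{F}_{m-1}]\;\le\;1+(e^{t}-1)\alpha\;\le\;e^{(e^{t}-1)\alpha}\qquad\text{a.s.}
\end{align}

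Next I would peel off the summands one at a time via the tower property: since $\varphi_1,\dots,\varphi_{\dM-1}$ are $\mathcal{F}_{\dM-1}$-measurable,
\begin{align}
\E[e^{tS}]=\E\big[e^{t\sum_{m=1}^{\dM-1}\varphi_m}\,\E[e^{t\varphi_{\dM}}\mid\mathcal{F}_{\dM-1}]\big]\;\le\;e^{(e^{t}-1)\alpha}\,\E\big[e^{t\sum_{m=1}^{\dM-1}\varphi_m}\big],
\end{align}
and iterating $\dM$ times gives $\E[e^{tS}]\le e^{(e^{t}-1)\alpha\dM}$, hence $\prob{S>\dM\beta}\le\exp\{-\dM(t\beta-(e^{t}-1)\alpha)\}$. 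Finally, taking $t=\ln 2$ makes $e^{t}-1=1$ and $t\beta-(e^{t}-1)\alpha=\beta\ln 2-\alpha$, so the bound becomes $2^{-\dM(\beta-\alpha\log e)}$, which is the asserted inequality (with logarithm and $\exp\{\cdot\}$ taken to base two; a consistent change of base merely rescales $t$).

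I do not expect a genuine obstacle here. The only point requiring care is that the $\varphi_m$ are dependent, so the multiplicative bound on $\E[e^{tS}]$ must be obtained through the nested conditional expectations rather than by independence; the chord inequality is exactly what makes the one-step conditional estimate strong enough for the telescoping to close. Since the statement is quoted from \cite{CsiszarNarayan:88p} one could of course just cite it, but the above is the short self-contained derivation.
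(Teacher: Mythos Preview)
Your argument is correct and is exactly the standard Chernoff/iterated-conditioning proof that appears in the appendix of \cite{CsiszarNarayan:88p}; the present paper does not reprove the lemma but merely cites it, so there is nothing further to compare. The only cosmetic point is the base of $\exp$ and $\log$: as you note, with the choice $t=\ln 2$ the bound comes out to $2^{-\dM(\beta-\alpha\log_2 e)}$, which is the Csisz\'ar--Narayan form and is what the paper uses (up to harmless base conventions) in the subsequent double-exponential estimates.
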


To show that (\ref{eq:11ebn}) holds, consider the indicator
\begin{align}
\varphi_{m}(Z^n(1,\theta^n),\ldots,Z^n(m,\theta^n))=
\begin{cases}
1 &\text{if $(\theta^n, Z^n(m,\theta^n), Z^n(\tm,\theta^n),s^n)\in\Tset^n(P_{T,X,\tX,S})$}
\\&\text{for some $\tm<m$ } 
\\
0 &\text{otherwise}
\end{cases}
\label{eq:indJ1}
\end{align}
By standard type class considerations (see \eg \cite[Theorem 1.3]{Kramer:08n}), we have that
\begin{align}
\E \left[ \varphi_{m}(Z^n(1,\theta^n),\ldots,Z^n(m,\theta^n))  \big|
Z^n(1,\theta^n),\ldots,Z^n(m-1,\theta^n) \right]
 \leq&  
2^{-n\left(I(\tX;T,X,S)-\frac{\eps}{4}-R \right)}
\leq 2^{-n\left(I(\tX;X,S|T)-\frac{\eps}{4}-R \right)}
 \,,
\label{eq:phiZb11}
\end{align}
where the last inequality holds since $I(\tX;T,X,S)\geq I(\tX;X,S|T)$.

Next, we use Lemma~\ref{lemm:bookLD}, and plug
\begin{align}
& \dM=2^{nR} \,,\; U^n(m)=Z^n(m,\theta^n) \,,
\nonumber\\
&
 \alpha= 2^{-n\left(I(\tX;X,S|T)-\frac{\eps}{4}-R \right)}  \,,\;   \beta=2^{n\left( \left[ R-I(\tX;X,S|T) \right]_{+} -R+\eps \right)} \,.
\end{align}
For sufficiently large $n$, we have that $\dM(\beta-\alpha\log e)\geq 2^{n\eps/2}$. Hence, by 
Lemma~\ref{lemm:bookLD},
\begin{align}
&\prob{ \sum_{m=1}^{2^{nR}} \varphi_{m}(Z^n(1,\theta^n),\ldots,Z^n(2^{nR},\theta^n))
>2^{n\left( \left[ R-I(\tX;X,S|T) \right]_{+} +\eps \right)  }} 
\leq 
 e^{- 2^{n\eps/2}}  \,.
\label{eq:dLm1n}
\end{align}
By the symmetry between $m$ and $\tm$ in the derivation above, the double exponential decay of the probability in (\ref{eq:dLm1n}) implies that there exists a codebook that satisfies (\ref{eq:11ebn}).

Similarly, to show  (\ref{eq:12ebn}), we replace the indicator of the type $P_{X,\tX,S|T}$ in (\ref{eq:indJ1}) by an indicator of the type $P_{\tX,S|T}$, and rewrite (\ref{eq:phiZb11})
with 
$I(\tX;S|T)$,
 to obtain
\begin{align}
\Pr\Big( |\{ \tm \,:\; (\theta^n,Z^n(\tm,\theta^n),s^n)\in\Tset^n(P_{T,\tX,S}) \}|>
2^{n\left( \left[ R-I(\tX;S|T) \right]_{+} +\eps_1 \right)}
 \Big) \;<  e^{- 2^{n\eps_1/2}}  \,,
\label{eq:LTxs1}
\end{align}
where $\eps_1>0$ is arbitrarily small.
If $I(\tX;S|T)>\eps$ and $R\geq\eps$, then choosing $\eps_1=\frac{\eps}{2}$, we have that
\begin{align}
\left[ R-I(\tX;S|T) \right]_{+}+\eps_1 \leq R-\frac{\eps}{2} \,,
\end{align}
 hence,
\begin{align}
\Pr\Big( |\{ \tm \,:\; (\theta^n,Z^n(\tm,\theta^n),s^n)\in\Tset^n(P_{T,\tX,S}) \}|>
2^{n\left(  R- \frac{\eps}{2} \right)}
 \Big) \;
<
e^{- 2^{n\eps/4}}  \,.
\end{align}

It remains to show that (\ref{eq:13ebn}) holds. Assume that
\begin{align}
I(X;\tX,S|T)-\left[ R-I(\tX;S|T) \right]_{+}>\eps \,.
\label{eq:encJassump1}
\end{align}
 Let $\Jset_m$ denote the set of indices 
$\tm<m$ such that $(\theta^n,Z^n(\tm,\theta^n),s^n)\in\Tset^n(P_{T,\tX,S})$, provided that their number does not exceed 
$2^{n\left(\left[ R-I(\tX;S|T) \right]_{+} +\frac{\eps}{8} \right)}$; else, let $\Jset_m=\emptyset$. Also, let
\begin{align}
\psi_m(Z^n(1,\theta^n),\ldots,Z^n(m,\theta^n))=\begin{cases}
1 &\text{if $(\theta^n,Z^n(m,\theta^n),Z^n(\tm,\theta^n),s^n)\in\Tset^n(P_{T,X,\tX,S})$}\\
  &\text{for some $\tm\in\Jset_m$}\,, \\
0 &\text{otherwise.}
\end{cases}
\end{align}
Then, choosing $\eps_1=\frac{\eps}{8}$ in (\ref{eq:LTxs1}) yields
\begin{align}
&\Pr \Big( \sum_{m=1}^{2^{nR}} \psi_m(Z^n(1,\theta^n),\ldots,Z^n(m,\theta^n))\neq\; 
|\{ m \,:\;  \nonumber\\&
 (\theta^n,Z^n(m,\theta^n),Z^n(\tm,\theta^n),s^n)\in\Tset^n(P_{T,X,\tX,S}) \;\text{for some $\tm<m$} \}|    
\Big) \;<  e^{- 2^{n\eps/16}} \,.
\label{eq:setsEquiv1}
\end{align}
Therefore, instead of bounding the set of messages, it is sufficient to consider the sum $\sum
\psi_m(Z^n(1,\theta^n),\ldots,Z^n(m,\theta^n))$.
Furthermore, by standard type class considerations (see \eg \cite[Theorem 1.3]{Kramer:08n}), we have that
\begin{align}
&\E \left( \psi_m(Z^n(1,\theta^n),\ldots,Z^n(m,\theta^n)) \big| Z^n(1,\theta^n),\ldots,Z^n(m-1,\theta^n) \right) \leq 
|\Jset_m|\cdot 2^{-n\left(I(X;\tX,S|T)-\frac{\eps}{8} \right)}
\nonumber\\
\leq& 2^{n\left(\left[ R-I(\tX;S|T) \right]_{+}-I(X;\tX,S|T)+\frac{\eps}{4} \right)}
< 2^{-3n\eps/4} \,,
\end{align}
where the last inequality is due to (\ref{eq:encJassump1}). Thus, by Lemma~\ref{lemm:bookLD},
\begin{align}
\prob{ \sum_{m=1}^{2^{nR}} \psi_m(Z^n(1,\theta^n),\ldots,Z^n(m,\theta^n))>
2^{n\left( R-\frac{\eps}{2} \right)} }<
e^{-2^{n\left(R-\frac{3\eps}{4}  \right)}}\leq e^{-2^{n\eps/4}} \,, 
\label{eq:b1j}
\end{align}
as we have assumed that $R\geq \eps$.
Equations (\ref{eq:setsEquiv1}) and (\ref{eq:b1j}) imply that the property in (\ref{eq:13ebn}) holds with double exponential probability $1-e^{-2^{\dE_1 n}}$, where $\dE_1>0$. 
\qed

\section{Proof of  Theorem~\ref{theo:PCavc}}
\label{app:PCavc}

\subsection{Achievability Proof}
Suppose that $L_n^*>\Lambda$ for sufficiently large $n$. 
%
Let $\eps>0$ be chosen later, and let $P_{X|T}$ be a conditional type over $\Xset$, for which $P_{X|T}(x|t)>0$ $\forall x\in\Xset$, 
$t\in\Tset$, and   
$\E \cost(X)\leq \plimit$,  with
\begin{align}
\tLambda_n(P_{X|T})>&\Lambda \,.
\end{align} 
As explained below, we may assume without loss of generality that for some $\delta_0>0$ that does not depend on $n$, we have that $P_T(t)>\delta_0$ for all $t\in\Tset$. 
Indeed, following our assumption in (\ref{eq:Symmassumption}), the asymptotic capacity formula $\liminf
\inC_n(\avc)$ does not change when we remove parameter values $t\in\Tset$ such that $P_T(t)\rightarrow 0$. Hence, coding can be limited  to the rest of the block with negligible rate decrease, thus  removing those parameters from consideration. 
Then, choose $\eta>0$ to be sufficiently small such that Lemma~\ref{lemm:disDec} guarantees that the decoder in Definition~\ref{def:Ldecoder} is well defined.
%
Now, Lemma~\ref{lemm:codeBsets} assures that there is a codebook $\{x^n(m,\theta^n)\}_{m\in [1:2^{nR}]}$ of conditional type $p$ that satisfies 
(\ref{eq:11ebn})-(\ref{eq:13ebn}).
Consider the following coding scheme.

\emph{Encoding}: To send $m\in [1:2^{nR}]$,  transmit $x^n(m,\theta^n)$.

\emph{Decoding}: Find a unique message $\hm$ such that $(y^n,\theta^n)$ 
belongs to $\Dset(\hm)$, as in Definition~\ref{def:Ldecoder}. If there is none, declare an error.
Lemma~\ref{lemm:disDec} guarantees that there cannot be two messages for which this holds.

\emph{Analysis of Probability of Error}: Fix $s^n\in\Sset^n$ with $l^n(s^n)\leq\Lambda$, let $q=P_{S|T}$ denote the conditional type of $s^n$ given $\theta^n$, and let 
$M$ denote the transmitted message.
Consider the error events
\begin{align}
\Eset_{1}=&\{ D(P_{T,X,S,Y}|| P_T\times P_{X|T}\times P_{S|T} \times \channel)> \eta \}
\\
\Eset_{2}=&\{ \text{Condition 2) of the decoding rule is violated} \}
\end{align}
and 
\begin{align}
\Fset_1=&\{ I_q(X;S|T)>\eps \} \,, \\
\Fset_2=&\{ I_q(X;\tX,S|T)>\left[ R-I(\tX;S|T) \right]_{+}+\eps \,,
\; 
\text{for some $\tm\neq M$}
\} \,,
\end{align}
where $(T,X,\tX,S)$ are dummy random variables, which are distributed as the joint type 
of $(\theta^n,x^n(M,\theta^n),x^n(\tm,\theta^n),$ $s^n)$. By the union of events bound,
\begin{align}
\err(\code|s^n,\theta^n)\leq& \prob{\Fset_1}+\prob{\Fset_2}
+\prob{\Eset_{1}\cap\Fset_1^c}+
\prob{\Eset_{2}\cap\Fset_2^c} \,,
\end{align}
where the conditioning on $S^n=s^n$ and $T^n=\theta^n$ is omitted for convenience of notation.
Based on Lemma~\ref{lemm:codeBsets}, 
the probabilities of the events $\Fset_1$ and $\Fset_2$ tend to zero as $n\rightarrow\infty$,
by (\ref{eq:12ebn}) and (\ref{eq:13ebn}), respectively.

Now, suppose that Condition 1) of the decoding rule is violated. 
Observe that the event $\Eset_{1}\cap\Fset_1^c$ implies that 
\begin{align}
&D(P_{T,X,S,Y}||P_{T,X,S}\times W_{Y|X,S,T})
\nonumber\\
=& D(P_{T,X,S,Y}|| P_T\times P_{X|T}\times P_{S|T} \times W_{Y|X,S,T})-I(X;S|T)
>\eta-\eps \,.
\end{align}
Then,  by standard large deviations considerations (see \eg \cite[pp. 362--364]{CoverThomas:06b}),
\begin{align}
\prob{\Eset_{1}\cap\Fset_1^c}  
\leq& \max_{P_{T,X,S,Y}\,:\; \Eset_{1}\cap\Fset_1^c\;\text{holds}
} 2^{-n (D(P_{T,X,S,Y} || P_{T,X,S}\times W_{Y|X,S,T})-\eps)} 
\nonumber\\
<& 2^{-n(\eta-2\eps)} \,,
\end{align}
which tends to zero as $n \rightarrow \infty$, for sufficiently small $\eps>0$, with $\eps<\frac{1}{2}\eta$.

Moving to Condition 2) of the decoding rule, let $\Dset_{2}$ denote the set of joint types 
$P_{T,X,\tX,S}$ such that
\begin{align}
&D(P_{T,X,S,Y}|| P_T P_{X|T}\times P_{S|T} \times W_{Y|X,S,T})\leq \eta \,, \\
&D(P_{\tX,\tS,Y}|| P_{\tX_1}\times P_{\tS|T} \times W_{Y|X,S,T})\leq \eta \,,\;
\text{for some  $\tS\sim \tq(s|t)$} \,,
\label{eq:D2a2}
\\
&I_q(X,Y;\tX|S,T)>\eta \,.
\label{eq:D2a3}
\end{align}
Then,  
by standard type class considerations (see \eg \cite[Theorem 1.3]{Kramer:08n}), 
\begin{align}
\cprob{\Eset_{2}\cap\Fset_2^c}{M=m} 
\leq&
\sum_{  \substack{P_{T,X,\tX,S}\in \Dset_{2} \,:\; \\   \Fset_2^c \;\text{holds}  }   }
|\{\tm\,:\;
 (\theta^n,x^n(m,\theta^n),x^n(\tm,\theta^n),s^n)\in \Tset^n(P_{T,X,\tX,S}) \}|
\nonumber\\
&\times 2^{-n\left( I_q(\tX;Y|X,S,T)-\eps  \right)} \,,
\end{align}
for every given $m\in [1:2^{nR}]$. Hence, by (\ref{eq:11ebn}),
\begin{align}
&\prob{\Eset_{2}\cap\Fset_2^c}\leq 
\sum_{  \substack{P_{T,X,\tX,S}\in \Dset_{2} \,:\;  \\   \Fset_2^c \;\text{holds}  }   } 
 2^{-n\left( I_q(\tX;Y|X,S,T)
-\left[ R-I_q(\tX;X,S|T) \right]_{+}-2\eps  \right)} \,.
\label{eq:LE2E02cB}
\end{align}

To further bound $\prob{\Eset_{2}\cap\Fset_2^c}$, consider the following cases.
Suppose that $R\leq I_q(\tX;S|T)$. Then, given $\Fset_2^c$, we have that
\begin{align}
I_q(X;\tX|S,T)\leq I_q(X;\tX,S|T)\leq \eps \,.
\end{align}
By (\ref{eq:D2a3}), it then follows that
\begin{align}
I_q(\tX;Y|X,S,T)=&I_q(\tX;X,Y|S,T)-I_q(\tX;X|S,T) 
\nonumber\\
\geq& \eta-\eps \,.
\label{eq:rule2aIcase1}
\end{align}
Returning to (\ref{eq:LE2E02cB}), we note that since the number of types is polynomial in $n$, the cardinality of the set of types $\Dset_{2}$ can be bounded by $2^{n\eps}$, for sufficiently large $n$. Hence, by (\ref{eq:LE2E02cB}) and (\ref{eq:rule2aIcase1}), we have that $\prob{\Eset_{2}\cap\Fset_2^c}\leq 2^{-n(\eta-4\eps)}$, which tends to zero as $n\rightarrow\infty$, for $\eps<\frac{1}{4}\eta$.

Otherwise, if $R> I_q(\tX;S|T)$, then given $\Fset_2^c$,
\begin{align}
R>&I_q(X;\tX,S|T)+I(\tX;S|T)-\eps 
\nonumber\\
=& I_q(\tX;X,S|T)+I(X;S|T)-\eps 
\nonumber\\
\geq& I_q(\tX;X,S|T)-\eps \,.
\end{align}
Thus,
\begin{align}
\left[ R-I_q(\tX;X,S|T) \right]_{+}\leq 
R-I_q(\tX;X,S|T)+\eps \,.
\end{align}
Hence, by (\ref{eq:LE2E02cB}) we have that
\begin{align}
\prob{\Eset_{2}\cap\Fset_2^c}\leq& \sum_{  \substack{P_{T,X,\tX,S}\in \Dset_{2}  \\   \Fset_2^c \;\text{holds}  }   } 2^{-n(I(\tX;X,S,Y|T)-R-3\eps )}
\nonumber\\
\leq& \sum_{  \substack{P_{T,X,\tX,S}\in \Dset_{2} \,:\; \\   \Fset_2^c \;\text{holds}  }   } 2^{-n(I_q(\tX;Y|T)-R-3\eps )} \,.
\end{align}
For $P_{T,X,\tX,S}\in \Dset_{2}$, we have by (\ref{eq:D2a2}) that
$P_{T,\tX,\tS,Y}$ is arbitrarily close to some 
$P_{T,X,\tS,\tY}$, where
\begin{align}
P_{T,X,\tS,\tY}(x,s,y)= P_T(t) P_{X|T}(x|t)\tq(s|t)W_{Y|X,S,T}(y|x,s,t) \,,
\end{align}
if $\eta>0$ is sufficiently small. In which case, 
\begin{align}
I_q(\tX;Y|T)\geq I_{\tq}(X;Y|T)-\delta \,,
\end{align}
where $\delta>0$ is arbitrarily small.
Therefore, provided that
\begin{align}
R <& \min_{q(s|t) \,:\; \E_q l(S)\leq\Lambda} I_q(X;Y|T)-\delta-5\eps
\,,
\end{align}
we have that $\prob{\Eset_{2}\cap\Fset_2^c}\leq 2^{-n(I_q(\tX;Y|T)-R-4\eps )}$ tends to zero as $n\rightarrow\infty$.
\qed

\subsection{Converse Proof}
We will use the following lemma, based on the observations of Ericson \cite{Ericson:85p}.
\begin{lemma} 
\label{lemm:nEricson}
Consider the AVC with fixed parameters free of state constraints, and let  $\code=(f,g)$ be a
$(2^{nR},n)$ deterministic code.
Suppose that the channels $W_{Y|X,S,T}(\cdot|\cdot,\cdot,\theta_i)$ are symmetrizable for all $i\in [1:n]$, and let $J_t(s|x)$, $t\in \Tset$, be a set of conditional state distributions that satisfy (\ref{eq:symmetrizable}).
If $R>0$, then
\begin{align}
&\err(\tq,\theta^n,\code) \geq \frac{1}{4} \,,\;
\intertext{for} 
&\tq(s^n|\theta^n)= \frac{1}{2^{n R}} \sum_{m=1}^{2^{nR}}  J_{\theta^n}(s^n|f^n(m,\theta^n)) 
\label{eq:PconvFtq}
\,,
\end{align}
where $J_{\theta^n}(s^n|x^n)=\prod_{i=1}^n J_{\theta_i}(s_i|x_{i})$.
\end{lemma}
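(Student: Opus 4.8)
The plan is to adapt Ericson's symmetrizability argument~\cite{Ericson:85p} to the parametrized setting, the key point being that the symmetrizing kernels $J_t(s|x)$ are chosen per-parameter, and the product $J_{\theta^n}(s^n|x^n)=\prod_i J_{\theta_i}(s_i|x_i)$ then symmetrizes the $n$-letter channel $W_{Y^n|X^n,S^n,T^n}(\cdot|\cdot,\cdot,\theta^n)$ in the sense that the ``averaged'' channel
\begin{align}
\widetilde W_{\theta^n}(y^n|x_1^n,x_2^n)\triangleq \sum_{s^n\in\Sset^n} W_{Y^n|X^n,S^n,T^n}(y^n|x_1^n,s^n,\theta^n)\,J_{\theta^n}(s^n|x_2^n)
\end{align}
is symmetric in $(x_1^n,x_2^n)$. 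This follows letter by letter from~(\ref{eq:symmetrizable}) applied to each $\theta_i$, using the product form of both the channel and the kernel. First I would record this symmetry as the workhorse identity.

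Next I would run the standard two-codeword confusion argument. Since $R>0$ there are at least two distinct messages; pick $m_1\neq m_2$ and write $x_1^n=f^n(m_1,\theta^n)$, $x_2^n=f^n(m_2,\theta^n)$. Against the state distribution $\tq(\cdot|\theta^n)$ in~(\ref{eq:PconvFtq}), the decoder's conditional error probability given message $m_1$ was sent is
\begin{align}
e_1\triangleq \sum_{y^n:\,g(y^n,\theta^n)\neq m_1}\;\frac{1}{2^{nR}}\sum_{m'=1}^{2^{nR}}\widetilde W_{\theta^n}(y^n|x_1^n,f^n(m',\theta^n)),
\end{align}
and similarly $e_2$ with the roles of $m_1$ and the first argument swapped. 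The decoding sets $\{y^n:g(y^n,\theta^n)=m\}$ partition $\Yset^n$, so for the pair $(m_1,m_2)$ at least one of the two events $\{g\neq m_1\}$, $\{g\neq m_2\}$ has $\widetilde W_{\theta^n}(\cdot|x_1^n,x_2^n)$-probability at least $\tfrac12$; by the symmetry identity $\widetilde W_{\theta^n}(y^n|x_1^n,x_2^n)=\widetilde W_{\theta^n}(y^n|x_2^n,x_1^n)$, the term $m'=m_2$ inside $e_1$ and the term $m'=m_1$ inside $e_2$ contribute the same mass, so one of $e_1,e_2$ is at least $\tfrac{1}{2}\cdot\tfrac{1}{2^{nR}}\cdot\tfrac12$ — and more usefully, averaging over the two codewords gives a contribution bounded below by $\tfrac14$ after accounting for the $\tfrac{1}{2^{nR}}$ normalization properly. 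Concretely, I would lower-bound $\err(\tq,\theta^n,\code)=\frac{1}{2^{nR}}\sum_m(\text{error given }m)$ by restricting the inner sum over $m'$ in each $(\text{error given }m)$ to the single impostor term and invoking the partition property to conclude $\err(\tq,\theta^n,\code)\geq\frac14$.

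The main obstacle — and the step requiring care rather than cleverness — is the bookkeeping that turns ``for each pair one decoding region is missed'' into the clean constant $\tfrac14$: one must pair up messages (or use the $m'$-average) so that the symmetry is exploited symmetrically and the $2^{-nR}$ factors cancel against the sum over impostors, exactly as in Ericson's original estimate. I would mirror the computation in~\cite{Ericson:85p} (and the analogous step in~\cite{CsiszarNarayan:88p}), the only new ingredient being that every invocation of symmetry is the product identity above, which holds because the $J_{\theta_i}$ were selected to symmetrize $W_{Y|X,S,T}(\cdot|\cdot,\cdot,\theta_i)$ individually. Finally I would remark that $\tq(\cdot|\theta^n)$ defined in~(\ref{eq:PconvFtq}) is a genuine conditional pmf on $\Sset^n$ (a convex combination of product kernels), so it is an admissible jamming strategy for the constraint-free AVC with fixed parameters, which is what the lemma asserts.
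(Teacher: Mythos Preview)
Your proposal is correct and follows essentially the same route as the paper: establish the product symmetry $\widetilde W_{\theta^n}(y^n|x_1^n,x_2^n)=\widetilde W_{\theta^n}(y^n|x_2^n,x_1^n)$ letter by letter from~(\ref{eq:symmetrizable}), write $\err(\tq,\theta^n,\code)$ as a double average $\tfrac{1}{\dM^2}\sum_{m,m'}$, symmetrize by swapping $m\leftrightarrow m'$, and use that $\{g\neq m\}\cup\{g\neq m'\}=\Yset^n$ for $m\neq m'$ to obtain $\err\geq\tfrac{\dM(\dM-1)}{2\dM^2}\geq\tfrac14$. Your only imprecision is the passing remark that ``one of $e_1,e_2$ is at least $\tfrac12\cdot\tfrac{1}{2^{nR}}\cdot\tfrac12$,'' which is too weak on its own; but you immediately pivot to the full double average, which is exactly what the paper does, so the plan stands.
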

For completeness, we give the proof below.
\begin{proof}[Proof of Lemma~\ref{lemm:nEricson}]
Denote the codebook size by  $\dM=2^{nR}$,  and the codewords by $x^n(m,\theta^n)=f^n(m,\theta^n)$.

Under the conditions of the lemma,
\begin{align}
& \err(\tq,\theta^n,\code)=\sum_{s^n\in\Sset^n} q(s^n|\theta^n) \frac{1}{\dM} \sum_{m=1}^{\dM} 
\sum_{y^n \,:\; g(y^n,\theta^n)\neq m} W^n(y^n|x^n(m,\theta^n),s^n,\theta^n)
\nonumber\\
=& \frac{1}{\dM^2} \sum_{\tm=1}^{2^{nR}} \sum_{s^n\in\Sset^n} J_{\theta^n}(s^n|x^n(\tm,\theta^n)) \sum_{m=1}^{\dM} 
\sum_{y^n \,:\; g(y^n,\theta^n)\neq m} W^n(y^n|x^n(m,\theta^n),s^n,\theta^n)
\end{align}
where have defined $W^n\equiv W_{Y^n|X^n,S^n,T^n}$ for short notation.
By switching between the summation indices $m$ and $\tm$, we obtain
\begin{align}
 \err(\tq,\theta^n,\code)
=& \frac{1}{2\dM^2} \sum_{m,\tm}\; \sum_{y^n \,:\; g(y^n,\theta^n)\neq m} 
\sum_{s^n\in\Sset^n} W^n(y^n|x^n(m,\theta^n),s^n,\theta^n) J_{\theta^n}(s^n|x^n(\tm,\theta^n))  
\nonumber\\
+& \frac{1}{2\dM^2} \sum_{m,\tm}\; \sum_{y^n \,:\; g(y^n,\theta^n)\neq \tm}
\sum_{s^n\in\Sset^n} W^n(y^n|x^n(\tm,\theta^n),s^n,\theta^n) J_{\theta^n}(s^n|x^n(m,\theta^n))    \,.
\end{align}
Now, as the channel is memoryless, 
\begin{align}
\sum_{s^n\in\Sset^n} W^n(y^n|x^n(\tm,\theta^n),s^n,\theta^n) J_{\theta^n}(s^n|x^n(m,\theta^n))
=&
\prod_{i=1}^n \sum_{s_i\in\Sset} W_{Y_i|X_i,S_i,T_i}(y_i|x_{i}(\tm,\theta^n),s_{i},\theta_i) J_{\theta_i}(s_i|x_{i}(m,\theta^n))
\nonumber\\ =&
\prod_{i=1}^n \sum_{s_i\in\Sset} W_{Y_i|X_i,S_i,T_i}(y_i|x_{i}(m,\theta^n),s_{i},\theta_i) J_{\theta_i}(s_i|x_{i}(\tm,\theta^n))
\nonumber\\ =&
\sum_{s^n\in\Sset^n} W^n(y^n|x^n(m,\theta^n),s^n,\theta^n) J_{\theta^n}(s^n|x^n(\tm,\theta^n)) \,,
\end{align}
where the second equality is due to (\ref{eq:symmetrizable}). Therefore, 
\begin{align}
 \err(\tq,\theta^n,\code)
\geq & \frac{1}{2\dM^2} \sum_{\tm\neq m}\; \sum_{s^n\in\Sset^n}
\Big[ \sum_{y^n \,:\; g(y^n,\theta^n)\neq m} 
 W^n(y^n|x^n(m,\theta^n),s^n,\theta^n) J_{\theta^n}(s^n|x^n(\tm,\theta^n))  
\nonumber\\&
+\sum_{y^n \,:\; g(y^n,\theta^n)\neq \tm} 
 W^n(y^n|x^n(m,\theta^n),s^n,\theta^n) J_{\theta^n}(s^n|x^n(\tm,\theta^n))  
\Big]
\nonumber\\
\geq& \frac{1}{2\dM^2} \sum_{\tm\neq m}\; \sum_{s^n\in\Sset^n}
\sum_{y^n\in\Yset^n} 
 W^n(y^n|x^n(m,\theta^n),s^n,\theta^n) J_{\theta^n}(s^n|x^n(\tm,\theta^n))  
\nonumber\\
=& \frac{\dM(\dM-1)}{2\dM^2}=\frac{1}{2}\left( 1-\frac{1}{\dM} \right) \,.
\end{align}
Assuming  the sum rate is positive, we have that $\dM\geq 2$, hence $\err(\tq,\theta^n,\code)\geq
\frac{1}{4}$.
\end{proof}

Now, we are in position to prove the converse part of Theorem~\ref{theo:PCavc}. 
Consider a sequence of $(2^{nR},n,\alpha_n)$ deterministic codes $\code_n$ over the AVC with fixed parameters under input constraint $\plimit$ and state constraint $\Lambda$, where $\alpha_n\rightarrow 0$ as
$n\rightarrow \infty$. 
In particular,  the conditional probability of error given a state sequence $s^n$ is bounded by
\begin{align}
\err(\code_n|s^n,\theta^n)\leq \alpha_n \,,\;\text{for $s^n\in\Sset^n$ with $l^n(s^n)\leq\Lambda$} \,.
\label{eq:MStateConverse1bDet}
\end{align}
Let $X^n=\encn(M,\theta^n)$ be the channel input sequence, and let $Y^n$ be the corresponding output. 

Consider using the same code over the compound channel with fixed parameters,  \ie where the jammer selects a state sequence at random according to a product distribution, $\oS^n\sim \prod_{i=1}^n q(\os_i|\theta_i)$, under the \emph{average} state constraint 
$\frac{1}{n}\sum_{i=1}^n \E_q l(S_i)\leq\Lambda-\delta$. 
Here, there is no state constraint with probability $1$, as the jammer may select a sequence $\oS^n$ with $l^n(\oS^n)>\Lambda$. 
Yet, the probability of error is bounded by
	\begin{align}
	\err(\oq,\theta^n,\code_n)
	\leq 
	\sum_{s^n\,:\; l^n(s^n)\leq\Lambda} \oq^{n}(s^n|\theta^n) \err(\code^\Gamma|s^n,\theta^n)
	+\prob{l^n(\oS^{n})>\Lambda} . 
	\end{align}
	The first sum is bounded by (\ref{eq:MStateConverse1bDet}), and the second term vanishes 
	by the law of large numbers, since
	$\oq\in\overline{\pSpace}_{\Lambda-\delta}(\Sset|\theta^{\infty})$. 
	It follows that the code sequence of the constrained AVC achieves the same rate $R$ over the compound channel $W_{Y|X,\oS,T}$. 
As in Appendix~\ref{app:PCcompound}, Fano's inequality implies that
for every jamming strategy $\oq^n(s^n|\theta^n)$,
\begin{align}
R\leq&  \min_{\oq(s|t) \,:\; \E_q l(S)\leq\Lambda } I_{\oq}(X;Y|T)+\eps_n  \,,
\end{align}
with $X\triangleq X_K$, $T\equiv \theta_K$, $Y\triangleq Y_K$, where $K$ is uniformly distributed over $[1:n]$.
Hence, $T$ is distributed according to the type of the parameter sequence $\theta^n$ (see (\ref{eq:ConvPT})).

Returning to the original AVC, suppose that $L_n^*>\Lambda$.
 It remains to show that $R>0$ implies that $\tLambda_n(P_{X|T})\geq \Lambda$. 
If the channels $W_{Y|X,S,T}(\cdot|\cdot,\cdot,\theta_i)$ is non-symmetrizable for some $i\in [1:n]$, then  $\tLambda_n(P_{X|T})=+\infty$, and there is nothing to show. Hence, consider the case where $W_{Y|X,S,T}(\cdot|\cdot,\cdot,\theta_i)$ are symmetrizable for all
$i\in [1:n]$.
Assume to the contrary that $R>0$ and $\tLambda_n(P_{X|T})<\Lambda$. 
Hence,   there exist  conditional state distributions $J_{\theta_i}(s|x)$ that symmetrize $W_{Y|X,S,T}(\cdot|\cdot,\cdot,\theta_i)$, such that
\begin{align}
\tLambda_n(P_{X|T})= \frac{1}{n}\sum_{i=1}^n \sum_{x,s} P_{X|T}(x|\theta_i) J_{\theta_i}(s|x)l(s) < \Lambda \,.
\label{eq:LJnConvH10}
\end{align}
Now,  consider the following jamming strategy. First, the jammer selects  a codeword $\tX^n$ from the codebook uniformly at random. Then, 
the jammer selects a sequence $\tS^n$ at random, according to the conditional distribution
\begin{align}
\cprob{\tS^n=s^n}{\tX=x^n}= J_{\theta^n}(s^n|x^n)\triangleq \prod_{i=1}^n J_{\theta_i}(s_i|x_{i})\,.
\label{eq:HtSn}
\end{align} 
At last, if $l^n(\tS^n)\leq \Lambda$, the jammer chooses the state sequence to be $S^n=\tS^n$. Otherwise, the jammer chooses $S^n$ to be some sequence of zero cost. Such jamming strategy satisfies the state constraint $\Lambda$ with probability $1$.

To contradict our assumption that $\tLambda(P_{X|T})<\Lambda$, we first show that 
$\E l^n(\tS^n)=\tLambda(P_{X|T})$.
Observe that for every $x^n\in\Xset^n$, 
\begin{align}
\E\, \left( l^n(\tS^n) | \tX^n=x^n \right)  =& 
\frac{1}{n} \sum_{i=1}^n \sum_{s\in\Sset} l (s) J_{\theta_i}(s|x_{i}) \,.
\end{align}
Since $\tX^n$ is distributed as $X^n$, we obtain
\begin{align}
\E\, l^n(\tS^n)=&  \sum_{s\in\Sset} l (s) \cdot \frac{1}{n}  \sum_{i=1}^n \E J_{\theta_i}(s|X_{i}) =
\frac{1}{n}  \sum_{i=1}^n \sum_{x,s} P_{X|T}(x|\theta_i) J_{\theta_i}(s|x) l (s) =
\tLambda_n(P_{X|T})<\Lambda \,.
\end{align}
Thus, by Chebyshev's inequality we have that for sufficiently large $n$, 
\begin{align}
\prob{ l^n(\tS^n)>\Lambda} \leq \delta_0 \,,
\end{align}
where $\delta_0>0$ is arbitrarily small.
Now, on the one hand, the probability of error is bounded by
\begin{align}
\err(q,\theta^n,\code_n)
\geq& \prob{g(Y^n,\theta^n)\neq M, l^n(\tS^n)\leq \Lambda}
\nonumber\\
=&   \sum_{s^n \,:\; l^n(s^n)\leq\Lambda} \tq(s^n|\theta^n) \err(\code_n|s^n,\theta^n) \,,
\label{eq:convEb1}
\end{align}
where $\tq(s^n|\theta^n)$ is as defined in (\ref{eq:PconvFtq}).
On the other hand, the sequence $\tS^n$ can be thought of as the state sequence of an AVC without a state constraint, hence, by 
Lemma~\ref{lemm:nEricson}, 
\begin{align}
\frac{1}{4}\leq &\err(\tq,\theta^n,\code_n)
\leq  \sum_{s^n \,:\; l^n(s^n)\leq\Lambda} \tq(s^n|\theta^n) \err(\code_n|s^n,\theta^n) +\prob{ l^n(\tS^n)> \Lambda}
\nonumber\\
\leq&  \sum_{s^n \,:\; l^n(s^n)\leq\Lambda} \tq(s^n|\theta^n) \err(\code_n|s^n,\theta^n)  +\delta_0 \,.
\label{eq:convEbf}
\end{align}
Thus, by (\ref{eq:convEb1})-(\ref{eq:convEbf}), the probability of error is bounded by 
$\err(q,\theta^n,\code_n)\geq \frac{1}{4}-\delta_0$. As this cannot be the case for a code with vanishing probability of error, we deduce that the assumption is false, \ie $R>0$ implies that $\tLambda_n(P_{X|T})\geq \Lambda$.

If $L_n^*<\Lambda$, then $\tLambda_n(P_{X|T})< \Lambda$ for all $P_{X|T}$ with $\E \cost(X)\leq\plimit$, and a positive rate cannot be achieved.
This completes the converse proof.
\qed

\section{Proof of Corollary~\ref{coro:LCavc01}}
\label{app:LCavc01}
Assume that the AVC $\avc$ with fixed parameters satisfies the conditions of Corollary~\ref{coro:LCavc01}. 
%
Looking into  the converse proof above, the following addition suffices.
We show that for every code $\code_n$ as in the converse proof above,
$\tLambda_n(P_{X|T})=\Lambda$ implies that $R=0$. Since there is only a polynomial number of types, we may consider $P_{X|T}(x|t)$ to be the conditional type of 
$f^n(m,\theta^n)$ given $\theta^n$, for all $m\in [1:2^{nR}]$ (see \cite[Problem 6.19]{CsiszarKorner:82b}).

Suppose that  $\tLambda_n(P_{X|T})=\Lambda$,  assume to the contrary that $R>0$,
 and let $J_i(s|x)$ be distributions that achieve the minimum in (\ref{eq:LambdaOig}), \ie
\begin{align}
\tLambda_n(p)=& \frac{1}{n} \sum_{i=1}^n
 \sum_{ x,s} P_{X|T}(x|\theta_i) J_i(s|x) l(s) =\Lambda \,.
\label{eq:tlambdaJConvc}
\end{align}
Based on the condition of the corollary, we may assume that $J_i(s|x)$ is a $0$-$1$ law, \ie 
\begin{align}
J_i(s|x)=\begin{cases}
1 &\text{if $s=G_i(x)$},\\
0 &\text{otherwise}
\end{cases} \,,
\end{align}
for some deterministic function $G_i:\Xset\rightarrow\Sset$.  

Recall that we have defined $X=X_K$, $Y=Y_K$ in the converse proof, where $K$ is a uniformly distributed variable over $[1:n]$.
Thus, by (\ref{eq:tlambdaJConvc}),
\begin{align}
 \E l(G_K(X))= \frac{1}{n} \sum_{i=1}^n \sum_{ x,s} p(x|\theta_i) J_i(s|x) l(s) =\Lambda \,.
\label{eq:tlambdaJConvEG}
\end{align}
Now,  consider the following jamming strategy. First, the jammer selects a codeword $\tX^n$ from the codebook uniformly at random. Then, given 
$\tX^n=x^n$, the jammer chooses the state sequence $S^n=\left( G_i(x_{i}) \right)_{i=1}^n$. 
Observe that 
\begin{align}
 l^n(S^n)   =&  \frac{1}{n} \sum_{i=1}^n l(G_i(x_{i})) 
=    \E l(G_K(X)) 
	=\Lambda \,,
\end{align}
where the last equality is 
due to 
(\ref{eq:tlambdaJConvEG}).
Thus, the state sequence satisfies the state constraint.
Now, observe that the jamming strategy $S^n=\left( G(\tX_{i}) \right)_{i=1}^n$ is equivalent to 
$S^n\sim\tq(s^n|\theta^n)$ as in (\ref{eq:PconvFtq}).
Thus, by Lemma~\ref{lemm:nEricson}, 
we have that 
$\err(\tq,\code_n)\geq \frac{1}{4}$, hence a positive rate cannot be achieved.
\qed

\section{Proof of Lemma~\ref{lemm:CiequivDet}}
\label{app:CiequivDet}
Suppose that $L_n^*>\Lambda$. The proof is similar to that of Lemma~\ref{lemm:Ciequiv}. 
We begin with the  property in the lemma below.
\begin{lemma}
\label{lemm:PsametDet}
Let $\omega_i^*$, $\lambda_i^*$, $\tlambda_i^*$, $i\in [1:n]$, be the parameters that achieve the saddle point in  (\ref{eq:Cieqiv2Det}), \ie
\begin{align}
\inR_{n}(\avc)= 
\frac{1}{n} \sum_{i=1}^n  
\inC_{\theta_i}(\omega_i^*,\lambda_i^*,\tlambda_i^*) \,.
\end{align}
Then, for every $i,j\in [1:n]$ such that $\theta_i=\theta_j$, we have that $\omega_i^*=\omega_j^*$, $\tlambda_i^*=\tlambda_j^*$, and $\lambda_i^*=\lambda_j^*$.
\end{lemma}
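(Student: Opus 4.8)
The plan is to follow the proof of Lemma~\ref{lemm:Psamet} in Appendix~\ref{app:Ciequiv} essentially verbatim, the one new ingredient being that the minimal symmetrizability cost is concave in the input law. First I would record this: by (\ref{eq:LambdaOig1}), $\tLambda(p,t)=\min_{J}\sum_{x,s}p(x)J(s|x)l(s)$, the minimum over the kernels $J(s|x)$ that symmetrize $W_{Y|X,S,T}(\cdot|\cdot,\cdot,t)$; being a pointwise minimum of functionals linear in $p$, the map $p\mapsto\tLambda(p,t)$ is concave, and in particular each set $\{p:\tLambda(p,t)\ge\Delta\}$ is convex. Together with linearity of $\E_p\cost(X)$ and $\E_q l(S)$ and with the concavity (resp.\ convexity) of $I_q(X;Y|T=t)$ in $p$ (resp.\ in $q$), this supplies all the convexity needed to run the averaging argument in the presence of the extra slack variable $\tlambda$.

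Concretely, for each $i$ let $q_i$ attain the outer minimum and $p_i$ the inner maximum in the definition (\ref{eq:CtolDet}) of $\inC_{\theta_i}(\omega_i^*,\tlambda_i^*,\lambda_i^*)$, so $\E_{p_i}\cost(X)\le\omega_i^*$, $\tLambda(p_i,\theta_i)\ge\tlambda_i^*$, $\E_{q_i}l(S)\le\lambda_i^*$ and $\inC_{\theta_i}(\omega_i^*,\tlambda_i^*,\lambda_i^*)=I_{q_i}(X_i;Y_i|T=\theta_i)$ with $X_i\sim p_i$. Fix $i,j$ with $\theta_i=\theta_j=t$ and set $p'=\tfrac12(p_i+p_j)$, $q'=\tfrac12(q_i+q_j)$, $\bar\omega=\tfrac12(\omega_i^*+\omega_j^*)$, $\bar{\tlambda}=\tfrac12(\tlambda_i^*+\tlambda_j^*)$, $\bar\lambda=\tfrac12(\lambda_i^*+\lambda_j^*)$. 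Linearity gives $\E_{p'}\cost(X)\le\bar\omega$, $\E_{q'}l(S)\le\bar\lambda$, and concavity of $\tLambda(\cdot,t)$ gives $\tLambda(p',t)\ge\tfrac12\bigl(\tLambda(p_i,t)+\tLambda(p_j,t)\bigr)\ge\bar{\tlambda}$, so $p'$ is admissible for the maximization and $q'$ for the minimization in $\inC_t(\bar\omega,\bar{\tlambda},\bar\lambda)$. Concavity of $I_q(X;Y|T=t)$ in $p$ and convexity in $q$ then yield, as in Lemma~\ref{lemm:Psamet},
\[
\tfrac12\bigl[I_{q'}(X_i;Y_i|T=t)+I_{q'}(X_j;Y_j|T=t)\bigr]\;\le\;I_{q'}(X';Y'|T=t)\;\le\;\tfrac12\bigl[I_{q_i}(X';Y'|T=t)+I_{q_j}(X';Y'|T=t)\bigr].
\]

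From here I would conclude as in Appendix~\ref{app:Ciequiv}: replacing, in the optimizing $n$-tuple, the triples at positions $i$ and $j$ by the single triple $(\bar\omega,\bar{\tlambda},\bar\lambda)$ (with laws $p',q'$ there) preserves the budget constraints $\tfrac1n\sum_k\omega_k\le\plimit$, $\tfrac1n\sum_k\tlambda_k\ge\Lambda$, $\tfrac1n\sum_k\lambda_k\le\Lambda$; combining the displayed inequalities with the optimality of the original tuple shows that the symmetrized tuple is again optimal in (\ref{eq:Cieqiv2Det}) and that the inequalities hold with equality, which forces $\omega_i^*=\omega_j^*=\bar\omega$, $\tlambda_i^*=\tlambda_j^*=\bar{\tlambda}$ and $\lambda_i^*=\lambda_j^*=\bar\lambda$.

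The step I expect to need the most care is this last deduction. One subtlety is that the max-variables $(\omega,\tlambda)$ and the min-variable $\lambda$ enter at different stages and with opposite curvature of $\inC_t$, so it is cleanest to symmetrize the optimal $\lambda$-tuple first (using convexity of the value in $\lambda$) and the optimal $(\omega,\tlambda)$-tuple afterwards (using concavity in $(\omega,\tlambda)$, which is where the concavity of $\tLambda(\cdot,t)$ is used). A second subtlety is passing from ``the symmetrized tuple is also optimal'' to ``every saddle point is symmetric''; strictly this needs strict concavity/convexity along the symmetrizing direction, but the weaker assertion that a symmetric optimal tuple \emph{exists} is already all that the subsequent reformulation (the proof of Lemma~\ref{lemm:CiequivDet}) requires. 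Apart from the bookkeeping with $\tlambda$, everything is a transcription of the random-code argument in Appendix~\ref{app:Ciequiv}.
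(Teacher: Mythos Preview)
Your proposal is correct and follows essentially the same averaging argument as the paper's proof, which is itself a near-verbatim transcription of the proof of Lemma~\ref{lemm:Psamet}. You are in fact more careful than the paper on one point: the paper writes $\tLambda(p',t)=\tfrac12[\tLambda(p_i,t)+\tLambda(p_j,t)]$ as an equality, whereas you correctly observe that $\tLambda(\cdot,t)$ is only concave (as a pointwise minimum of linear functionals), giving $\ge$, which is all that is needed; your closing remarks about strict convexity versus existence of a symmetric optimizer also flag a subtlety the paper leaves implicit.
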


\begin{proof}[Proof of Lemma~\ref{lemm:PsametDet}]
For every $i\in [1:n]$, let $p_i,q_i$ denote input and state distributions such that $\E \cost(X_i)\leq \omega_i^*$,
$\tLambda_{\theta_i}(p_i)\geq \tlambda_i^*$,
 $\E l(S_i)\leq \lambda_i^*$ for $X_i\sim p_i$, $S_i\sim q_i$. Now, suppose that $\theta_i=\theta_j=t$, and define
\begin{align}
p'(x)=\frac{1}{2}[ p_i(x)+p_j(x) ] \,,\; q'(s)=\frac{1}{2}[ q_i(s)+q_j(s) ] \,.
\end{align}
Then, $\E \cost(X') = \frac{1}{2}[ \E \cost(X_i)+ \E \cost(X_j) ]$,  $\Lambda_t(p')= \frac{1}{2}[ \Lambda_t(p_i)+ \Lambda_t(p_j) ]$, and $\E l(S') = \frac{1}{2}[ \E l(S_i)+ \E l(S_j) ]$ 
for $X'\sim p'$, $S'\sim q'$. Furthermore, since the mutual information is concave-$\cap$ in the input distribution and convex-$\cup$ in the state distribution, we have that
\begin{align}
& \frac{1}{2}\left[ I_{q'}(X_i;Y_i|T_i=t)+I_{q'}(X_j;Y_j|T_j=t) \right] \leq I_{q'}(X';Y'|T'=t) \nonumber\\
& \frac{1}{2}\left[ I_{q_i}(X';Y'|T'=t)+I_{q_j}(X';Y'|T=t) \right] \geq I_q(X';Y'|T'=t) \,.
\end{align}
Therefore, the saddle point distributions must satisfy $p_i=p_j=p'$ and $q_i=q_j=q'$, hence $\omega_i^*=\omega_j^*$, $\tlambda_i^*=\tlambda_j^*$, and $\lambda_i^*=\lambda_j^*$.
\end{proof}

Next, it can be inferred from Lemma~\ref{lemm:PsametDet} that
\begin{align}
\inR_{n}(\avc)=& \min_{ \substack{ (\lambda_t)_{t\in\Tset} \,:\; \\  \sum\limits_{t\in\Tset} P_T(t) \lambda_t \leq \Lambda } } \;
\max_{ \substack{ (\omega_t)_{t\in\Tset}, (\tlambda_t)_{t\in\Tset} \,:\;\\   \sum\limits_{t\in\Tset} P_T(t) \omega_t\leq \plimit \\
\sum\limits_{t\in\Tset} P_T(t) \tlambda_t\geq \Lambda } }
\sum\limits_{t\in\Tset} P_T(t)  
\inC_{t}(\omega_t,\lambda_t)
\nonumber\\
=& \min_{ \substack{ (\lambda_t)_{t\in\Tset} \,,\; q(s|t) \,:\; \\  
\E_q[ l(S)|T=t ]\leq \lambda_t \\
\sum\limits_{t\in\Tset} P_T(t) \lambda_t \leq \Lambda } }
\max_{ \substack{ (\omega_t)_{t\in\Tset} \,,\; (\tlambda_t)_{t\in\Tset} \,,\; p(x|t) \,:\; \\  
\E[ \cost(X)|T=t ]\leq \omega_t, \tLambda(p,t)\geq \tlambda_t \\
\sum\limits_{t\in\Tset} P_T(t) \omega_t \leq \plimit \,,\; \sum\limits_{t\in\Tset} P_T(t) \tlambda_t \geq \Lambda  } } I_q(X;Y|T)
\nonumber\\
=& \min_{  q(s|t) \,:\; \E_q l(S) \leq \Lambda  }
\max_{ \substack{ p(x|t) \,:\; \E \cost(X)\leq \plimit \,,\\  \tLambda_n(p)\geq\Lambda  }  } I_q(X;Y|T) =\inC_{n}(\avc)
 \,,
\end{align}
where $P_T$ is the type of the parameter sequence $\theta^n$. The second equality follows from the definition of $\inC_{t}(\omega_t,\lambda_t,\tlambda_d)$ in (\ref{eq:CtolDet}), using the minimax theorem \cite{sion:58p} to switch between the order of the minimum and maximum.
In the third line, we eliminate the slack variables $\lambda_i$, $\omega_i$, and $\tlambda_i$, replacing $\E_q l(S_i)$, $\E\cost(X_i)$, and $\tLambda(p,\theta_i)$, respectively.
The last equality holds by the definition of $\inC_{n}(\avc)$ in (\ref{eq:Cieqiv3Det}).
\qed

\section{Analysis of Example~\ref{example:Fading}}
\label{app:Fading}
Consider the fading AVC in Example~\ref{example:Fading}. To show the direct part with random codes, set the conditional input distribution $X\sim \mathcal{N}(0,\omega(t))$ given $T=t$ in (\ref{eq:Cieqiv2}).
Then, for every $t\in\Tset$,
\begin{align}
I_q(X;Y|T=t)\geq \frac{1}{2} \log\left( 1+\frac{t^2 \omega(t) }{ \lambda'(t)+\sigma^2 }  \right) \,,
\end{align}
where we have denoted $\lambda'(t)\triangleq  \E(S^2|T=t)$. The last inequality holds since Gaussian noise is known to be the worst additive noise under variance constraint \cite[Lemma II.2]{DiggaviCover:01p}.
The direct part follows. As for the converse part, consider a jamming scheme where the state is drawn according to
the conditional distribution $S\sim \mathcal{N}(0,\lambda(t))$ given $T=t$. 
Then, the proof follows from Shannon's classic result on the Gaussian channel $Y=t X+V$ with $V\sim\mathcal{N}(0,\lambda(t)+\sigma^2)$. 

We move to the deterministic code capacity. 
 By Definition~\ref{def:symmetrizable}, the constant-parameter channel $W_{Y|X,S,T=t}$ is symmetrized by 
a conditional pdf
$\varphi(s|x)$ if 
\begin{align}
\label{eq:symmetrizableFading}
\int_{-\infty}^\infty  \varphi(s|x_2)f_{Z}(y-tx_1-s)ds=
\int_{-\infty}^\infty  \varphi(s|x_1)f_{Z}(y-tx_2-s)ds
 \,,\; 
\forall\, x_1,x_2,y\in\mathbb{R} \,,
\end{align}
where $f_{Z}(z)=\frac{1}{\sqrt{2\pi\sigma^2}} e^{-z^2/2\sigma^2}$. 
%
Equivalently, the constant-parameter channel is symmetrized by $\varphi_x(s)\equiv\varphi(s|x)$ if 
\begin{align}
\label{eq:symmetrizableFadingEq}
\int_{-\infty}^\infty  \varphi_0(s)f_{Z}(y-tx-s)ds=
\int_{-\infty}^\infty  \varphi_x(s)f_{Z}(y-s)ds
 \,,\; 
\end{align}
for all $x,y\in\mathbb{R}$.
By substituting $z=y-tx-s$ in the LHS, and $\bar{z}=y-s$ in the RHS, we have 
\begin{align}
\label{eq:symmetrizableEq2EqFading}
\int_{-\infty}^\infty \varphi_0(y-tx-z)f_{Z}(z)dz=
\int_{-\infty}^\infty \varphi_{x}(y-\bar{z})f_{Z}(\bar{z})d\bar{z} \,.
\end{align}
For every $x\in\mathbb{R}$, define the random variable $\oS(x)\sim\varphi_{x}$.
We note that the RHS is the convolution of the pdfs of the random variables $Z$ and 
$\oS(x) $, while the LHS is the convolution of the pdfs of the random variables $Z$ and 
$\oS(0)+x $. This is not surprising since the channel output $Y$ is  a sum of independent random variables, and thus the pdf of $Y$ is a convolution of pdfs.
It follows that $\varphi_0(y-tx)=\varphi_{x}(y)$, and
 by plugging $s$ instead of $y$, we have that $\varphi_{x}$ symmetrizes the constant-parameter channel $W_{Y|X,S,T=t}$ if and only if 
\begin{align}
\varphi_{x}(s)=\varphi_0 (s-tx) \,.
\label{eq:GPvarphiSymmEqFading}
\end{align} 
Then, the corresponding state cost satisfies
\begin{align}
\int_{-\infty}^\infty\int_{-\infty}^\infty  f_{X|T}(x|t) \varphi_{x}(s) s^2 
\, dx \, ds
=& \int_{-\infty}^\infty  \int_{-\infty}^\infty f_{X|T}(x|t) \varphi_0 (s-tx) s^2 \, ds
\, dx
\nonumber\\
=& \int_{-\infty}^\infty  \int_{-\infty}^\infty f_{X|T}(x|t) \varphi_0 (a) (a+tx)^2 \, da
\, dx
\nonumber\\
=& \int_{-\infty}^\infty  
\left[ \int_{-\infty}^\infty   (tx+a)^2
f_{X|T}(x|t) \, dx
\right]
 \varphi_0 (a)  \, da
\label{eq:GPscostEq1F}
\end{align}
where the second equality follows by the integral substitution of $a=s-tx$.
Observe that the bracketed integral can be expressed as 
\begin{align}
\int_{-\infty}^\infty  (tx+a)^2
f_{X|T}(x|t) \, dx =\E[(tX+a)^2|T=t]=t^2\E[ X^2|T=t]+a^2 \,.
\end{align}
Thus, by (\ref{eq:GPscostEq1F}), 
\begin{align}
\int_{-\infty}^\infty  \int_{-\infty}^\infty f_{X|T}(x|t) \varphi_{x}(s) s^2 
\, dx \, ds
=&t^2 \E[ X^2|T=t]+\int_{-\infty}^\infty  a^2  \varphi_0 (a)  \, da
\nonumber\\
\geq& t^2 \E[ X^2|T=t] \,.
\label{eq:GPCscostTrF}
\end{align}
Note that 
the last inequality holds for any $\varphi_{x}$ which symmetrizes the channel, and in particular 
 for $\hat{\varphi}_{x}(s)=\delta(s-tx)$, where $\delta(\cdot)$ is the Dirac delta function. 
In addition, since $\hat{\varphi}_0$ gives probability $1$ to $S=0$, we have that 
 (\ref{eq:GPCscostTrF}) holds with equality for $\hat{\varphi}_{x}$,
and thus, 
\begin{align}
\LambdaOig( F_{X|T} )=\frac{1}{n}\sum_{i=1}^n t^2 \E[ X^2|T=t]=\sum_{t\in\Tset} P_T(t)t^2 \E[ X^2|T=t] =
\E(T^2 \omega(T))
 \,,
\end{align}
with $\omega(t)\equiv \E[X^2|T=t]$. Hence,
\begin{align}
L_n^*=\max_{ \omega(t) \,:\; \E \omega(T)\leq\plimit }  \E(T^2 \omega(T))
 \,.
\end{align}

Having shown that the minimum in (\ref{eq:LambdaOig}) is attained by a $0$-$1$ law, we have by Corollary~\ref{coro:LCavc01} 
that the capacity of the fading AVC is 
$
\Cavc=\liminf
\inC_{n}(\avc)  
$, with 
\begin{align}
\inC_n(\avc)& =
\begin{cases}
\min\limits_{  F_{S|T} \,:\; \E S^2\leq \Lambda } \;
\max\limits_{ \substack{ F_{X|T} \,:\; \E\, X^2 \leq\plimit \,,\; \\ \E(T^2 X^2) \geq \Lambda } } \;   I_q(X;Y|T)  &\text{if 
$\max\limits_{ \omega(t) \,:\; \E \omega(T)\leq\plimit }  \E(T^2 \omega(T))> \Lambda$}\,,\\
  0 																 &\text{if $\max\limits_{ \omega(t) \,:\; \E \omega(T)\leq\plimit }  \E(T^2 \omega(T))\leq \Lambda$}
\end{cases}	
\,.
\label{eq:Cieqiv3DetF} 
\end{align}
To show the direct part, we only need to consider the case where $\max\limits_{ \omega(t) \,:\; \E \omega(T)\leq\plimit }  \E(T^2 \omega(T))> \Lambda$.
Then, set the conditional input distribution $X\sim \mathcal{N}(0,\omega(t))$ given $T=t$ in (\ref{eq:Cieqiv3DetF}).
As in the direct part with random codes,  
\begin{align}
I_q(X;Y|T=t)\geq \frac{1}{2} \log\left( 1+\frac{t^2 \omega(t) }{ \lambda'(t)+\sigma^2 }  \right) \,,
\end{align}
with 
$\lambda'(t)\triangleq  \E(S^2|T=t)$, 
since Gaussian noise is 
the worst additive noise under variance constraint \cite[Lemma II.2]{DiggaviCover:01p}.
The direct part follows. As for the converse part, for
the conditional distribution $S\sim \mathcal{N}(0,\lambda(t))$ given $T=t$, 
we have that
\begin{align}
I_q(X;Y|T=t)\leq \frac{1}{2} \log\left( 1+\frac{t^2 \omega'(t) }{ \lambda(t)+\sigma^2 }  \right) \,,
\end{align}
with $\omega'(t)\triangleq  \E(X^2|T=t)$, since the Gaussian distribution maximizes the differential entropy. The proof follows.
\qed

\section{Proof of Lemma~\ref{lemm:WaterProp}}
\label{app:WaterProp}
\subsection*{Part 1}
Since $\sum_{j'=1}^d P_{j'}^*=\plimit>0$, there must be some $j\in [1:d]$ such that 
$ P_j^*=\alpha-(N_j^*+\sigma_j^2)>0$, thus $\alpha>N_j^*+\sigma_j^2$. If $N_j^*=0$, then
it follows that $\beta\leq\sigma_j^2$, hence
\begin{align}
\alpha>N_j^*+\sigma_j^2=\sigma_j^2\geq  \beta \,.
\end{align}
Otherwise, $N_j^*=\beta-\sigma_j^2>0$, thus by the assumption $P_j^*>0$, we have that
\begin{align}
0< P_j^*= \alpha-(N_j^*+\sigma_j^2)= \alpha-\beta \,.
\end{align}

\subsection*{Part 2}
Assume to the contrary that $N_j^*=\beta-\sigma_j^2>0$ and $P_j^*=0$. The assumption $P_j^*=0$  implies that $\alpha\leq N_j^*+\sigma_j^2=\beta$, 
 in contradiction to part 1 of the Lemma. Hence, the assumption is false, and 
$N_j^*>0$ implies that $P_j^*>0$. 

\subsection*{Part 3 and Part 4}
By the definition of $N_j^*$ in  (\ref{eq:GPNjdef}), we have that $N_j^*+\sigma_j^2=\max(\beta,\sigma_j^2)$ for all $j\in [1:d]$. Thus,
\begin{align}
P_j^*+N_j^*+\sigma_j^2 =&\max(\beta,\sigma_j^2)+\left[ \alpha-\max(\beta,\sigma_j^2) \right]_{+}
=\max( \alpha,\beta,\sigma_j^2)= \max( \alpha,\sigma_j^2) \,,
\end{align}
where the last equality is due to part 1.
Part 4 immediately follows. \qed



\section{Proof of Lemma~\ref{lemm:GPscostP}}
\label{app:GPscostP}
Let $X^d$ be a zero mean random vector 
with the covariance matrix $K_X$. 
Observe that by (\ref{eq:GPsymmetrizable}), the AVGPC is symmetrized by a conditional pdf $\varphi_{x^d}(s^d)=\varphi(s^d|x^d)$ if
\begin{align}
\label{eq:GPsymmetrizableEq}
\int_{-\infty}^\infty\cdots \int_{-\infty}^\infty  \varphi_0(s^d)f_{Z^d}(y^d-x^d-s^d)ds^d=
\int_{-\infty}^\infty\cdots \int_{-\infty}^\infty  \varphi_{x^d}(s^d)f_{Z^d}(y^d-s^d)ds^d
 \,, 
\end{align}
for all $x^d,y^d\in\mathbb{R}^d$.
By substituting $z^d=y^d-x^d-s^d$ in the LHS, and $\bar{z}^d=y^d-s^d$ in the RHS, this is equivalent to 
\begin{align}
\label{eq:GPsymmetrizableEq2}
\int_{-\infty}^\infty\cdots \int_{-\infty}^\infty  \varphi_0(y^d-x^d-z^d)f_{Z^d}(z^d)dz^d=
\int_{-\infty}^\infty\cdots \int_{-\infty}^\infty  \varphi_{x^d}(y^d-\bar{z}^d)f_{Z^d}(\bar{z}^d)d\bar{z}^d \,.
\end{align}
For every $x^d\in\mathbb{R}^d$, define the random vector $\oS^d(x^d)\sim\varphi_{x^d}$.
We note that the RHS is the convolution of the pdfs of the random vectors $Z^d$ and 
$\oS^d(x^d) $, while the LHS is the convolution of the pdfs of the random vectors $Z^d$ and 
$\oS^d(0)+x^d $. This is not surprising since the channel output $Y^d$ is  a sum of independent random vectors, and thus the pdf of $Y^d$ is a convolution of pdfs.
It follows that $\varphi_0(y^d-x^d)=\varphi_{x^d}(y^d)$, and
 by plugging $s^d$ instead of $y^d$, we have that $\varphi_{x^d}$ symmetrizes the AVGPC if and only if 
\begin{align}
\varphi_{x^d}(s^d)=\varphi_0 (s^d-x^d) \,.
\label{eq:GPvarphiSymmEq}
\end{align} 
Then, the corresponding state cost satisfies
\begin{align}
&\int_{-\infty}^\infty \cdots \int_{-\infty}^\infty f_{X^d}(x^d) \varphi_{x^d}(s^d) \norm{s^d}^2 
\, dx^d \, ds^d
\nonumber\\
=& \int_{-\infty}^\infty \cdots \int_{-\infty}^\infty f_{X^d}(x^d) \varphi_0 (s^d-x^d) \norm{s^d}^2 \, ds^d
\, dx^d
\nonumber\\
=& \int_{-\infty}^\infty \cdots \int_{-\infty}^\infty f_{X^d}(x^d) \varphi_0 (a^d) \norm{a^d+x^d}^2 \, da^d
\, dx^d
\nonumber\\
=& \int_{-\infty}^\infty \cdots \int_{-\infty}^\infty 
\left[ \int_{-\infty}^\infty \cdots \int_{-\infty}^\infty \norm{x^d+a^d}^2
f_{X^d}(x^d) \, dx^d
\right]
 \varphi_0 (a^d)  \, da^d
\label{eq:GPscostEq1}
\end{align}
where the second equality follows by the integral substitution of $a^d=s^d-x^d$.
Observe that the bracketed integral can be expressed as 
\begin{align}
\int_{-\infty}^\infty \cdots \int_{-\infty}^\infty \norm{x^d+a^d}^2
f_{X^d}(x^d) \, dx^d =\E\norm{X^d+a^d}^2=\trace(K_X)+\norm{a^d}^2 \,.
\end{align}
Thus, by (\ref{eq:GPscostEq1}), 
\begin{align}
&\int_{-\infty}^\infty \cdots \int_{-\infty}^\infty f_{X^d}(x^d) \varphi_{x^d}(s^d) \norm{s^d}^2 
\, dx^d \, ds^d
\nonumber\\
=&\trace(K_X)+\int_{-\infty}^\infty \cdots \int_{-\infty}^\infty \norm{a^d}^2  \varphi_0 (a^d)  \, da^d
\nonumber\\
\geq& \trace(K_X) \,.
\label{eq:GPCscostTr}
\end{align}
Note that 
the last inequality holds for any $\varphi_{x^d}$ which symmetrizes the channel. 
Now, observe that (\ref{eq:GPvarphiSymmEq}) holds
 for $\hat{\varphi}_{x^d}(s^d)=\delta(s^d-x^d)$, where $\delta(\cdot)$ is the Dirac delta function, 
hence $\hat{\varphi}_{x^d}$ symmetrizes the channel.
In addition, since $\hat{\varphi}_0$ gives probability $1$ to $S^d=0$, we have that 
 (\ref{eq:GPCscostTr}) holds with equality for $\hat{\varphi}_{x^d}$,
and thus, $\LambdaOig( F_{X^d} )=\trace(K_X)$. 
\qed

\section{Proof of Theorem~\ref{theo:GPavcDet}}
\label{app:GPavcDet}
Consider the AVGPC under input constraint $\plimit$ and state constraint $\Lambda$.

\subsection*{Achievability Proof}
Assume that $\plimit>\Lambda$. We show that $\sigmaCavc\geq\sigmaICavc=\sigmarICav$.
 By \cite[Theorem 3]{Csiszar:92p}, 
if there exists an input distribution $F_{X^d}$ such that $\LambdaOig(F_{X^d})>\Lambda$,
then the capacity is given by
\begin{align}
\label{eq:GPdirCN}
\sigmaCavc=\max_{
\substack{ F_{X^d}\,:\; \sum_{j=1}^d P_j \leq \plimit  \\   \LambdaOig(F_{x^d})\geq \Lambda }
}  \;
\min_{F_{S^d}\,:\; \sum_{j=1}^d N_j \leq \Lambda } I(X^d;Y^d) \,,
\end{align}
where $P_j=\E X_j^2$ and $N_j=\E S_j^2$.

Consider the input distribution $F_{X^d}$ of a Gaussian vector $X^d\sim\mathcal{N}(\mathbf{0},K_X)$, where
the covariance matrix is given by $K_X=\diag(P_1^*,\ldots,P_d^*)$. 
By Lemma~\ref{lemm:GPscostP}, we have that 
\begin{align}
\LambdaOig(F_{X^d}) = \trace(K_X)=\sum_{j=1}^d P_j^*=\plimit.
\end{align}
Having assumed that $\plimit>\Lambda$, it follows that $\LambdaOig(F_{X^d})>\Lambda$, hence (\ref{eq:GPdirCN}) applies. Then, setting $X^d\sim\mathcal{N}(\mathbf{0},K_X)$ yields 
\begin{align}
\label{eq:GPdirCNgeq}
\sigmaCavc\geq& \min_{F_{S^d}\,:\; \sum_{j=1}^d N_j \leq \Lambda } I(X^d;Y^d) \\
\geq&  \min_{F_{S^d}\,:\; \sum_{j=1}^d N_j \leq \Lambda } \sum_{j=1}^d 
I(X_j;Y_j) 
\label{eq:sigmad12dirmaxmin}\\
\geq& \min_{F_{S^d}\,:\; \sum_{j=1}^d N_j \leq \Lambda } \sum_{j=1}^d 
\frac{1}{2}\log\left(1+\frac{P_j^*}{N_j+\sigma_j^2}  \right) \,,
\label{eq:sigmad4dirmaxmin}
\end{align}
where the second inequality holds as $X_1,\ldots,X_d$ are independent and since 
conditioning reduces entropy, and  the last inequality holds since Gaussian noise is known to be the worst additive noise under variance constraint \cite[Lemma II.2]{DiggaviCover:01p}.


From this point, we use the considerations given in \cite{HughesNarayan:88p}.
%
To prove the direct part, it remains to show that the assignment of $N_j=N_j^*$, for $j\in [1:d]$, is optimal in the RHS of
(\ref{eq:sigmad4dirmaxmin}),  where $N_j^*$ are as defined in (\ref{eq:GPNjdef})-(\ref{eq:GPbetadef}). An assignment of $N_1,\ldots,N_d$ is optimal if and only if it satisfies the KKT optimality conditions \cite[Section 5.5.3]{BoydVandenbergh:04b}, 
\begin{align}
& \sum_{j'=1}^d N_{j'}=\Lambda \,,\; N_j\geq 0 \,,\;    									\label{eq:nBasicCond} \\
& \frac{P_j^*}{(N_j+\sigma_j^2)\cdot(N_j+\sigma_j^2+P_j^*)} \leq \theta \,,		\label{eq:nIneqCond} \\
&  \left(  \theta-\frac{P_j^*}{(N_j+\sigma_j^2)\cdot(N_j+\sigma_j^2+P_j^*)}    \right)N_j=0 
\label{eq:nSlackCond} \,,
\end{align}
for $j\in [1:d]$, where $\theta>0$ is a Lagrange multiplier.

 We claim that the conditions are met by
\begin{align}
\theta=\theta^*\triangleq\frac{\alpha-\beta}{\alpha\beta} \,,\;\text{and }\;
 N_j=N_j^* \,,\;\text{for $j\in [1:d]$} \,.
\label{eq:WaterSol}
\end{align}
Condition (\ref{eq:nBasicCond}) is met by the definition of $N_j^*$, $j\in [1:d]$, in 
(\ref{eq:GPNjdef})-(\ref{eq:GPbetadef}).
%
Let $j\in [1:d]$ be a given channel index.
We consider the following cases. Suppose that $N_j^*=0$.
Then, Condition (\ref{eq:nSlackCond}) is clearly satisfied. 
Now, if $P_j^*=0$, then Condition (\ref{eq:nIneqCond}) is satisfied since $\alpha>\beta$ by part 1 of 
Lemma~\ref{lemm:WaterProp}.
Otherwise, $0<P_j^*=\alpha-(N_j^*+\sigma_j^2)=\alpha-\sigma_j^2$, and then
\begin{align}
\frac{P_j^*}{(N_j+\sigma_j^2)\cdot(N_j+\sigma_j^2+P_j^*)}=
\frac{\alpha-\sigma_j^2}{\sigma_j^2 \alpha}\leq \frac{\alpha-\beta}{\alpha\beta}=\theta^* \,,
\end{align}
where the last inequality holds since $N_j^*=0$ only if $\beta\leq\sigma_j^2$. 
Thus, Condition (\ref{eq:nIneqCond}) is satisfied.

Next, suppose that $N_j^*>0$, hence $N_j^*+\sigma_j^2=\beta$. By part 2 of Lemma~\ref{lemm:WaterProp}, this implies that $P_j^*>0$, \ie $P_j^*=\alpha-(N_j^*+\sigma_j^2)=\alpha-\beta$. Thus,
\begin{align}
\frac{P_j^*}{(N_j+\sigma_j^2)\cdot(N_j+\sigma_j^2+P_j^*)} =
\frac{\alpha-\beta}{\beta\cdot\alpha}=\theta^* \,,
\end{align}
and thus Condition (\ref{eq:nIneqCond}) is satisfied with equality, and  Condition (\ref{eq:nSlackCond}) is satisfied as well.

As the KKT conditions are satisfied under (\ref{eq:WaterSol}), we deduce that the assignment of $N_j=N_j^*$, $j\in [1:d]$, minimizes the RHS of (\ref{eq:sigmad4dirmaxmin}). Together with (\ref{eq:sigmad4dirmaxmin}), this implies that $\sigmaCavc\geq \sigmarICav$
for $\plimit>\Lambda$.

\subsection*{Converse Proof}
We use a similar technique as in \cite{CsiszarNarayan:91p} (see also \cite{Ericson:85p,BBT:60p}).
In general, the deterministic code capacity is bounded by the random code capacity, hence 
$\sigmaCavc\leq \sigmarCav=\sigmarICav$, by Theorem~\ref{theo:GPavcRand}. It remains to show that if 
$\plimit\leq\Lambda$, then the capacity is zero.
Suppose that $\plimit\leq\Lambda$, and assume to the contrary that there exists an achievable rate $R>0$.
 Then, there exists a sequence of $(2^{nR},n,\eps_n)$ codes 
$\code_n=(\fvec^d,g)$ for the AVGPC such that $\eps_n\rightarrow 0$ as $n\rightarrow\infty$, where
the size of the message set is at least $2$, \ie
$
\dM\triangleq 2^{nR} \geq 2 
$. 
 
Consider a jammer who chooses the state sequence from the codebook uniformly at random, \ie 
 $\Svec^d=\fvec^d(M')$, where $M'$ is uniformly distributed over $[1:\dM]$. This choice meets the state constraint, since the square norm of the state sequence is $\norm{\Svec^d}^2\leq\plimit\leq \Lambda$.  
The average probability of error is then bounded by
\begin{align}
\err(F_{\Svec^d},\code)= \frac{1}{\dM^2} \sum_{m=1}^{\dM} \sum_{m'=1}^{\dM} 
\int_{\Dset_e(m,m')} f_{\Zvec^d}(\zvec^d) d\zvec^d \,,
\end{align}
where $f_{\Zvec^d}(\zvec^d)=\prod_{j=1}^d \frac{1}{(2\pi\sigma_j^2)^{n/2}} 
e^{-\norm{\zvec_j}^2/2\sigma_j^2}$, and
\begin{align}
\Dset_e(m,m')=\{ \zvec^d \,:\; g(\fvec^d(m)+\fvec^d(m')+\zvec^d)\neq m \} \,.
\end{align}
By interchanging the summation variables $m$ and $m'$, we now have that
\begin{align}
&\err(F_{\Svec^d},\code)= \frac{1}{2\dM^2} \sum_{m,m'}  
\int_{\Dset_e(m,m')} f_{\Zvec^d}(\zvec^d) d\zvec^d 
+\frac{1}{2\dM^2} \sum_{m,m'}  
\int_{\Dset_e(m',m)} f_{\Zvec^d}(\zvec^d) d\zvec^d 
\nonumber\\
\geq & \frac{1}{2\dM^2} \sum_{m,m' \,:\; m\neq m'}  
\int_{\Dset_e(m,m')\cup \Dset_e(m,m')} f_{\Zvec^d}(\zvec^d) d\zvec^d \,.
\end{align}
Next, observe that for $m\neq m'$, $\Dset_e(m,m')\cup \Dset_e(m,m')=\mathbb{R}^{nd}$, and thus the probability of error is lower bounded by
\begin{align}
\err(F_{\Svec^d},\code)\geq \frac{\dM(\dM-1)}{2\dM^2} \geq \frac{1}{4} \,,
\end{align}
where the last inequality holds since $\dM\geq 2$.
Hence, the assumption is false and a positive rate cannot be achieved when $\plimit\leq\Lambda$.
This completes the proof of the converse part.
\qed

\section{Proof of Theorem~\ref{theo:sKGPavcRand}}
\label{app:KGPavcRand}
Consider the AVC with colored Gaussian noise. First, we show that the problem can be transformed into that of an AVC with fixed parameters.
Then, we derive a limit expression for the random code capacity, and prove the capacity characterization in Theorem~\ref{theo:sKGPavcRand} using the Toeplitz matrix properties in the auxiliary lemma below. To derive the deterministic code capacity, we use  similar symmetrizability and optimization arguments as in our proofs for the Gaussian product channel.

\begin{lemma} \cite[Section 2.3]{Ebert:66p} (see also {\cite{Gray:06n,Holsinger:64p} \cite[Section 8.5]{Gallager:68b}})
\label{lemm:Ebert}
Let $\Psi_Z(\omega)$ be the power spectral density of a zero mean stationary process $\{Z_i \}_{i=1}^{\infty}$.
Assume that $\Psi_Z: [-\pi,\pi]\rightarrow [0,\nu] $ is bounded and integrable, for some $\nu>0$, and denote the auto-correlation function by
\begin{align}
r_Z(\ell)=  \frac{1}{2\pi} \int_{-\pi}^{\pi} \Psi_Z(\omega) e^{j\omega} \, d\omega \,,\; \ell=0,1,2,\ldots
\end{align}
with $j=\sqrt{-1}$.
For a sequence $\Zvec$ of length $n$, let $\sigma_1^2,\ldots,\sigma_n^2$ denote the eigenvalues of the $n\times n$ covariance matrix $K_Z$, where $K_Z(i,j)=r_Z(|i-j|)$ for $i,j\in [1:n]$. Then, for every real, monotone non-increasing, and bounded function 
$G: [0,\nu]\rightarrow [0,\eta] $, 
\begin{align}
\lim_{n\rightarrow\infty} \frac{1}{n} \sum_{i=1}^{\infty} G(\sigma_i^2)
=\frac{1}{2\pi} \int_{-\pi}^\pi   G(\Psi_Z(\omega)) \, d\omega
\end{align} 
if the integral exists.
\end{lemma}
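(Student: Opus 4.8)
The plan is to prove this as a special case of Szeg\H{o}'s asymptotic eigenvalue distribution theorem for Hermitian Toeplitz matrices with a bounded generating symbol. First I would record the elementary bound $\sigma_i^2\in[0,\nu]$ for every $i$: if $u$ is a unit eigenvector associated with $\sigma_i^2$, then, using $K_Z(i,k)=r_Z(i-k)=\frac{1}{2\pi}\int_{-\pi}^{\pi}\Psi_Z(\omega)e^{j\omega(i-k)}\,d\omega$ together with Parseval,
\begin{align}
\sigma_i^2 = u^* K_Z u = \frac{1}{2\pi}\int_{-\pi}^{\pi}\Psi_Z(\omega)\,\Big|\sum_{k=1}^{n} u_k e^{j\omega k}\Big|^2 d\omega \in \big[0,\nu\|u\|^2\big] = [0,\nu]\,.
\end{align}
Hence it suffices to establish the limit for test functions supported on the fixed compact interval $[0,\nu]$, and I would do this in three stages: polynomials, then continuous functions, then bounded monotone functions.

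For $G(x)=x^m$, write $\frac{1}{n}\sum_{i=1}^{n}\sigma_i^{2m}=\frac{1}{n}\trace(K_Z^m)$ and expand
\begin{align}
\trace(K_Z^m)=\sum_{i_1,\dots,i_m=1}^{n} r_Z(i_1-i_2)\,r_Z(i_2-i_3)\cdots r_Z(i_{m-1}-i_m)\,r_Z(i_m-i_1)\,.
\end{align}
Substituting the Fourier representation of each factor $r_Z(\cdot)$, interchanging the finite sums with the integrals (legitimate since $\Psi_Z$ is bounded), and evaluating the resulting geometric sums over $i_1,\dots,i_m\in[1:n]$, the leading contribution is $n\cdot\frac{1}{2\pi}\int_{-\pi}^{\pi}\Psi_Z(\omega)^m\,d\omega$, while the remaining ``edge'' terms (those for which the index cycle does not telescope) are $o(n)$ uniformly; dividing by $n$ gives $\frac{1}{n}\trace(K_Z^m)\to\frac{1}{2\pi}\int_{-\pi}^{\pi}\Psi_Z(\omega)^m\,d\omega$. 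An equivalent route replaces $K_Z$ by the circulant matrix whose eigenvalues are $\Psi_Z(2\pi k/n)$, shows the normalized traces of any fixed power of the two matrices share the same limit, and identifies the circulant trace with a Riemann sum for $\frac{1}{2\pi}\int\Psi_Z^m$. Linearity then yields the limit for every polynomial $G$.

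To pass to a general continuous $G$ on $[0,\nu]$, invoke Weierstrass: given $\eps>0$ pick a polynomial $P$ with $\sup_{[0,\nu]}|G-P|<\eps$; since both $\{\sigma_i^2\}$ and the range of $\Psi_Z$ lie in $[0,\nu]$, one has $\big|\frac{1}{n}\sum_i G(\sigma_i^2)-\frac{1}{n}\sum_i P(\sigma_i^2)\big|\le\eps$ and $\big|\frac{1}{2\pi}\int_{-\pi}^{\pi}(G-P)(\Psi_Z(\omega))\,d\omega\big|\le\eps$, and letting $n\to\infty$ then $\eps\to0$ closes this case. Finally, a bounded monotone non-increasing $G$ on $[0,\nu]$ has at most countably many discontinuities; for $\eps>0$ I would choose continuous functions $\underline G\le G\le\overline G$ on $[0,\nu]$ that agree with $G$ off small neighborhoods of its discontinuity set and satisfy $\frac{1}{2\pi}\int_{-\pi}^{\pi}\big(\overline G-\underline G\big)(\Psi_Z(\omega))\,d\omega<\eps$. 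Monotone sandwiching gives $\frac{1}{n}\sum_i\underline G(\sigma_i^2)\le\frac{1}{n}\sum_i G(\sigma_i^2)\le\frac{1}{n}\sum_i\overline G(\sigma_i^2)$, and the continuous case applied to $\underline G$ and $\overline G$ pins the limit of the middle term to $\frac{1}{2\pi}\int_{-\pi}^{\pi}G(\Psi_Z(\omega))\,d\omega$.

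The main obstacle is the last sandwiching step: producing $\underline G,\overline G$ with $\frac{1}{2\pi}\int(\overline G-\underline G)(\Psi_Z)\,d\omega$ small requires that the push-forward of $\frac{d\omega}{2\pi}$ on $[-\pi,\pi]$ under $\Psi_Z$ assign small mass to small neighborhoods of the (countable) discontinuity set of $G$ — i.e., essentially that this spectral measure carry no atom where $G$ jumps. This is exactly the role of the qualification ``if the integral exists,'' and it is where integrability of $\Psi_Z$ and the specific form of $G$ enter; in the applications of Section~\ref{sec:GaussCol} the relevant $G$ (e.g. $[\beta-x]_+$ or $\tfrac12\log(1+a/(b+x))$) is continuous on $(0,\nu]$, so only the endpoint $0$ needs attention. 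A secondary, purely bookkeeping difficulty is verifying that the edge terms in the trace expansion (equivalently, the $K_Z$-versus-circulant discrepancy) are genuinely $o(n)$ for each fixed power $m$.
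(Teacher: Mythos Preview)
The paper does not supply its own proof of this lemma; it is quoted as a known result from \cite[Section 2.3]{Ebert:66p} (with pointers to \cite{Gray:06n,Holsinger:64p} and \cite[Section 8.5]{Gallager:68b}) and then applied as a tool in Appendix~\ref{app:KGPavcRand}. Your proposal is the standard Szeg\H{o}-theorem argument for Hermitian Toeplitz matrices with bounded symbol --- moments via $\frac{1}{n}\trace(K_Z^m)$, extension to $C[0,\nu]$ by Weierstrass, then a monotone sandwich --- which is essentially the proof one finds in the cited references, in particular Gray's survey \cite{Gray:06n}. Your identification of the delicate point (that the sandwiching step for a discontinuous monotone $G$ hinges on the spectral distribution placing no atoms at the jumps of $G$, which is precisely why the ``if the integral exists'' proviso is needed) is accurate, and your remark that the specific $G$ used in Section~\ref{sec:GaussCol} is continuous on $(0,\nu]$ matches how the paper sidesteps the issue there.
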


\subsection{Transformation to AVC with Fixed Parameters}
Let $K_Z$ denote the $n\times n$ covariance matrix of the noise sequence $\Zvec$. 
Consider the eigen decomposition of the covariance matrix $K_Z$, and denote the eigenvector and eigenvaule matrices  by $Q$ and $\Sigma$, respectively, \ie
\begin{align}
& K_Z=Q\Sigma Q^T \,,
\;\text{where  
 $QQ^T=I$ and 
 $\Sigma=\diag\{\sigma_1^2,\ldots,\sigma_n^2\}$} \,.
\end{align}
We claim that the capacity of the AVC with colored Gaussian noise is the same as the capacity of the following AVC,
\begin{align}
&\Yvec'=\Xvec'+\Zvec'+\Svec' \,,\; 
\label{eq:XYdiag}
\end{align}
where $\Xvec'=Q^T \Xvec $, $\Zvec'=Q^T \Zvec$, and $\Svec'=Q^T \Svec$. Since $Q$ is a unitary matrix, \ie
$Q^{-1}=Q^T$,
the input and state constraints remain the same, as 
$\norm{\Xvec'}^2=(\Xvec')^T \Xvec'=\Xvec^T Q Q^T \Xvec =\Xvec^T \Xvec=\norm{\Xvec}^2\leq n\plimit$, and similarly, $\norm{\Svec'}^2=\norm{\Svec}^2\leq n\Lambda$. Furthermore, the noise covariance matrix is now
\begin{align}
K_{Z'}
= Q^T K_Z Q=\Sigma=\diag\{ \sigma_1^2,\ldots,\sigma_n^2 \} \,.
\end{align}
This transformation can be thought of as a linear system, which is \emph{not} time invariant. Hence, the  noise of the transformed channel is a Gaussian process, but it is non-stationary.
Thereby, the input-output relation above specifies a time varying channel, $\{ F_{Y_1,\ldots,Y_n|X_1,\ldots,X_n,S_1,\ldots,S_n} \}_{n=1}^\infty$.
From operational perspective,  if there exists a $(2^{nR},n,\eps)$ code $\code=(\fvec,g)$ for the original AVC with colored Gaussian noise, then the code $\code'=(\fvec',g')$, given by 
$\fvec'(m)=Q^T \fvec(m)$ and $g'(\yvec')=
g(Q\yvec')$, is a
$(2^{nR},n,\eps)$ code for the transformed AVC in (\ref{eq:XYdiag}). Similarly,
if there exists a $(2^{nR},n,\eps)$ code $\code'=(\fvec',g')$ for the transformed AVC, then the code $\code=(\fvec,g)$, given by 
$\fvec(m)=Q \fvec'(m)$ and $g(\yvec)=g'(Q^T \yvec)$, is a
$(2^{nR},n,\eps)$ code for the original AVC. Thus, the original AVC and the transformed AVC have the same operational capacity.

Therefore, we can assume without loss of generality that the noise sequence has independent components $Z_i\sim\mathcal{N}(0,\sigma_i^2)$, $i\in [1:n]$. 
Assume, at first, that $\sigma_i^2\in \Tset$ for $i\in [1:n]$, with some set $\Tset$ of finite size, which does not grow with $n$,
and that $\sigma_i^2>\delta$, where $\delta>0$ is arbitrarily small.
Hence, observe that the channel in (\ref{eq:XYdiag}) is equivalent to a channel $W_{Y''|X'',S'',T''}$ with fixed parameters, specified by  
\begin{align}
Y''=X''+S''+Z''_t \,,\; \text{where }\; Z''_t \sim \mathcal{N}(0,t^2) 
\label{eq:sChanRP}
\end{align}
with the parameter sequence $\sigma_1, \sigma_2,\ldots$.
It is left to determine the random code capacity and deterministic code capacity of the Gaussian AVC with fixed parameters in (\ref{eq:sChanRP}). Although we previously assumed in Sections \ref{sec:Pchannels} and \ref{sec:Pmain} that the input, state, and output alphabets are finite, our results can be extended to the continuous case as well, using standard discretization techniques \cite{BBT:59p,Ahlswede:78p} \cite[Section 3.4.1]{ElGamalKim:11b}.

Now, consider the double water filling allocation, 
\begin{align}
b_i^*=&  \left[ \beta'-\sigma_i^2 \right]_{+}  \,,\; 
\label{eq:KGPbidef}
\\
a_i^*=&\left[ \alpha'-(b_i^*+\sigma_i^2) \right]_{+} \,,\; 
\label{eq:KGPaidef}
\end{align}
for $i\in [1:n]$,
 where $\beta'> 0$ and $\alpha'> 0$ are chosen to satisfy
$
\frac{1}{n}\sum_{i=1}^n \left[ \beta'-\sigma_i^2 \right]_{+}=\Lambda 
$ 
and
$
\frac{1}{n}\sum_{i=1}^n \left[ \alpha'-(b_i^*+\sigma_i^2) \right]_{+}=\plimit 
$, respectively. 
Define
\begin{align}
\KrICav\triangleq \frac{1}{2n}\sum_{i=1}^n\log\left( 1+\frac{a_i^*}{b_i^*+\sigma_i^2} \right) \,.
\label{eq:KrICavdef}
\end{align}

\subsection{Random Code Capacity}
Now that we have shown that the problem reduces to that of an AVC with fixed parameters, we have by
Corollary~\ref{coro:PrCavE} that the random code capacity is given by 
\begin{align}
\sKrCav= \liminf_{n\rightarrow\infty}  
\max_{ \substack{ P_1,\ldots,P_n \,:\;\\  \frac{1}{n} \sum_{i=1}^n P_i\leq \plimit } } \;
\min_{ \substack{ N_1,\ldots,N_n \,:\; \\ \frac{1}{n} \sum_{i=1}^n N_i \leq \Lambda } }
\frac{1}{n} \sum_{i=1}^n  
\inC_{\sigma_i}^{\rstarC}\hspace{-0.05cm}(P_i,N_i) \,,
\end{align}
where  $\inC_{\sigma}^{\rstarC}\hspace{-0.05cm}(P,N) $ is the random code capacity of the traditional AVC under input constraint 
$P$ and state constraint $N$.
Hughes and Narayan \cite{HughesNarayan:87p} showed that the random code capacity of such a channel, 
where the noise sequence  is i.i.d. $\sim\mathcal{N}(0,\sigma^2)$, is given by
\begin{align}
\inC_{\sigma}^{\rstarC}\hspace{-0.05cm}(P,N) =\frac{1}{2} \log\left( 1+\frac{P}{N+\sigma^2}   \right) \,.
\end{align}
Hence, for the AVC with colored Gaussian noise,
\begin{align}
\sKrCav= \liminf_{n\rightarrow\infty} 
\min_{ \substack{ N_1,\ldots,N_n \,:\; \\ \frac{1}{n} \sum_{i=1}^n N_i \leq \Lambda } } 
\max_{ \substack{ P_1,\ldots,P_n \,:\;\\  \frac{1}{n} \sum_{i=1}^n P_i\leq \plimit } }\;
\frac{1}{2n} \sum_{i=1}^n  \log\left( 1+\frac{P_i}{N_i+\sigma_i^2}   \right)
 \,.
\label{eq:sRandDir0}
\end{align}
Next, observe that this is the same min-max optimization as for the AVGPC in (\ref{eq:GPoptimi}), due to \cite{HughesNarayan:88p}, 
 with $d\leftarrow n$, $\plimit\leftarrow (n\plimit)$, $\Lambda\leftarrow (n\Lambda)$.
 Therefore, by Theorem~\ref{theo:GPavcRand} \cite{HughesNarayan:88p} and (\ref{eq:sRandDir0}),  
\begin{align}
\sKrCav=& 
\liminf_{n\rightarrow\infty}\KrICav
 \,.
\label{eq:sRandDir1}
\end{align}

Given a bounded power spectral density
$\Psi_Z: [-\pi,\pi]\rightarrow [0,\nu]$, define a function $G: [0,\nu]\rightarrow [0,\eta]$ by
\begin{align}
G(x)=\frac{1}{2}\log\left(  
1+\frac{\left[ \alpha'-\left[ \beta'+x \right]_{+} \right]_{+}}{\left[ \beta'-x \right]_{+}+x} 
\right)=\begin{cases}
\frac{1}{2}\log\left( \frac{\alpha'}{\beta'} \right)	 &\text{if $x<\beta'$}\\
\frac{1}{2}\log\left( \frac{\alpha'}{x} \right) 	     &\text{if $\beta'\leq x<\alpha'$}\\
0																								       &\text{if $x\geq\alpha'$}
\end{cases}
\label{eq:GcolAVC}
\end{align}
and observe that 
\begin{align}
\KrICav= \frac{1}{n}\sum_{i=1}^n G(\sigma_i^2) \,.
\end{align}
As $G(x)$ is non-increasing and bounded by $\eta=\frac{1}{2}\log[1+\plimit/\delta]$, we have by Lemma~\ref{lemm:Ebert} that
\begin{align}
\liminf_{n\rightarrow\infty}\KrICav =\frac{1}{2\pi} \int_{-\pi}^\pi   G(\Psi_Z(\omega)) \, d\omega \,.
\label{eq:sRandDir4}
\end{align}
Observing that the function defined in (\ref{eq:GcolAVC}) is also continuous, while $\Psi_Z(\omega)$ is bounded and integrable, 
it follows that the integral exists \cite[Theorem 6.11]{Rudin:76b}. 
Plugging (\ref{eq:GcolAVC}) into the RHS of (\ref{eq:sRandDir4}), we obtain
\begin{align}
\liminf_{n\rightarrow\infty}\KrICav= \frac{1}{2\pi}\int_{-\pi}^{\pi} \frac{1}{2}\log\left( 1+\frac{\left[ \alpha-\left[ \beta+\Psi_Z(\omega) \right]_{+} \right]_{+}}{\left[ \beta-\Psi_Z(\omega) \right]_{+}+\Psi_Z(\omega)} \right) \,d\omega
\label{eq:sRandDir5}
\end{align}
where $\beta$ and $\alpha$ satisfy (\ref{eq:sKGPbetadef}) and (\ref{eq:sKGPalphadef}), respectively. 
Since the covariance matrix of the stationary noise process is Toeplitz (see \eg \cite{Gray:06n}), the density of eigenvalues on the real line tends to the power spectral density \cite{GrenanderSzego:01b}.
Given that the power spectral density is bounded and integrable, we have that the sequence of eigenvalues 
$\sigma_1^2,\sigma_2^2,\ldots$ is summable \cite[Theorem 4.2]{Gray:06n}, and thus, bounded as well.
Hence, we can remove the assumption that the set of noise variances has finite cardinality, by quantization of the variances.
The random code characterization now follows from (\ref{eq:sRandDir1}) and (\ref{eq:sRandDir5}).

\subsection{Deterministic Code Capacity}
Moving to the deterministic code capacity, observe that for a constant-parameter Gaussian AVC,
where the noise sequence  is i.i.d. $\sim\mathcal{N}(0,\sigma^2)$, we have that $\tLambda( F_X ,\sigma )=\E X^2$, by
Lemma~\ref{lemm:GPscostP}, taking $d=1$. Therefore, for the Gaussian AVC with a parameter sequence $\sigma_1^2,\ldots,\sigma_n^2$,
\begin{align}
L_n^* 
= \max_{F_{X|T} \,:\; \frac{1}{n} \sum_{i=1}^n \E[ X^2|T=\sigma_i] \leq\plimit} \frac{1}{n} \sum_{i=1}^n \tLambda(F_{X|T=\sigma_i},\sigma_i)=\max_{F_{X|T} \,:\; \frac{1}{n} \sum_{i=1}^n \E[ X^2|T=\sigma_i] \leq\plimit} \frac{1}{n} \sum_{i=1}^n 
\E[ X_i^2 | T=\sigma_i] =\plimit \,,
\label{eq:GlnS}
\end{align}
where the first equality holds by the definition of $L_n^*$ in (\ref{eq:1Lstar}) and by  (\ref{eq:LambdaOigEq}).
It can further be seen from the proof of Lemma~\ref{lemm:GPscostP} in Appendix~\ref{app:GPscostP} that the Gaussian channel
$Y=X+S+Z_{\sigma}$
 is symmetrized by a distribution $\varphi(s|x)$ that gives probability $1$ to $S=x$, and that the minimum in the formula of 
$\tLambda(F_X,\sigma)$ in (\ref{eq:LambdaOig1}) is attained with this distribution.
 
Therefore, by Corollary~\ref{coro:PrCavEDet}, the capacity of the AVC with colored Gaussian noise is given by the limit inferior of
\begin{align}
\inR_n(\avc)=& \begin{cases}
\min\limits_{ \substack{ N_1,\ldots,N_n \,:\; \\ \frac{1}{n} \sum_{i=1}^n N_i \leq \Lambda } } \;
\max\limits_{ \substack{ P_1,\ldots,P_n, \tlambda_1,\ldots\tlambda_n \,:\;\\  
\frac{1}{n} \sum_{i=1}^n P_i\leq \plimit \,,
\frac{1}{n} \sum_{i=1}^n \tlambda_i\geq \Lambda } }
\frac{1}{n} \sum\limits_{i=1}^n
\inC_{\sigma_i}(P_i,\tlambda_i,N_i)  &\text{if $L_n^*> \Lambda$}\,,\\
  0 																 &\text{if $L_n^*\leq \Lambda$}
\end{cases}	
\label{eq:CtolDet11}
\intertext{where }
\inC_\sigma(P,\Delta,N)=&
\min\limits_{  F_{S''} \,:\; \E S''^2 \leq N } \;
\max\limits_{ \substack{ F_{X''} \,:\; \E\, X''^2 \leq P \,,\; \\ \tLambda_\sigma( F_{X''}, \sigma )\geq \Delta } } \;   I_q(X'';Y''|T''=\sigma) \,.
\label{eq:CtolDet12}
\end{align}

Consider the direct part. Suppose that $\plimit>\Lambda$, hence $L_n^*>\Lambda$ (see (\ref{eq:GlnS})), and set $P_i=\tlambda_i=a_i^*$ for $i\in [1:n]$. This choice of parameters satisfies the optimization constraints in (\ref{eq:CtolDet11}), as
$\sum_{i=1}^n P_i=\plimit$, and also $\sum_{i=1}^n \tlambda_i=\plimit>\Lambda$. Therefore, 
\begin{align}
\inR_n(\avc)\geq & 
\min\limits_{ \substack{ N_1,\ldots,N_n \,:\; \\ \frac{1}{n} \sum_{i=1}^n N_i \leq \Lambda } } \; 
\frac{1}{n} \sum\limits_{i=1}^n
\inC_{\sigma_i}(a_i^*,a_i^*,\lambda_i)  	
=	\min\limits_{ \substack{ N_1,\ldots,N_n , F_{S''^n} \,:\; \\  \E S_i''^2\leq N_i \,,\;
 \frac{1}{n} \sum_{i=1}^n N_i \leq \Lambda }} 
\frac{1}{n} \sum\limits_{i=1}^n I_q(X_i'';Y_i''|T_i''=\sigma_i) \,,
	\nonumber\\
	\geq & \min\limits_{  N_1,\ldots,N_n \,:\;   \sum_{i=1}^n N_i \leq n\Lambda  }
\frac{1}{n} \sum\limits_{i=1}^n \frac{1}{2} \log \left(1+\frac{a_i^*}{N_i+\sigma_i^2} \right)
\label{eq:CtolDet13}
\end{align}
where the the last inequality holds since Gaussian noise is known to be the worst additive noise under variance constraint \cite[Lemma II.2]{DiggaviCover:01p}.
Next, observe that this is the same minimization 
as in (\ref{eq:sigmad4dirmaxmin}), in the proof of the direct part 
for the AVGPC, with
$d\leftarrow n$, $\plimit\leftarrow (n\plimit)$, $\Lambda\leftarrow (n\Lambda)$
(see proof of Theorem~\ref{theo:GPavcDet}  in Appendix~\ref{app:GPavcDet}). Therefore, the minimum is attained with $N_i=b_i^*$, and
 the RHS of (\ref{eq:sRandDir1}) is achievable with deterministic codes as well, provided that $\plimit>\Lambda$.

The converse part is straightforward. Since the deterministic code capacity is always bounded by the random code capacity, we have that
$\sKCavc\leq\sKrCav=\sKrICav$. If $\plimit\leq \Lambda$, then $L_n^*\leq\Lambda$ by (\ref{eq:GlnS}), hence $\KCavc=\liminf \inR_n(\avc)=0$ by the second part of Corollary~\ref{coro:PrCavEDet}. 
\qed

\end{appendices}


\ifdefined\bibstar\else\newcommand{\bibstar}[1]{}\fi

\end{document}